\documentclass[a4paper]{article}
\usepackage{amsmath}
\usepackage{amssymb}
\usepackage{dsfont}
\usepackage{mathrsfs}
\usepackage{mathabx}
\usepackage{stmaryrd}
\usepackage[all]{xy}
\usepackage{tikz}
\usepackage{graphicx}
\usetikzlibrary{graphs}


\title{A Set-Theoretic Framework for Parallel Graph Rewriting}

\author{Thierry  Boy de la Tour  \and  Rachid Echahed}
\date{Univ. Grenoble Alpes, CNRS, Grenoble INP, LIG \\ 38000 Grenoble,
  France \\
{\small \texttt{thierry.boy-de-la-tour@imag.fr} and \texttt{rachid.echahed@imag.fr} }}


\usepackage[amsmath,thmmarks]{ntheorem}
\theoremstyle{plain}
\theorembodyfont{\itshape}
\theoremheaderfont{\normalfont\bfseries}
\theoremseparator{}
\newtheorem{theorem}{Theorem} [section]
\newtheorem{lemma}[theorem]{Lemma}
\newtheorem{corollary}[theorem]{Corollary}

\theorembodyfont{\upshape}
\theoremheaderfont{\normalfont\bfseries}
\theoremsymbol{\ensuremath{_\Box}}
\newtheorem{definition}[theorem]{Definition}
\newtheorem{example}[theorem]{Example}

\theoremstyle{nonumberplain}
\theoremheaderfont{\itshape}
\theorembodyfont{\normalfont}
\theoremsymbol{\ensuremath{_{\blacksquare}}}
\newtheorem{proof}{Proof}


\newcommand{\set}[1]{\{#1\}}
\newcommand{\setof}[2]{\{#1\mid #2\}}

\newcommand{\card}[1]{|#1|}
\newcommand{\tuple}[1]{\langle #1 \rangle}

\newcommand{\ensvide}{\varnothing}
\newcommand{\Sym}[1]{\mathrm{Sym}(#1)}
\newcommand{\eqclass}[2]{{#1}[#2]}
\newcommand{\quotient}[2]{{#1}/{#2}}
\newcommand{\Part}[1]{\mathscr{P}(#1)}
\newcommand{\defeq}{\stackrel{\mathrm{\scriptscriptstyle def}}{=}}

\newcommand{\restrfc}[3]{#1\vert_{#2}^{#3}}
\newcommand{\restrf}[2]{\restrfc{#1}{#2}{}}
\newcommand{\invf}[1]{#1^{-1}}
\newcommand{\Id}[1]{\mathrm{Id}_{#1}}
\newcommand{\meetf}{\curlywedge}
\newcommand{\joinf}{\curlyvee}
\newcommand{\bigjoinf}{\bigcurlyvee}

\newcommand{\Alg}{\mathcal{A}}
\newcommand{\algebrf}[1]{\mathscr{A}_{#1}}
\newcommand{\labelf}[1]{\mathring{#1}}
\newcommand{\carrier}[1]{\lfloor{#1}\rfloor}

\newcommand{\nodes}[1]{\dot{#1}}
\newcommand{\arrows}[1]{\vec{#1}}
\newcommand{\leftf}[1]{\acute{#1}}
\newcommand{\rightf}[1]{\grave{#1}}
\newcommand{\subg}{\mathrel{<}}
\newcommand{\subalg}{\mathrel{\lhd}}
\newcommand{\restrg}[2]{{#1}|{#2}}
\newcommand{\sgminus}{\setminus}
\newcommand{\delgraph}[4]{#1\setminus[#2,#3,#4]}
\newcommand{\sgen}[2]{\lceil{#2}\rceil_{#1}}
\newcommand{\gcap}{\sqcap}
\newcommand{\gcup}{\sqcup}
\newcommand{\Gcup}{\bigsqcup}
\newcommand{\isog}{\mathrel{\equiv}}
\newcommand{\G}{\mathcal{G}}

\newcommand{\am}{\alpha}
\newcommand{\bm}{\beta}
\newcommand{\mm}{\mu}
\newcommand{\nm}{\nu}
\newcommand{\Idm}[1]{{1}_{#1}}
\newcommand{\Aut}[2]{\mathrm{Aut}_{#1}(#2)}
\newcommand{\liftm}[1]{\mbox{$#1\!\!\uparrow$}}
\newcommand{\eqaut}{\approx}

\newcommand{\Termsig}[1]{\mathscr{T}(\Sigma,#1)}
\newcommand{\Vars}{\mathscr{V}}
\newcommand{\Var}[1]{\mathrm{Var}(#1)}

\newcommand{\Matches}[2]{\mathscr{M}(#1,#2)}
\newcommand{\R}{\mathcal{R}}

\newcommand{\M}{\mathcal{M}}
\newcommand{\rulem}[1]{\mathrm{r}_{#1}}
\newcommand{\Lg}[1]{\mathrm{L}_{#1}}
\newcommand{\Kg}[1]{\mathrm{K}_{#1}}
\newcommand{\Rg}[1]{\mathrm{R}_{#1}}
\newcommand{\nodesLg}[1]{\nodes{\mathrm{L}}_{#1}}
\newcommand{\nodesKg}[1]{\nodes{\mathrm{K}}_{#1}}
\newcommand{\nodesRg}[1]{\nodes{\mathrm{R}}_{#1}}
\newcommand{\arrowsLg}[1]{\arrows{\mathrm{L}}_{#1}}
\newcommand{\arrowsKg}[1]{\arrows{\mathrm{K}}_{#1}}
\newcommand{\arrowsRg}[1]{\arrows{\mathrm{R}}_{#1}}

\newcommand{\leftfRg}[1]{\leftf{\mathrm{R}}_{#1}}

\newcommand{\labelfLg}[1]{\labelf{\mathrm{L}}_{#1}}
\newcommand{\labelfKg}[1]{\labelf{\mathrm{K}}_{#1}}
\newcommand{\labelfRg}[1]{\labelf{\mathrm{R}}_{#1}}
\newcommand{\RImg}[2]{\liftm{#1}_{\!#2}}

\newcommand{\PRgraphl}[2]{{#1}^{<}_{#2}}
\newcommand{\PRgraphr}[2]{{#1}^{>}_{#2}}
\newcommand{\fullPRl}[1]{\mathrel{\rightrightarrows^{<}_{#1}}}
\newcommand{\fullPRr}[1]{\mathrel{\rightrightarrows^{>}_{#1}}}
\newcommand{\PRautl}[1]{\mathrel{\rightrightharpoons^{<}_{#1}}}
\newcommand{\PRautr}[1]{\mathrel{\rightrightharpoons^{>}_{#1}}}

\newcommand{\inN}[1]{\iota_{#1}}
\newcommand{\nodesinN}[1]{\nodes{\iota}_{#1}}
\newcommand{\arrowsinN}[1]{\arrows{\iota}_{#1}}
\newcommand{\labelfinN}[1]{\labelf{\iota}_{#1}}
\newcommand{\sigmA}[1]{\sigma_{#1}}
\newcommand{\nodesigmA}[1]{\nodes{\sigma}_{#1}}
\newcommand{\arrowsigmA}[1]{\arrows{\sigma}_{#1}}
\newcommand{\labelfsigmA}[1]{\labelf{\sigma}_{#1}}
\newcommand{\taU}[1]{\tau_{#1}}
\newcommand{\labelftaU}[1]{\labelf{\tau}_{#1}}
\newcommand{\rhO}[1]{\rho_{#1}}
\newcommand{\nodesrhO}[1]{\nodes{\rho}_{#1}}
\newcommand{\arrowsrhO}[1]{\arrows{\rho}_{#1}}
\newcommand{\labelfrhO}[1]{\labelf{\rho}_{#1}}
\begin{document}
\maketitle

\begin{abstract}
  We tackle the problem of attributed graph transformations and
  propose a new algorithmic approach for defining parallel graph
  transformations allowing overlaps. We start by introducing some
  abstract operations over graph structures. Then, we define the
  notion of rewrite rules as three inclusions of the form
  $L \supseteq K \supseteq M \subseteq R$. We provide six conditions
  that parallel graph rewrite relations should ideally satisfy, which
  lead us to define two distinct full parallel graph rewrite
  relations.  A central notion of regularity of matchings is proved to
  be equivalent to these six conditions, and to the equality of
  these two relations.  Furthermore, we take advantage of the
  symmetries that may occur in $L$, $K$, $M$ and $R$ and define
  another pair of rewrite relations that factor out possibly many
  equivalent matchings up to their common symmetries. These
  definitions and the corresponding proofs combine operations on
  graphs with group-theoretic notions, thus illustrating the relevance
  of our framework.
\end{abstract}

\section{Introduction}\label{sec-intro}

Graph structures are widely used in many areas in computer science and
well  beyond   (e.g.,  Biology,  Chemistry,  Physics).   Their  visual
appearance  as well  as their  expressiveness give  them an  important
place in the modeling of complex systems.

The study of graph transformations turns out to be more difficult than
other structures such as strings \cite{BO93} or terms \cite{BN98}. One
of  the  main  issues  encountered in  graph  transformations  is  the
replacement action. Roughly speaking, let $G[L]$  be a graph and $L$ a
subgraph of it. Replacing $L$ by another  graph $R$ is a bit tricky to
express in a  rigorous manner, because of the  possible links between
$L$  and its  context  $G[]$, on  the  one hand,  and  of the  desired
embedding of $R$ into $G[]$, on the other hand.

Several approaches to graph transformations  have been proposed in the
literature.  There are  two  main  streams of  research  known as  the
algebraic  approaches (see,  e.g;, \cite{handbook1,EhrigEPT06})  where
transformations  are  defined  using notions  borrowed  from  category
theory         and          the         algorithmic         approaches
(e.g. \cite{EngelfrietR97,Echahed08b}) where graph transformations are
defined by means of the involved algorithms.

In this paper, we propose a new algorithmic approach and define a
set-theoretic framework which we designed to cope easily with (true)
parallel attributed graph transformations and where classical
sequential graph transformations can be obtained as a particular
case. We propose to consider rules defined as three graph inclusions
of the form $L \supseteq K \supseteq M \subseteq R$ such that $M$ is
the common subgraph of $L$ and $R$. Roughly speaking, transforming a
graph $G$ into a graph $H$ via such a rule aims at finding a match
$\mu$ of the left-hand side $L$ in $G$, so that one can write $G$ as
$G[\mu(L)]$ ; then $H$ is obtained from $G[\mu(L)]$, first, by
deleting elements in $\mu(L \setminus K)$\footnote{Here, we use the
  operation $\setminus$ over graphs in an informal manner.}  and,
second, by adding new elements in $R \setminus M$. As $M \subseteq K$
and $M$ is the common subgraph of $L$ and $R$, $R \setminus M$ and
$R \setminus K$ are the same and thus one may have the impression, at
first sight, that subgraph $M$ is useless. This impression might be
reinforced when comparing with the classical Double-Pushout
\cite{EhrigPS73} approach where rules can be represented as two
inclusions $L' \supseteq K' \subseteq R'$ with the same intuitive
semantics. Actually, if we get rid of subgraph $M$, our rules could be
translated to a Double-Pushout rule of the form
$L \supseteq K \subseteq (R \cup K)$\footnote{In Double-Pushout
  approach, there is a morphism between $K$ and the right-hand side.}.

The main motivation behind the use of the fourth graph $M$ in a
rule lies in the operational semantics of parallel graph
transformations. To keep it simple, one rule applied to a graph can
either delete, keep or add items to the considered graph. When at
least two rules are applied at the same time on a graph, they can
either behave independently if applied at disjoint parts of a graph or
the two rules have to agree on the parts of the graph to be
kept/deleted if their respective left-hand sides overlap (have a
non-empty intersection). So, graphs such as $M$ play a key role,
within the considered rules, when defining the operational semantics
of parallel graph transformations. Intuitively, when a rule is
involved in a parallel transformation of a graph, the subgraph matched
by $L \setminus K$ should be removed, the subgraph $M$ should be kept,
and $R \setminus M$ should be added. However, the part of the graph
matched by $K \setminus M$ could either be deleted by other rules or
kept unchanged. Notice that there is no equivalent counterpart of
graphs $K \setminus M$ in the algebraic approaches \cite{EhrigEPT06}.
 
To define rigorously the considered rewrite systems and the underlying
rewrite relations, we propose, in this paper, a
set-theoretic framework consisting of operations over graphs which
allowed us to express (parallel) rewrite steps in an elegant way. To
our knowledge these operations are different from those existing in
the literature such as \cite{BVG87,DershowitzJ2018,Echahed08b}. In
addition to sequential rewriting, we propose two possible definitions
of parallel graph transformations which differ by the order in which
the operations over graphs are applied. Furthermore, we provide
necessary and sufficient conditions under which the two proposed
parallel rewrite relations coincide. Finally, we propose to use group theory to
characterize enhanced rewrite relations which consider only matchings up to
automorphisms of the four graphs defining each rule. 

Parallel graph transformations did not receive so far as much attention as
the sequential case. One of the most investigated issue is 
the condition so-called \emph{parallel independence} which ensures that two
rules with matches in a same graph G, are independent, i.e. they can
be applied in any order (or even in parallel) yielding the same
result, see e.g. \cite{CorradiniMREHL97,Lowe93-Parallel}. Our approach
to the study of parallel graph transformations is different here since
the parallel transformations we consider cannot always be simulated by
sequential ones, as in the case of cellular automata or more generally
in substitution systems \cite{Wolfram02}, and thus the involved rules
cannot be ordered to form a sequential derivation equivalent to
parallel transformations.  

In \cite[chapter~14]{Plasmeijer93},
parallel graph transformations have been studied in order to improve
the operational semantics of the functional programming language CLEAN
\cite{cleanURL}. In that contribution, the authors do not deal with
true parallelism but rather have an interleaving semantics. This
particularly entails that their parallel rewriting steps can be
simulated by sequential ones. This is also the case for other
frameworks where massive parallel graph transformations is defined so
that it can be simulated by sequential rewriting e.g.,
\cite{KreowskiKL18,KreowskiK11}.

In \cite{KniemeyerBHK07}, a framework based on the algebraic
Single-Pushout approach has been proposed and where parallel
transformations consider only matchings provided by a control flow
mapping. The users can solve the possible conflicts between the rules
by providing the right control flow.  More recently, a parallel graph
rewriting has been defined in \cite{EchahedM17} for a special kind of
graphs called port-graphs. Unfortunately, such graphs are not closed
under parallel graph rewriting, in the sense that a port-graph can be
rewritten in a structure which is not a port-graph. In addition,
conditions for avoiding conflicts in parallel transformations have
been defined over the considered rewrite rules, which limits the class
of considered systems, meanwhile we provide in this paper more
abstract and more general conditions over matchings that ensure the
correctness of parallel graph transformations.

The paper is organized as follows. Basic definitions and notations are
gathered in the next section. In Section~\ref{sec-join}, we stress on
the notion of \emph{joinable graphs} and provide a definition of
automorphisms of sub-graphs. Section~\ref{sec-rules} is dedicated to
rules and matches while Section~\ref{sec-full-rew} specifies the
intended requirements of parallel graph transformations and defines
two parallel rewrite relations. In Section~\ref{sec-regularity}, we
propose a property of \emph{regularity} of set of matches which turns
out to be a necessary and sufficient condition on matches to make
equal the two different proposed parallel rewrite
relations. Section~\ref{sec-Aut} is dedicated to a particular parallel
rewrite relation where parallel matches are considered up to
automorphisms. This rewrite relation is based on the notion of automorphism
groups of the considered rules.  An additional example is provided in
Section~\ref{sec-example}. Finally, concluding remarks are given in
Section~\ref{sec-conclusion}.

\section{Preliminaries}\label{sec-defs}

We use the standard set theoretic notion of a total function as a functional binary relation, but we also need to be strict on the notion of domain and codomain of functions. Since the domain can be read out of the binary relation, we define a \emph{function $f$ from $D$ to $C$} as a tuple $f=\tuple{R,C}$ where $R\subseteq D\times C$ and $\forall x\in D, \exists!y\in C$ s.t. $\tuple{x,y}\in R$. Both the \emph{domain} $D$ and the \emph{codomain} $C$ can be extracted from the function $f$: $D$ is the first projection $\setof{x}{\exists y\in C,\, \tuple{x,y}\in R}$ of $R$ (the second projection of $R$ is the \emph{image} of $f$). We will use the standard abuse of notation by denoting $f$ the canonical extension of $f$ from $\Part{D}$ to $\Part{C}$, i.e., for any $A\subseteq D$ we write $f(A)$ for $\setof{f(x)}{x\in A}$. For any $y\in C$, we write $\invf{f}(y)$ for $\setof{x\in D}{f(x)=y}$; this is only used when $f$ is not assumed to be bijective, otherwise it could be confused with the image of $y$ by the inverse function $\invf{f}=\tuple{\invf{R},D}$. For any $B\subseteq C$, we write $\invf{f}(B)$ for $\setof{x\in D}{f(x)\in B}$. The \emph{restriction} of $f=\tuple{R,C}$ to the sets $D',C'$ is $\restrfc{f}{D'}{C'} \defeq \tuple{R\cap (D'\times C'),C'}$, which is a function from $D\cap D'$ to $C'$ provided that $f(D\cap D')\subseteq C'$. We write $\restrf{f}{D'}$ for $\restrfc{f}{D'}{C}$, which is always a function.

We consider a fixed many sorted signature $\Sigma$ (see
e.g. \cite{EhrigM85}). A \emph{graph} $G$ is a tuple
$\tuple{V,A,\Alg,s,t,l}$ where $V,A$ are sets and $\Alg$ is a
$\Sigma$-algebra whose elements are respectively called
\emph{vertices}, \emph{arrows} and \emph{labels}, $s,t$ are functions
from $A$ to $V$ and $l$ is a function from $V\cup A$ to $\Part{\Alg}$,
i.e., we label vertices and arrows with sets of labels. We assume that
$V$, $A$ and the carrier set of $\Alg$, denoted $\carrier{\Alg}$, are
mutually disjoint\footnote{This condition is not strictly necessary,
  but it allows, e.g., the use of $V\uplus A$ rather than
  $V+A = V\times\set{1}\uplus A\times\set{2}$ and hence of simpler set
  theoretic notations.}. The \emph{carrier} of $G$ is the set
$\carrier{G}\defeq V\uplus A\uplus \carrier{\Alg}$. If $V=\ensvide$
then $G$ is said to be \emph{empty}. An arrow $f\in A$ is said to go
\emph{from} $s(f)$ \emph{to} $t(f)$, and these two vertices are
\emph{adjacent} to $f$ (or $f$ to the vertices). When we speak of a
graph $G$ without specifying its components, these will be referred to
as in
$G=\tuple{\nodes{G},\arrows{G},\algebrf{G},\leftf{G},\rightf{G}, \labelf{G}}$.

A graph $H$ is a \emph{subgraph} of $G$, written $H\subg G$, if $\nodes{H}\subseteq\nodes{G}$, $\arrows{H}\subseteq\arrows{G}$, $\leftf{H}=\restrfc{\leftf{G}}{\arrows{H}}{\nodes{H}}$ and $\rightf{H}=\restrfc{\rightf{G}}{\arrows{H}}{\nodes{H}}$. $H$ is a \emph{$\Sigma$-subgraph} of $G$, written $H\subalg G$, if $H\subg G$, $\algebrf{H} = \algebrf{G}$ and $\forall x\in \nodes{H}\uplus\arrows{H}, \labelf{H}(x)\subseteq \labelf{G}(x)$. The relations $\subg$ and $\subalg$ are respectively a preorder and an order on graphs.

There are several ways to obtain a $\Sigma$-subgraph from a graph $G$, one is by selecting arrows in $A\subseteq \arrows{G}$: let $\restrg{G}{A}\defeq\tuple{\nodes{G},A,\algebrf{G},\restrf{\leftf{G}}{A},\restrf{\rightf{G}}{A}, \restrf{\labelf{G}}{\nodes{G}\cup A}}$ be the \emph{restriction} of $G$ to $A$; we have $\restrg{G}{A}\subalg G$. Thus we can also remove the arrows of $A$ by defining $G \sgminus A \defeq \restrg{G}{(\arrows{G}\setminus A)}$. We can also obtain $\Sigma$-subgraphs by selecting vertices, but then we also need to select their adjacent arrows: if $V\subseteq\nodes{G}$, let $A=\invf{\leftf{G}}(V)\cap \invf{\rightf{G}}(V)$, then $\sgen{G}{V}\defeq\tuple{V, A, \algebrf{G}, \restrfc{\leftf{G}}{A}{V},\restrfc{\rightf{G}}{A}{V}, \restrf{\labelf{G}}{V\cup A}}$ is the \emph{$\Sigma$-subgraph of $G$ generated} by $V$: $\restrfc{\leftf{G}}{A}{V}$ and $\restrfc{\rightf{G}}{A}{V}$ are obviously functions, hence $\sgen{G}{V}\subalg G$. We can then remove the vertices of $V$ by $G\sgminus V \defeq\sgen{G}{\nodes{G}\setminus V}$. 

The task of removing labels from a graph $G$ may differ according to
vertices or arrows, i.e., it is specified by a function from
$\nodes{G}\cup\arrows{G}$ to $\Part{\algebrf{G}}$, which we call a
\emph{labelling function of} $G$. Given two such functions $l$ and
$l'$ we define $l\setminus l'$ (resp. $l\cap l'$, $l\cup l'$) as the
labelling function of $G$ that maps any $x$ to $l(x)\setminus l'(x)$
(resp. $l(x)\cap l'(x)$, $l(x)\cup l'(x)$). If $l$ is a labelling
function of a graph $H\subg G$, we extend it to the labelling function
$l'$ of $G$ identical to $l$ on $\nodes{H}\cup\arrows{H}$ and that
maps any other $x$ to $\emptyset$; by abuse of notation $l'$ will be
denoted by $l$. Then we easily define the graph $G\sgminus l \defeq
\tuple{\nodes{G},\arrows{G},\algebrf{G},
  \leftf{G},\rightf{G}, \labelf{G}\setminus l}$. We thus see that
using sets of labels allows to remove, and later to add labels just
as we do arrows, which will be very convenient for defining parallel rewrite relations.

Finally, we can remove both $V$, $A$ and $l$ from $G$ by defining $\delgraph{G}{V}{A}{l} \defeq ((G\sgminus l)\sgminus A)\sgminus V$. This order comes from the requirement that $V$ and $A$ should be included in the carrier set of the graph from which they are removed, which is used in the following lemma.

\begin{lemma}\label{lm-subremove}
  If $H\subalg G$ then $H\subalg \delgraph{G}{V}{A}{l}$ iff $\nodes{H}\cap V = \arrows{H}\cap A = \ensvide$ and $\labelf{H}(x)\cap l(x) = \ensvide$ for all $x\in \nodes{H}\cup\arrows{H}$.
\end{lemma}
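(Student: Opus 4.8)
The plan is to prove this biconditional by unwinding the definition of $\delgraph{G}{V}{A}{l}$ as the composition $((G\sgminus l)\sgminus A)\sgminus V$ and tracking, at each layer, exactly what is removed from the carrier and the labelling. First I would recall the precise meaning of $H\subalg G'$ for a candidate $G'=\delgraph{G}{V}{A}{l}$: it requires $H\subg G'$ (so $\nodes{H}\subseteq\nodes{G'}$ and $\arrows{H}\subseteq\arrows{G'}$, with the source/target maps matching by restriction), together with $\algebrf{H}=\algebrf{G'}$ and $\labelf{H}(x)\subseteq\labelf{G'}(x)$ for every $x\in\nodes{H}\cup\arrows{H}$. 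Since all three removal operations leave $\algebrf{G}$ untouched, the algebra condition is automatic from $H\subalg G$, so the content of the lemma lives entirely in the vertex set, the arrow set, and the labels.

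Next I would compute the three components of $\delgraph{G}{V}{A}{l}$ explicitly. By definition $\nodes{\delgraph{G}{V}{A}{l}}=\nodes{G}\setminus V$, and its arrow set is $(\arrows{G}\setminus A)$ further restricted to arrows whose endpoints survive in $\nodes{G}\setminus V$; but since we already assume $H\subalg G$, every arrow of $H$ has both endpoints in $\nodes{H}\subseteq\nodes{G}$, so the only way an arrow of $H$ can be lost is by lying in $A$ or by having an endpoint in $V$. The labelling of $\delgraph{G}{V}{A}{l}$ on a surviving element $x$ is $\labelf{G}(x)\setminus l(x)$. With these three descriptions in hand, the condition $H\subalg\delgraph{G}{V}{A}{l}$ unfolds into: $\nodes{H}\subseteq\nodes{G}\setminus V$ (equivalently $\nodes{H}\cap V=\ensvide$), $\arrows{H}\subseteq\arrows{G}\setminus A$ with surviving endpoints, and $\labelf{H}(x)\subseteq\labelf{G}(x)\setminus l(x)$ for all $x\in\nodes{H}\cup\arrows{H}$.

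The forward direction then reads off each conjunct directly. The reverse direction is where a small amount of care is needed: assuming $\nodes{H}\cap V=\ensvide$, $\arrows{H}\cap A=\ensvide$, and the label-disjointness, I must check that every arrow of $H$ genuinely survives the vertex-deletion $\sgminus V$, i.e. that $\invf{\leftf{G}}$ and $\invf{\rightf{G}}$ do not strip it away. This follows because $H\subalg G$ forces $\leftf{H}=\restrfc{\leftf{G}}{\arrows{H}}{\nodes{H}}$, so the endpoints of any $f\in\arrows{H}$ lie in $\nodes{H}$, which is disjoint from $V$; hence those endpoints remain in $\nodes{G}\setminus V$ and $f$ is retained by the generated-subgraph construction. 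Finally, for the label condition I would use the pointwise identity $(\labelf{G}\setminus l)(x)=\labelf{G}(x)\setminus l(x)$ together with the elementary set fact that $\labelf{H}(x)\subseteq\labelf{G}(x)$ and $\labelf{H}(x)\cap l(x)=\ensvide$ jointly give $\labelf{H}(x)\subseteq\labelf{G}(x)\setminus l(x)$.

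I expect the main obstacle to be purely bookkeeping rather than conceptual: namely making sure the arrow set of $\delgraph{G}{V}{A}{l}$ is handled correctly, since the outermost operation $\sgminus V$ invokes the \emph{generated} subgraph $\sgen{G'}{\nodes{G'}\setminus V}$ and therefore silently discards any arrow incident to a deleted vertex, not merely the arrows explicitly listed in $A$. The key insight that dissolves this subtlety is that $H\subalg G$ pins the endpoints of $H$'s arrows inside $\nodes{H}$, so the incidental arrow-deletions performed by $\sgminus V$ never touch $\arrows{H}$ as long as $\nodes{H}\cap V=\ensvide$; once that is observed, the three conditions decouple cleanly and both implications are immediate.
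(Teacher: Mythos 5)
Your proposal is correct and follows essentially the same route as the paper's proof: unwinding $\delgraph{G}{V}{A}{l}=((G\sgminus l)\sgminus A)\sgminus V$ operation by operation and reducing each layer to one of the three disjointness conditions. You are in fact slightly more explicit than the paper on the one delicate point, namely that the generated-subgraph construction in $\sgminus V$ cannot strip an arrow of $H$ once $\nodes{H}\cap V=\ensvide$, because $H\subalg G$ forces its endpoints into $\nodes{H}$.
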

\begin{proof}
  Since $H\subalg G$, then $H\subalg G\sgminus l$ iff $\forall x\in \nodes{H}\cup \arrows{H}$, $\labelf{H}(x)\subseteq \labelf{G}(x)\setminus l(x)$ iff $\forall x\in \nodes{H}\cup \arrows{H}$, $\labelf{H}(x)\cap l(x) = \ensvide$ (since $\labelf{H}(x)\subseteq \labelf{G}(x)$).

Since the set of arrows of $G\sgminus l$ is $\arrows{G}$, then $H\subalg (G\sgminus l)\sgminus A$ iff $H\subalg G\sgminus l$ and $\arrows{H}\subseteq \arrows{G}\setminus A$ iff $H\subalg G\sgminus l$ and $\arrows{H}\cap A = \ensvide$.

Since the set of vertices of $(G\sgminus l)\sgminus A$ is $\nodes{G}$ then $H\subalg \delgraph{G}{V}{A}{l}$ iff $H\subalg (G\sgminus l)\sgminus A$ and $\nodes{H}\subseteq \nodes{G}\setminus V$ iff $H\subalg (G\sgminus l)\sgminus A$ and $\nodes{H}\cap V=\ensvide$. Hence the result.
\end{proof}

For any graph $G$ and any bijection $\am$ from $\carrier{G}$ to some set $S$, we define the restrictions $\nodes{\am} \defeq \restrfc{\am}{\nodes{G}}{\am(\nodes{G})}$, $\arrows{\am} \defeq \restrfc{\am}{\arrows{G}}{\am(\arrows{G})}$, $\labelf{\am} \defeq \restrfc{\am}{\carrier{\algebrf{G}}}{\am(\carrier{\algebrf{G}})}$ and then the structure 
\[\am(G) \defeq\tuple{\nodes{\am}(\nodes{G}),\ \arrows{\am}(\arrows{G}),\ \labelf{\am}(\algebrf{G}),\ \nodes{\am}\circ\leftf{G}\circ\invf{\arrows{\am}},\  \nodes{\am}\circ\rightf{G}\circ \invf{\arrows{\am}},\ \labelf{\am}\circ\labelf{G}\circ \invf{(\restrf{\am}{\nodes{G}\cup \arrows{G}})}}\]
where $\labelf{\am}(\algebrf{G})$ is the isomorphic image of $\algebrf{G}$ by the bijection $\labelf{\am}$, see \cite{EhrigM85}. Obviously, $\carrier{\am(G)} = S$ and $\am(G)$ is a graph; we say that $\am$ is an \emph{isomorphism from $G$ to $\am(G)$}. It is common knowledge that any mathematical structure (including $\Sigma$-algebras) can be translated in this way through a bijective function, which then becomes an isomorphism.

Given any two graphs $H$ and $G$, a function $\am$ from $\carrier{H}$ to $\carrier{G}$ is a \emph{morphism from $H$ to $G$} if
\begin{itemize}
\item $\restrfc{\am}{\nodes{H}}{\nodes{G}}$ is a function, denoted $\nodes{\am}$,
\item $\restrfc{\am}{\arrows{H}}{\arrows{G}}$ is a function, denoted $\arrows{\am}$,
\item $\restrfc{\am}{\carrier{\algebrf{H}}}{\carrier{\algebrf{G}}}$ is a $\Sigma$-homomorphism from $\algebrf{H}$ to $\algebrf{G}$, denoted $\labelf{\am}$,
\item  $\leftf{G}\circ\arrows{\am} = \nodes{\am}\circ\leftf{H}$,
\item $\rightf{G}\circ\arrows{\am} = \nodes{\am}\circ\rightf{H}$,
\item $\forall x\in\nodes{H}\cup\arrows{H}$, $\labelf{\am}\circ\labelf{H}(x)\subseteq \labelf{G}\circ \am(x)$.
\end{itemize}
If $\am$ is bijective then the functions $\nodes{\am}$, $\arrows{\am}$ and $\labelf{\am}$ correspond with the previous definition, which justifies that we use the same notations. The \emph{image} of a $\Sigma$-subgraph $F\subalg H$ by $\am$ is \[\am(F) \defeq \tuple{\nodes{\am}(\nodes{F}), \arrows{\am}(\arrows{F}), \algebrf{G}, \restrfc{\leftf{G}}{\arrows{\am}(\arrows{F})}{\nodes{\am}(\nodes{F})},\restrfc{\leftf{G}}{\arrows{\am}(\arrows{F})}{\nodes{\am}(\nodes{F})}, l}\]\[\text{where }\forall y\in \nodes{\am}(\nodes{F})\cup \arrows{\am}(\arrows{F}),\ l(y)\ =\ \bigcup_{x\in \invf{\am}(y)}\labelf{\am}\circ \labelf{F}(x),\] 
obviously $\am(F)\subalg \am(H)\subalg G$. Again, if $\am$ is bijective it is easy to see that $\am(H)$ corresponds with the previous definition, which of course does not mean that $\am(H)=G$, since $G$ may contain more labels than $\am(H)$. A function $\am$ from $\carrier{H}$ to $\carrier{G}$ is an isomorphism from $H$ to $G$ if and only if $\am$ is a bijective morphism from $H$ to $G$ \emph{and} $\invf{\am}$ is a morphism from $G$ to $H$, hence if and only if $\am$ is a bijective morphism \emph{and} $\labelf{\am}\circ\labelf{H} = \labelf{G}\circ \restrf{\am}{\nodes{H}\cup \arrows{H}}$. 

If $\nodes{\am}$ and $\arrows{\am}$ are injective then $\am$ is called a \emph{matching of $H$ in $G$}. An isomorphism from $G$ to $G$ is called an \emph{automorphism} of $G$. The identity function $\Id{\carrier{G}}$, denoted $\Idm{G}$, is always an automorphism of $G$. If $\am$ is a morphism from $H$ to $G$ and $\bm$ a morphism from $G$ to a graph $F$ then $\bm\circ\am$ is a morphism from $H$ to $F$. If $\am$ and $\bm$ are both matchings, isomorphisms or automorphisms then so is $\bm\circ\am$. The composition operator $\circ$ is associative, we have $\Idm{G}\circ\am = \am\circ\Idm{H} = \am$ and, if $\am$ is an isomorphism then $\am\circ\invf{\am}=\Idm{G}$ and $\invf{\am}\circ\am = \Idm{H}$. Hence the set of automorphisms of $G$ is a group denoted $\Aut{}{G}$, and the existence of isomorphisms between graphs is an equivalence relation denoted $\isog$.

Note that $\Aut{}{G}$ contains permutations of the set $\carrier{G}$, and its product is the standard composition of permutations, hence it is a \emph{permutation group} on $\carrier{G}$. More precisely, it is a subgroup of the \emph{symmetric group} $\Sym{\carrier{G}}$ of all permutations on $\carrier{G}$, defined by $\Aut{}{G}=\setof{\am\in\Sym{\carrier{G}}}{\am(G)=G}$. As an example, we apply the permutation $\am = (x\ f\ g\ y)$, in cycle notation, to the following graph:
\[\am\Bigg( \xymatrix{x \ar@/^/[r]^f \ar@/_/[r]_g & y} \Bigg) \ =\  \xymatrix{f \ar@/^/[r]^g \ar@/_/[r]_y & x}
\]
(labels are hidden), the result is obviously not the input graph. This
is a very abstract view of things (it defines an \emph{operation} of
$\Sym{\carrier{G}}$ on the class of graphs built on $\carrier{G}$, and
$\Aut{}{G}$ as the \emph{stabilizer} of $G$ in $\Sym{\carrier{G}}$ by
this operation, see e.g. \cite[Chapter II, Section 1]{Hoffmann82} for a
nice introduction), and in practice we do not need the whole group $\Sym{\carrier{G}}$. Obviously when $\am(G)=G$ we have $\nodes{\am}(\nodes{G})=\nodes{G}$, hence $\nodes{\am}\in\Sym{\nodes{G}}$, similarly $\arrows{\am}\in\Sym{\arrows{G}}$ and $\labelf{\am}$ is a $\Sigma$-automorphism of $\algebrf{G}$, the set of which should be denoted $\Aut{}{\algebrf{G}}$. 
Following the previous example, if we take $\nodes{\am}=()$, $\arrows{\am}=(f\ g)$ and $\labelf{\am}=()$ (assuming that $f$ and $g$ have the same label), then
\[\am\Bigg( \xymatrix{x \ar@/^/[r]^f \ar@/_/[r]_g & y} \Bigg) \ =\  \xymatrix{x \ar@/^/[r]^g \ar@/_/[r]_f & y}
\]
which is exactly the input graph, hence $(f\ g)$ is an automorphism of this graph.

\section{Joinable Graphs}\label{sec-join}

In order to define parallel rewriting relations on graphs, it is convenient to join possibly many different graphs that have a common part, i.e., that are joinable. As a matter of fact, this notion also allows a simple definition of graph rewriting rules, and is crucial in defining the automorphism groups of these rules. We start with a simpler notion of joinable functions. Basic properties are given without proofs.

\begin{definition}[joinable functions]\label{def-joinablef}
Given two functions $f=\tuple{R,C}$ and $g=\tuple{R',C'}$ of domain $D$, $D'$ respectively, we define the \emph{meet} of $f$ and $g$ as $f \meetf g \defeq \tuple{R\cap R',C\cap C'}$, which is a function whose domain is a subset of $D\cap D'$ (it is the set of all $x\in D\cap D'$ such that $f(x)=g(x)$). If this domain is exactly $D\cap D'$ then the \emph{join} $f \joinf g \defeq \tuple{R\cup R',C\cup C'}$ is also a function (from $D\cup D'$ to $C\cup C'$), and we say that $f$ and $g$ are \emph{joinable}. 

Similarly, if $(f_i)_{i\in I}$ is an $I$-indexed family of pairwise joinable functions, where $f_i=\tuple{R_i,C_i}$ has domain $D_i$, then $\bigjoinf_{i\in I}f_i \defeq \tuple{\ \bigcup_{i\in I}R_i,\ \bigcup_{i\in I} C_i\ }$ is a function of domain $\bigcup_{i\in I}D_i$.

If $S$ and $T$ are sets of functions, let $S\joinf T \defeq \setof{f\joinf g}{f\in S,\, g\in T}$ and $S\circ T \defeq \setof{f\circ g}{f\in S,\, g\in T}$, provided these operations can be applied. If $f$ is a function, let $f\circ T \defeq \set{f}\circ T$.
\end{definition}

In particular, functions with disjoint domains are joinable (e.g. $\nodes{\am}$ and $\arrows{\am}$), and every function is joinable with itself: $f\joinf f= f\meetf f = f$.
More generally, any two restrictions $\restrf{f}{A}$ and $\restrf{f}{B}$ of the same function $f$ are joinable and $\restrf{f}{A}\joinf \restrf{f}{B} = \restrf{f}{A\cup B}$ (and of course $\restrf{f}{A}\meetf \restrf{f}{B} = \restrf{f}{A\cap B}$). Conversely, if $f$ and $g$ are joinable then each is a restriction of $f\joinf g$.

\begin{definition}[joinable graphs]\label{def-joinableg} 
 Two graphs $H$ and $G$ are \emph{joinable} if $\algebrf{H}=\algebrf{G}$, $\nodes{H}\cap\arrows{G} = \arrows{H}\cap\nodes{G} = \ensvide$, and the functions $\leftf{H}$ and $\leftf{G}$ (and similarly $\rightf{H}$ and $\rightf{G}$) are joinable.
We can then define the graphs 
\begin{eqnarray*}
  H\gcap G &\defeq& \tuple{\ \nodes{H}\cap\nodes{G},\ \arrows{H}\cap\arrows{G},\  \algebrf{H},\ \leftf{H}\meetf\leftf{G},\ \rightf{H}\meetf\rightf{G},\   \restrf{(\labelf{H}\cap\labelf{G})}{(\nodes{H}\cap\nodes{G})\cup(\arrows{H}\cap\arrows{G})}\ },\\
  H\gcup G &\defeq& \tuple{\ \nodes{H}\cup\nodes{G},\  \arrows{H}\cup\arrows{G},\ \algebrf{H},\ \leftf{H}\joinf\leftf{G},\ \rightf{H}\joinf\rightf{G},\ \labelf{H}\cup\labelf{G}\ }.
\end{eqnarray*}
Similarly, if $(G_i)_{i\in I}$ is an $I$-indexed family of graphs (where $I\neq\ensvide$) that are pairwise joinable, hence have the same algebra $\Alg$ of labels, then let \[\Gcup_{i\in I}G_i \ \defeq\ \tuple{\ \bigcup_{i\in I}\nodes{G_i},\ \bigcup_{i\in I}\arrows{G_i},\ \Alg,\ \bigjoinf_{i\in I}\leftf{G_i},\ \bigjoinf_{i\in I}\rightf{G_i},\ \bigcup_{i\in I}\labelf{G_i}\ }.\]
\end{definition}

It is easy to see that these structures are graphs: the sets of vertices and arrows are disjoint and the adjacency functions have the correct domains and codomains. Note that if $H$ and $G$ are joinable then any subgraphs of $H$ and $G$ are joinable, and $H\gcap G = G\gcap H\subalg H \subalg H\gcup G = G\gcup H$. Similarly, if the $G_i$'s are pairwise joinable then $\forall j\in I$, $G_j\subalg \Gcup_{i\in I}G_i$. We also see that $H\subalg G$ entails that $H$ and $G$ are joinable and then $H\gcap G = H$ and $H\gcup G = G$.

We now extend the notion of automorphism groups of graphs to their $\Sigma$-subgraphs.
\begin{definition}[groups $\Aut{G}{H_1,\ldots,H_n}$ and $\restrf{\mathcal{S}}{H}$]\label{def-Aut-in}
  For any $n\geq 1$ and any graphs $H, H_1,\ldots,H_n \subalg G$, let 
  \[\Aut{G}{H}  \defeq  \setof{\am\in\Sym{\nodes{G}}\joinf\Sym{\arrows{G}}\joinf\Aut{}{\algebrf{G}}}{\am(H)=H}\]
  \[\text{and }\Aut{G}{H_1,\ldots,H_n}  \defeq  \bigcap_{i=1}^n\Aut{G}{H_i}.\]
For any $\am\in\Aut{G}{H}$, we write $\restrf{\am}{H}$ for $\restrfc{\am}{\carrier{H}}{\carrier{H}}$, and for any subgroup $\mathcal{S}$ of $\Aut{G}{H}$, let $\restrf{\mathcal{S}}{H} = \setof{\restrf{\am}{H}}{\am\in \mathcal{S}}$; this is a subgroup of $\Aut{}{H}$.
\end{definition}

It is obvious that $\Aut{G}{G}=\Aut{}{G}$. We see that $\Aut{G}{H}$ is
a permutation group on $\carrier{G}$, but only the graph structure of
$H$ is involved in the constraint $\am(H)=H$, not the structure of
$G$. 

\begin{example}
  Take for instance
\[H\ =\ \xymatrix{x \ar@/^/[r]^f \ar@/_/[r]_g & y}\ \text{ and }\ G\ =\ \xymatrix{x \ar@/^/[r]^f \ar@/_/[r]_g & y \ar@/^/[r]^h & z \ar@/^/[l]^k }\]
where labels are omitted. We have \[\Aut{}{H} = \set{\Idm{H},\, (x)(y)(f\ g)}\text{ and }\Aut{}{G} = \set{\Idm{G},\, (x)(y)(z)(f\ g)(h)(k)}\] (we write fixpoints in order to make the domains explicit). However, in $\Aut{G}{H}$ the permutations of objects that do not belong to $H$ are free, hence
\begin{eqnarray*}
  \Aut{G}{H} &=& \set{\Idm{G},\, (x)(y)(z)(f\ g)(h)(k),\, (x)(y)(z)(f)(g)(h\ k),\\
  && \hspace*{15em}(x)(y)(z)(f\ g)(h\ k)}\\
  &=& \Aut{}{H}\joinf \set{(z)(h)(k),\,(z)(h\ k)}\\
  &=& \Aut{}{H}\joinf \set{(z)} \joinf \set{(h)(k),\ (h\ k)}\\
  &=& \Aut{}{H}\joinf \Sym{\set{z}}\joinf \Sym{\set{h,\,k}}.
\end{eqnarray*}
\end{example}

Since $\algebrf{H}=\algebrf{G}$, it is easy to see that \[\Aut{G}{H}=\Aut{}{H}\joinf\Sym{\nodes{G}\setminus\nodes{H}}\joinf\Sym{\arrows{G}\setminus\arrows{H}}\]
always holds and hence that $\restrf{\Aut{G}{H}}{H}=\Aut{}{H}$. This means that, compared to the elements of $\Aut{}{H}$ which are only permutations of $\carrier{H}$, the elements of $\Aut{G}{H}$ are \emph{all} possible extensions of the elements of $\Aut{}{H}$ to permutations of $\carrier{G}$. This contrasts with the standard group theoretic way of extending permutations $\pi$ of a set $A\subseteq B$ to permutations of $B$ by adding only fixpoints, thus assuming that $\forall x\in B\setminus A$, $\pi(x)=x$. This will allow us to conveniently intersect the automorphism groups of joinable graphs, see Section \ref{sec-Aut}.

One important problem with the notion of joinable graphs is that the union of graphs, as defined above, is not an invariant construction, i.e., joining isomorphic images of joinable graphs may not be possible, or may not yield an isomorphic image of the union of the original graphs. To ensure that this is the case, we need some extra conditions.
\begin{lemma}\label{lm-isojoin}
  Let $\am$ (resp. $\bm$) be an isomorphism from graph $H$ to $H'$ (resp. $G$ to $G'$), if $H$ and $G$ are joinable, $\am$ and $\bm$ are joinable and $\am\meetf\bm$ is surjective then $H'$ and $G'$ are joinable, and
  \begin{itemize}
  \item $\am\meetf \bm$ is an isomorphism from $H\gcap G$ to $H'\gcap G'$,
  \item $\am\joinf\bm$ is an isomorphism from $H\gcup G$ to $H'\gcup G'$.
  \end{itemize}
\end{lemma}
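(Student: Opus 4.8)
The plan is to exploit repeatedly a single consequence of the surjective-meet hypothesis: for every $y\in\carrier{H'}\cap\carrier{G'}$ there is some $z\in\carrier H\cap\carrier G$ with $\am(z)=\bm(z)=y$ (call this $(\star)$). Combined with the injectivity of $\am$ and $\bm$ and with their joinability (which makes them agree on $\carrier H\cap\carrier G$), this is what I expect to do all the real work. First I would record the easy structural facts: since $H,G$ are joinable, $\carrier{H\gcup G}=\carrier H\cup\carrier G$ and $\carrier{H\gcap G}=\carrier H\cap\carrier G$; and since $\carrier{\algebrf H}=\carrier{\algebrf G}\subseteq\carrier H\cap\carrier G$, joinability of $\am,\bm$ forces $\labelf\am=\labelf\bm$ on the label carrier.

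Next I would prove the central claim that $\am\joinf\bm$ is a bijection from $\carrier{H\gcup G}$ onto $\carrier{H'}\cup\carrier{G'}$. Surjectivity is immediate from the surjectivity of $\am$ and of $\bm$ separately. \textbf{Injectivity is the main obstacle, and the only place the surjective-meet hypothesis is truly needed}: if two elements share an image, the only nontrivial case is $x_1\in\carrier H\setminus\carrier G$ and $x_2\in\carrier G\setminus\carrier H$ with $\am(x_1)=\bm(x_2)=y$; then $y\in\carrier{H'}\cap\carrier{G'}$, so $(\star)$ yields $z\in\carrier H\cap\carrier G$ with $\am(z)=y$, and injectivity of $\am$ gives $x_1=z\in\carrier G$, a contradiction. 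Granting bijectivity, $\am\joinf\bm$ is an isomorphism from $H\gcup G$ onto the graph $K\defeq(\am\joinf\bm)(H\gcup G)$, whose carrier is $\carrier{H'}\cup\carrier{G'}$. Because $\am\joinf\bm$ restricts to $\am$ on $\carrier H$ and to $\bm$ on $\carrier G$, and because $\am,\bm$ are isomorphisms, the images of $H,G\subalg H\gcup G$ are $H'=\am(H)$ and $G'=\bm(G)$, so both $H'$ and $G'$ are $\Sigma$-subgraphs of $K$. From this, joinability of $H'$ and $G'$ comes for free: they share the algebra $\algebrf K$, their vertex and arrow sets are cross-disjoint because $\nodes K\cap\arrows K=\ensvide$, and their adjacency functions $\leftf{H'},\leftf{G'}$ (resp. $\rightf{H'},\rightf{G'}$) are both restrictions of $\leftf K$ (resp. $\rightf K$), hence joinable.

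It then remains to identify $K$ with $H'\gcup G'$. The carriers agree componentwise ($\nodes K=\am(\nodes H)\cup\bm(\nodes G)=\nodes{H'}\cup\nodes{G'}$, and likewise for arrows and algebra), and the adjacency functions agree since both are the restriction of $\leftf K$ (resp. $\rightf K$) to $\arrows{H'}\cup\arrows{G'}$. The only genuine check is the label function: using that $\am\joinf\bm$ is bijective (so preimages are singletons) and that its label part coincides with both $\labelf\am$ and $\labelf\bm$, a direct computation gives $\labelf K(y)=\labelf{H'}(y)\cup\labelf{G'}(y)$, i.e. $K=H'\gcup G'$, which settles the second bullet.

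For the first bullet I would treat $\am\meetf\bm$ as $\restrf{\am}{\carrier H\cap\carrier G}$ (which equals $\restrf{\bm}{\carrier H\cap\carrier G}$): it is injective as a restriction of the injective $\am$ and surjective by hypothesis, hence a bijection from $\carrier{H\gcap G}$ onto $\carrier{H'}\cap\carrier{G'}=\carrier{H'\gcap G'}$. Its image graph satisfies $(\am\meetf\bm)(H\gcap G)=\am(H\gcap G)\subalg H'$ and $=\bm(H\gcap G)\subalg G'$, hence $\subalg H'\gcap G'$; since the carriers already coincide and the bijective label map $\labelf\am$ preserves intersections of label sets, the reverse inclusion holds as well, giving $(\am\meetf\bm)(H\gcap G)=H'\gcap G'$. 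Thus $\am\meetf\bm$ is the desired isomorphism, completing the proof. Overall, the only delicate point is the injectivity of $\am\joinf\bm$; everything after it is bookkeeping with carriers, adjacency restrictions, and label sets.
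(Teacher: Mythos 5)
Your proof is correct, and it reorganizes the argument in a way that differs noticeably from the paper's. The paper proceeds in the opposite order: it first establishes joinability of $H'$ and $G'$ directly (using surjectivity of $\am\meetf\bm$ to pull back a putative element of $\nodes{H'}\cap\arrows{G'}$ to $\nodes{H}\cap\arrows{G}=\ensvide$, and to show $\leftf{H'}$ and $\leftf{G'}$ agree on common arrows), and only then verifies the morphism identities $(\leftf{H'}\meetf\leftf{G'})\circ\arrows{\gamma}=\nodes{\gamma}\circ(\leftf{H}\meetf\leftf{G})$, etc., for $\gamma=\am\meetf\bm$ and $\delta=\am\joinf\bm$ by direct computation. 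You instead prove bijectivity of $\am\joinf\bm$ first, transport the whole structure to $K=(\am\joinf\bm)(H\gcup G)$, and then obtain joinability of $H'$ and $G'$ for free from the fact that both are $\Sigma$-subgraphs of $K$; the remaining work is the componentwise identification $K=H'\gcup G'$. Each route uses the surjective-meet hypothesis through the same key fact (your $(\star)$), so the mathematical content is the same, but your version buys two things: several verifications are absorbed into general facts already recorded in Section~\ref{sec-join} (isomorphic images are graphs; subgraphs of a common graph are joinable; restrictions of a common function are joinable), and, more importantly, you make explicit that \emph{injectivity} of $\am\joinf\bm$ genuinely requires the surjective-meet hypothesis -- the cross case $x_1\in\carrier{H}\setminus\carrier{G}$, $x_2\in\carrier{G}\setminus\carrier{H}$ with $\am(x_1)=\bm(x_2)$ is not excluded by joinability alone -- whereas the paper dismisses bijectivity of $\delta$ as obvious. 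The paper's version is more self-contained at each step but correspondingly more computational. One small point of care in your first bullet: the identification $\carrier{H'}\cap\carrier{G'}=\carrier{H'\gcap G'}$ silently uses $\nodes{H'}\cap\arrows{G'}=\arrows{H'}\cap\nodes{G'}=\ensvide$, so the joinability of $H'$ and $G'$ must indeed be established before that step, as your ordering does.
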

\begin{proof} Since $\algebrf{H}=\algebrf{G}$ this set is included in the domains of $\am$ and $\bm$ which are joinable, hence $\labelf{\am}=\labelf{\bm}$ and $\algebrf{H'} = \labelf{\am}(\algebrf{H}) = \labelf{\bm}(\algebrf{G}) = \algebrf{G'}$. Let $\gamma = \am\meetf \bm$ and suppose there is a $y\in \nodes{H'}\cap\arrows{G'} = \am(\nodes{H}) \cap \bm(\arrows{G})$, this is clearly included in the codomain $\carrier{H'}\cap \carrier{G'}$ of $\gamma$. But $\gamma$ is surjective, hence there is a $x$ in the domain $\carrier{H}\cap \carrier{G}$ of $\gamma$ such that $y=\gamma(x)=\am(x)=\bm(x)$, hence $\invf{\am}(y)=x\in\nodes{H}$ and $\invf{\bm}(y)=x\in\arrows{G}$, so that $x\in\nodes{H}\cap\arrows{G}=\ensvide$, a contradiction; this proves that $\nodes{H'}\cap\arrows{G'} = \ensvide$. We prove similarly that $\arrows{H'}\cap\nodes{G'} = \ensvide$.

For all $g\in \arrows{H}'\cap \arrows{G}'$, let $f=\invf{\gamma}(g)\in \arrows{H}\cap \arrows{G}$. Since $H$ and $G$ are joinable, then $\leftf{H}(f)=\leftf{G}(f)\in \nodes{H}\cap \nodes{G}$, hence \[\leftf{H}'(g) = \leftf{H}'\circ \arrows{\am}(f) = \nodes{\am}\circ \leftf{H}(f) =  \nodes{\bm}\circ \leftf{G}(f) = \leftf{G}'\circ \arrows{\bm}(f) = \leftf{G}'(g).\] Similarly we get $\rightf{H}'(g)=\rightf{G}'(g)$, hence $H'$ and $G'$ are joinable. Furthermore, for all $f\in \arrows{H}\cap \arrows{G}$ we have \[(\leftf{H'}\meetf\leftf{G'})\circ \arrows{\gamma}(f) = \leftf{H}'\circ\arrows{\am} (f) = \nodes{\alpha}\circ \leftf{H}(f) = \nodes{\gamma}\circ (\leftf{H}\meetf\leftf{G})(f).\]
Besides, $\labelf{\gamma}=\labelf{\am}$ is a $\Sigma$-isomorphism from $\algebrf{H\gcap G} = \algebrf{H}$ to $\algebrf{H'\gcap G'}=\algebrf{H'}$, and $\forall x\in (\nodes{H}\cap\nodes{G}) \cup (\arrows{H}\cap\arrows{G})$, $\labelf{\gamma}\circ(\labelf{H}\cap\labelf{G})(x) = \labelf{\am}\circ\labelf{H}(x) = \labelf{H'}\circ\am(x) = (\labelf{H'}\cap\labelf{G'})\circ\gamma(x)$,
hence $\gamma$ is an isomorphism of $H\gcap G$ to $H'\gcap G'$. 

Let $\delta=\am \joinf \bm$, this is obviously a bijective function from $\carrier{H}\cup \carrier{G}$ onto $\carrier{H'}\cup \carrier{G'}$, and for all $f\in \arrows{H}$ we have
\[(\leftf{H}'\joinf\leftf{G}')\circ\arrows{\delta}(f) = \leftf{H}'\circ\arrows{\am}(f) = \nodes{\am}\circ \leftf{H}(f) = \nodes{\delta}\circ(\leftf{H}\joinf\leftf{G})(f),\] and similarly for all $f\in\arrows{G}$ (using $\bm$), hence  $(\leftf{H}'\joinf\leftf{G}')\circ\arrows{\delta} = \nodes{\delta}\circ(\leftf{H}\joinf\leftf{G})$. We also have $\labelf{\delta}=\labelf{\gamma}$ is a $\Sigma$-isomorphism from $\algebrf{H\gcup G} = \algebrf{H}$ to $\algebrf{H'\gcup G'}=\algebrf{H'}$, and $\forall x\in (\nodes{H}\cup\nodes{G}) \cup (\arrows{H}\cup\arrows{G})$, if $x\in\nodes{H}\cup\arrows{H}$ then $\labelf{\delta}\circ(\labelf{H}\cup\labelf{G})(x) = \labelf{\am}\circ\labelf{H}(x) = \labelf{H'}\circ\am(x) = (\labelf{H'}\cup\labelf{G'})\circ\delta(x)$; otherwise $x\in\nodes{G}\cup\arrows{G}$ and similarly $\labelf{\delta}\circ(\labelf{H}\cup\labelf{G})(x) = (\labelf{H'}\cup\labelf{G'})\circ\delta(x)$ which proves that $\labelf{\delta}\circ(\labelf{H}\cup\labelf{G}) = (\labelf{H'}\cup\labelf{G'})\circ\delta$ and hence that $\delta$ is an isomorphism from $H\gcup G$ to $H'\gcup G'$.
\end{proof}
\begin{corollary}\label{cr-isojoins}
  If $\forall i\in\set{1,\ldots,n}$, $\am_i$ is an isomorphism from $G_i$ to $G'_i$ such that $\forall j\in \set{1,\ldots,n}$, $G_i$ and $G_j$ are joinable, $\am_i$ and $\am_j$ are joinable and $\am_i\meetf\am_j$ is surjective then $\bigjoinf_{i=1}^n\am_i$ is an isomorphism from $\Gcup_{i=1}^n G_i$ to $\Gcup_{i=1}^n G'_i$.
\end{corollary}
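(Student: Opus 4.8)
The plan is to proceed by induction on $n$, with Lemma~\ref{lm-isojoin} doing the work in the inductive step. The base case $n=1$ is immediate, since $\Gcup_{i=1}^1 G_i = G_1$ and $\bigjoinf_{i=1}^1 \am_i = \am_1$ is an isomorphism from $G_1$ to $G'_1$ by hypothesis.

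For the inductive step I would assume the statement for $n-1$ and set $H \defeq \Gcup_{i=1}^{n-1} G_i$, $H' \defeq \Gcup_{i=1}^{n-1} G'_i$ and $\bm \defeq \bigjoinf_{i=1}^{n-1} \am_i$. The subfamily $G_1,\ldots,G_{n-1}$ inherits all the hypotheses, so by induction $\bm$ is an isomorphism from $H$ to $H'$. Since $\bigjoinf_{i=1}^n \am_i = \bm \joinf \am_n$, $\Gcup_{i=1}^n G_i = H \gcup G_n$ and $\Gcup_{i=1}^n G'_i = H' \gcup G'_n$, it then suffices to apply Lemma~\ref{lm-isojoin} to the isomorphisms $\bm$ (from $H$ to $H'$) and $\am_n$ (from $G_n$ to $G'_n$), which means verifying its three hypotheses: that $H$ and $G_n$ are joinable, that $\bm$ and $\am_n$ are joinable, and that $\bm \meetf \am_n$ is surjective.

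The first two reduce to a routine fact about joinable functions that I would check first: the join of pairwise joinable functions $f_1,\ldots,f_{n-1}$ is joinable with any $g$ joinable with every $f_i$, because for $x$ in the intersection of the domains one has $x \in D_i$ for some $i$ and hence $(\bigjoinf_j f_j)(x) = f_i(x) = g(x)$. Applied to the adjacency functions this gives that $\leftf{H} = \bigjoinf_{i<n}\leftf{G_i}$ is joinable with $\leftf{G_n}$ and $\rightf{H}$ with $\rightf{G_n}$; combined with $\algebrf{H} = \Alg = \algebrf{G_n}$ and $\nodes{H}\cap\arrows{G_n} = \bigcup_{i<n}(\nodes{G_i}\cap\arrows{G_n}) = \ensvide$ (and symmetrically $\arrows{H}\cap\nodes{G_n}=\ensvide$), this yields that $H$ and $G_n$ are joinable. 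The same fact applied to $\am_1,\ldots,\am_{n-1}$ and $\am_n$ shows that $\bm$ and $\am_n$ are joinable.

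The hard part is the surjectivity of $\bm \meetf \am_n$, whose codomain is $\carrier{H'}\cap\carrier{G'_n}$; this is the only step that genuinely uses the pairwise surjectivity hypothesis. Given $y$ in this codomain, I would note $y \in \carrier{G'_i}$ for some $i<n$, so $y$ lies in $\carrier{G'_i}\cap\carrier{G'_n}$, the codomain of the surjective meet $\am_i\meetf\am_n$; thus there is $x \in \carrier{G_i}\cap\carrier{G_n}$ with $\am_i(x)=\am_n(x)=y$. Since $\am_i$ is a restriction of $\bm$ and $x\in\carrier{G_i}\subseteq\carrier{H}$, we have $\bm(x)=\am_i(x)=y$, while $\am_n(x)=y$ and $x\in\carrier{H}\cap\carrier{G_n}$, so $(\bm\meetf\am_n)(x)=y$. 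With the three hypotheses in hand, Lemma~\ref{lm-isojoin} delivers that $\bm\joinf\am_n=\bigjoinf_{i=1}^n\am_i$ is an isomorphism from $\Gcup_{i=1}^n G_i$ to $\Gcup_{i=1}^n G'_i$, closing the induction.
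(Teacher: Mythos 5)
Your proof is correct and follows exactly the route the paper intends: the corollary is stated without proof as an immediate consequence of Lemma~\ref{lm-isojoin}, obtained by induction on $n$ with that lemma applied in the inductive step. You supply the details the paper leaves implicit (joinability of $\Gcup_{i<n}G_i$ with $G_n$, joinability of the joined isomorphisms, and surjectivity of the meet via the pairwise surjectivity hypothesis), and each of these verifications is sound.
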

\begin{corollary}\label{cr-groupmeet}
  If $H$ and $G$ are joinable graphs then \[\Aut{H\gcup G}{H,G}\subseteq \Aut{H\gcup G}{H\gcap G}.\]
\end{corollary}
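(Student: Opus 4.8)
The plan is to derive this from Lemma~\ref{lm-isojoin}, taking $H'=H$ and $G'=G$. Let $\am\in\Aut{H\gcup G}{H,G}$ be arbitrary; I must show $\am(H\gcap G)=H\gcap G$, since this is exactly the condition for $\am\in\Aut{H\gcup G}{H\gcap G}$. By Definition~\ref{def-Aut-in} and the remark that $\restrf{\Aut{H\gcup G}{H}}{H}=\Aut{}{H}$, the restrictions $\bm\defeq\restrf{\am}{H}$ and $\gamma\defeq\restrf{\am}{G}$ are automorphisms of $H$ and of $G$ respectively, i.e.\ isomorphisms $H\to H$ and $G\to G$. The strategy is to feed $\bm$ and $\gamma$ into Lemma~\ref{lm-isojoin}, conclude that $\bm\meetf\gamma$ is an automorphism of $H\gcap G$, and then observe that $\bm\meetf\gamma$ is a restriction of $\am$, so that $\am$ fixes $H\gcap G$ as well.

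First I would record that $\am(H)=H$ and $\am(G)=G$ force $\am(\carrier H)=\carrier H$ and $\am(\carrier G)=\carrier G$: indeed $\nodes\am(\nodes H)=\nodes H$, $\arrows\am(\arrows H)=\arrows H$, and $\labelf\am$ permutes $\carrier{\algebrf{H\gcup G}}=\carrier{\algebrf H}$, and similarly for $G$. With this in hand the three hypotheses of Lemma~\ref{lm-isojoin} are checked as follows. Since $H$ and $G$ are joinable by assumption, only the conditions on the maps remain. Both $\bm$ and $\gamma$ are restrictions of the single function $\am$, namely $\bm=\restrfc{\am}{\carrier H}{\carrier H}$ and $\gamma=\restrfc{\am}{\carrier G}{\carrier G}$, so their meet is $\bm\meetf\gamma=\restrfc{\am}{\carrier H\cap\carrier G}{\carrier H\cap\carrier G}$; using $\am(\carrier H)=\carrier H$ and $\am(\carrier G)=\carrier G$ one sees its domain is exactly $\carrier H\cap\carrier G$, which is the intersection of the domains of $\bm$ and $\gamma$, so $\bm$ and $\gamma$ are joinable. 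The same two identities show $\bm\meetf\gamma$ maps $\carrier H\cap\carrier G$ onto itself (for $y\in\carrier H\cap\carrier G$ the element $\invf\am(y)$ lies in both $\carrier H$ and $\carrier G$), hence $\bm\meetf\gamma$ is surjective.

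Lemma~\ref{lm-isojoin} then yields that $\bm\meetf\gamma$ is an isomorphism from $H\gcap G$ to $H\gcap G$, i.e.\ an automorphism of $H\gcap G$, so $(\bm\meetf\gamma)(H\gcap G)=H\gcap G$. To finish I would note that $\carrier{H\gcap G}\subseteq\carrier H\cap\carrier G$ (it consists of the shared algebra carrier together with $\nodes H\cap\nodes G$ and $\arrows H\cap\arrows G$), so $\am$ and its restriction $\bm\meetf\gamma$ agree on $\carrier{H\gcap G}$; since the image of a $\Sigma$-subgraph depends only on the values of the map on that subgraph's carrier, $\am(H\gcap G)=(\bm\meetf\gamma)(H\gcap G)=H\gcap G$, giving $\am\in\Aut{H\gcup G}{H\gcap G}$. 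I expect the only delicate point to be the verification that $\bm\meetf\gamma$ is surjective and has domain exactly $\carrier H\cap\carrier G$, as this is precisely where the hypotheses $\am(H)=H$ and $\am(G)=G$ are consumed; everything else is bookkeeping about restrictions already licensed by the remarks following Definition~\ref{def-joinablef}.
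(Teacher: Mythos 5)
Your proposal is correct and follows essentially the same route as the paper: restrict the given automorphism to $H$ and to $G$, verify that the two restrictions are joinable isomorphisms whose meet is surjective, and invoke Lemma~\ref{lm-isojoin} (with $H'=H$, $G'=G$) to conclude that the meet is an automorphism of $H\gcap G$, hence that the original map fixes $H\gcap G$. Your verification of the domain and surjectivity of the meet is just a more explicit spelling-out of the paper's one-line computation $\sigma(\carrier{H}\cap\carrier{G})=\sigma(\carrier{H})\cap\sigma(\carrier{G})=\carrier{H}\cap\carrier{G}$.
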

\begin{proof}
  For all $\sigma\in\Aut{H\gcup G}{H,G}$, let $\am=\restrf{\sigma}{H}$ and $\bm=\restrf{\sigma}{G}$. Obviously $\am$ and $\bm$ are joinable isomorphisms with $\am(H)=H$, $\bm(G)=G$ and $\am\meetf\bm = \restrfc{\sigma}{\carrier{H}\cap\carrier{G}}{\carrier{H}\cap\carrier{G}}$ is surjective since $\sigma(\carrier{H}\cap\carrier{G}) = \sigma(\carrier{H})\cap\sigma(\carrier{G}) = \carrier{H}\cap\carrier{G}$. Thus $\am\meetf\bm$ is an isomorphism from $H\gcap G$ to itself, which yields $\sigma(H\gcap G) = H\gcap G$ and therefore $\sigma\in \Aut{H\gcup G}{H\gcap G}$.  
\end{proof}

We will also need to know how the removal of objects in a graph behaves through unions and isomorphisms.
\begin{lemma}\label{lm-join-remove}
  Given two graphs $H$ and $G$, $V\subseteq \nodes{H}$, $A\subseteq \arrows{H}$ and $l$ a labelling function of $H$,
  \begin{enumerate}
  \item[\emph{(1)}] if $H$ and $G$ are joinable then $\delgraph{(H\gcup G)}{V}{A}{l}\subalg \delgraph{H}{V}{A}{l}\gcup G$, and the equality $\delgraph{(H\gcup G)}{V}{A}{l}= \delgraph{H}{V}{A}{l}\gcup G$ holds iff $V\cap\nodes{G}= A\cap \arrows{G}=\ensvide$ and $l(x)\cap\labelf{G}(x)=\ensvide$ for all $x\in\nodes{G}\cup \arrows{G}$.
  \item[\emph{(2)}] if $\am$ is an isomorphism from $H$ to $G$ then \[\am(\delgraph{H}{V}{A}{l}) = \delgraph{\am(H)}{\nodes{\am}(V)}{\arrows{\am}(A)}{\labelf{\am}\circ l\circ \invf{(\nodes{\am}\joinf \arrows{\am})}}\]
  \end{enumerate}
\end{lemma}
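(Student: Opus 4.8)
The plan is to verify both parts by comparing the two graphs componentwise --- their vertex sets, arrow sets, adjacency functions and labelling functions --- while using Lemma~\ref{lm-subremove} to shorten the bookkeeping for part~(1). First I would check that the right-hand side of~(1) is well defined: since $\delgraph{H}{V}{A}{l}\subalg H$ and $H,G$ are joinable, $\delgraph{H}{V}{A}{l}$ and $G$ are joinable, so their join exists. Writing $W=H\gcup G$, the inclusion itself I would get by direct computation. The vertices of $\delgraph{W}{V}{A}{l}$ are $(\nodes{H}\cup\nodes{G})\setminus V = (\nodes{H}\setminus V)\cup(\nodes{G}\setminus V)$, contained in the vertices $(\nodes{H}\setminus V)\cup\nodes{G}$ of $\delgraph{H}{V}{A}{l}\gcup G$; the label sets compare the same way since $(\labelf{H}\cup\labelf{G})(x)\setminus l(x)\subseteq(\labelf{H}(x)\setminus l(x))\cup\labelf{G}(x)$. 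For the arrows the only delicate point is that removing the vertices $V$ from $W$ also deletes the arrows adjacent to $V$; but because $V\subseteq\nodes{H}$, $A\subseteq\arrows{H}$ and $\leftf{H},\rightf{H}$ agree with $\leftf{G},\rightf{G}$ on shared arrows, an arrow of $W$ outside $A$ whose endpoints avoid $V$ either lies in $\arrows{H}$, in which case it survives in $\delgraph{H}{V}{A}{l}$, or lies in $\arrows{G}$, in which case it survives in $G$, giving the inclusion.

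For the equality I would avoid recomputing the reverse inclusion and instead invoke Lemma~\ref{lm-subremove} twice. On the one hand $\delgraph{H}{V}{A}{l}\subalg\delgraph{W}{V}{A}{l}$ holds unconditionally, since the vertices, arrows and labels of $\delgraph{H}{V}{A}{l}$ are by construction already disjoint from $V$, $A$ and $l$. On the other hand, applying Lemma~\ref{lm-subremove} to $G\subalg W$ shows that $G\subalg\delgraph{W}{V}{A}{l}$ is equivalent to precisely the three stated conditions $V\cap\nodes{G}=A\cap\arrows{G}=\ensvide$ and $l(x)\cap\labelf{G}(x)=\ensvide$ for all $x\in\nodes{G}\cup\arrows{G}$. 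Since the join of two $\Sigma$-subgraphs of a common graph is again a $\Sigma$-subgraph of it, these two facts give $\delgraph{H}{V}{A}{l}\gcup G\subalg\delgraph{W}{V}{A}{l}$ exactly when the conditions hold, and together with the inclusion already proved this yields the claimed equivalence.

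For part~(2), writing $F=\delgraph{H}{V}{A}{l}\subalg H$ so that $\am(F)\subalg\am(H)$, I would again match the two sides component by component, exploiting that $\nodes{\am}$, $\arrows{\am}$ and $\labelf{\am}$ are bijections. The vertices are immediate from $\nodes{\am}(\nodes{H}\setminus V)=\nodes{\am}(\nodes{H})\setminus\nodes{\am}(V)$. For the arrows I would unfold the generated-subgraph construction, $\arrows{F}=\setof{f\in\arrows{H}\setminus A}{\leftf{H}(f),\rightf{H}(f)\in\nodes{H}\setminus V}$, apply $\arrows{\am}$, and use the morphism identities $\leftf{\am(H)}\circ\arrows{\am}=\nodes{\am}\circ\leftf{H}$ and its analogue for $\rightf{}$ together with bijectivity to land on the arrows of the right-hand graph. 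The labels are the crux: the label of $y=\am(x)$ in $\am(F)$ is $\labelf{\am}(\labelf{H}(x)\setminus l(x))$ (the image formula collapses to one term because $\am$ is injective), whereas on the right it is $\labelf{\am(H)}(y)\setminus l'(y)$ with $l'=\labelf{\am}\circ l\circ\invf{(\nodes{\am}\joinf\arrows{\am})}$; using the isomorphism identity $\labelf{\am(H)}\circ\am=\labelf{\am}\circ\labelf{H}$ and the definition of $l'$ this equals $\labelf{\am}(\labelf{H}(x))\setminus\labelf{\am}(l(x))$, and the two coincide because the bijection $\labelf{\am}$ commutes with set difference.

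I expect the main obstacle to lie in the labelling function $l'$ of part~(2): one must read $\invf{(\nodes{\am}\joinf\arrows{\am})}$ as pulling each vertex or arrow of $\am(H)$ back to its unique preimage in $H$, verify that pushing the deleted labels forward along $\labelf{\am}$ and then subtracting agrees with subtracting first and then taking the image, and confirm that it is precisely the injectivity of $\am$ that makes the image-of-labels formula reduce to a single term. The arrow computation, though routine, likewise needs the generated-subgraph bookkeeping to be tracked carefully on both sides, since the deletion of $V$ silently removes adjacent arrows.
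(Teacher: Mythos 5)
Your proposal is correct and follows essentially the same route as the paper: part (1) is verified componentwise for the inclusion and then reduced to Lemma~\ref{lm-subremove} applied to $G\subalg H\gcup G$ for the equivalence (your reorganization via two applications of that lemma is a harmless streamlining of the paper's intermediate step $F'=F\gcup G$), and part (2) is the same direct computation, merely presented as a single componentwise comparison instead of the paper's factorization into commutation of $\am$ with each of the three removal operations.
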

\begin{proof}
\noindent  (1) As $\delgraph{H}{V}{A}{l}\subalg H$ thus $\delgraph{H}{V}{A}{l}$ and $G$ are also joinable. Let $F=\delgraph{(H\gcup G)}{V}{A}{l}$ and $F'=\delgraph{H}{V}{A}{l}\gcup G$. Obviously \[\nodes{F} = (\nodes{H}\cup \nodes{G})\setminus V = (\nodes{H}\setminus V)\cup (\nodes{G}\setminus V)  \subseteq (\nodes{H}\setminus V)\cup \nodes{G} = \nodes{F}'.\] Let $f\in\arrows{F}$, if $f\in\arrows{G}$ then $f\in\arrows{F}'$, otherwise $f\in\arrows{H}\setminus A$, and since $\leftf{H}(f)\in \nodes{F}\subseteq\nodes{F}'$ and $\rightf{H}(f)\in\nodes{F}'$ then again $f\in \arrows{F}'$; hence $\arrows{F}\subseteq \arrows{F}'$.
As above, for all $x\in\nodes{F}\cup\arrows{F}$ we have 
\begin{eqnarray*}
  \labelf{F}(x) &=&  (\labelf{H}(x)\cup \labelf{G}(x))\setminus l(x)\\ &=& (\labelf{H}(x)\setminus l(x))\cup (\labelf{G}(x)\setminus l(x))\\ & \subseteq & (\labelf{H}(x)\setminus l(x))\cup \labelf{G}(x) = \labelf{F}'(x). 
\end{eqnarray*}
This proves that $F\subalg F'$. Thus $F\gcup G\subalg F'\gcup G = F'$, and since $\delgraph{H}{V}{A}{l} \subalg F$ then $F' = \delgraph{H}{V}{A}{l} \gcup G \subalg F\gcup G$, hence $F' = F\gcup G$. Therefore $F=F'$ iff $F=F\gcup G$ iff $G\subalg F$ iff $V\cap\nodes{G}= A\cap \arrows{G}=\ensvide$ and $l(x)\cap\labelf{G}(x)=\ensvide$ for all $x\in\nodes{G}\cup \arrows{G}$, this last step by Lemma \ref{lm-subremove} since $G\subalg H\gcup G$.

\noindent (2) It suffices to prove that $\am(H\sgminus l) = \am(H)\sgminus \labelf{\am}\circ l\circ \invf{(\nodes{\am}\joinf \arrows{\am})}$, $\am(F\sgminus A) = \am(F)\sgminus \arrows{\am}(A)$ and $\am(F\sgminus V) = \am(F)\sgminus \nodes{\am}(V)$ for any $F\subalg H$. We have
\begin{eqnarray*}
 \am(H\sgminus l)
  &= &\tuple{\nodes{\am}(\nodes{H}),\, \arrows{\am}(\arrows{H}),\, \algebrf{G},\, \restrf{\leftf{G}}{\arrows{\am}(\arrows{H})},\, \restrf{\rightf{G}}{\arrows{\am}(\arrows{H})},\, \labelf{\am}\circ (\labelf{H}\setminus l)\circ \invf{(\nodes{\am}\joinf \arrows{\am})} }  \\
  &= &\tuple{\nodes{\am}(\nodes{H}),\, \arrows{\am}(\arrows{H}),\, \algebrf{G},\, \leftf{G},\,\rightf{G},\, \labelf{\am}\circ \labelf{H}\circ \invf{(\nodes{\am}\joinf \arrows{\am})} \setminus  \labelf{\am}\circ l\circ \invf{(\nodes{\am}\joinf \arrows{\am})}}  \\
  &= &\tuple{\nodes{\am}(\nodes{H}),\, \arrows{\am}(\arrows{H}),\, \algebrf{G},\, \nodes{\am}\circ\leftf{H} \circ\invf{\arrows{\am}},\, \nodes{\am}\circ\rightf{H} \circ\invf{\arrows{\am}},\,\\ && \hspace*{11em} \labelf{\am}\circ \labelf{H}\circ \invf{(\nodes{\am}\joinf \arrows{\am})} \setminus  \labelf{\am}\circ l\circ \invf{(\nodes{\am}\joinf \arrows{\am})}}  \\
  &=& \am(H)\sgminus \labelf{\am}\circ l\circ \invf{(\nodes{\am}\joinf \arrows{\am})}.
\end{eqnarray*}
We may assume that $A\subseteq \arrows{F}$, then
\begin{eqnarray*}
  \am(F\sgminus A)
  &=& \am\big(\tuple{\nodes{F},\, \arrows{F}\setminus A,\, \algebrf{H},\, \restrf{\leftf{F}}{A},\, \restrf{\leftf{F}}{A},\, \restrf{\labelf{F}}{\nodes{F}\cup \arrows{F} \setminus A}}\big)\\
  &=& \tuple{\nodes{\am}(\nodes{F}),\, \arrows{\am}(\arrows{F})\setminus \arrows{\am}(A),\, \algebrf{G},\, \restrfc{\leftf{G}}{\arrows{\am}(A)}{\nodes{\am}(\nodes{F})},\, \restrfc{\leftf{G}}{\arrows{\am}(A)}{\nodes{\am}(\nodes{F})},\\ && \hspace*{11em} \labelf{\am}\circ \labelf{F}\circ \restrf{\invf{(\nodes{\am}\joinf \arrows{\am})}}{\nodes{\am}(\nodes{F})\cup \arrows{\am}(\arrows{F}) \setminus \arrows{\am}(A)}}\\
  &=& \am(F)\sgminus \arrows{\am}(A).
\end{eqnarray*}
 Finally, we assume that $V\subseteq \nodes{F}$, then
\begin{eqnarray*}
  \am(F\sgminus V) & = & \am(\sgen{F}{\nodes{F}\setminus V})\\
 &=& \am(\tuple{\nodes{F}\setminus V,\, B,\, \algebrf{F},\ \restrfc{\leftf{F}}{B}{\nodes{F}\setminus V},\ \restrfc{\rightf{F}}{B}{\nodes{F}\setminus V},\, \restrf{\labelf{F}}{(\nodes{F}\setminus V)\cup B}})\\
  &=&  \tuple{W,\, \arrows{\am}(B),\, \algebrf{G},\,  \restrfc{\leftf{G}}{\arrows{\am}(B)}{W},\, \restrfc{\rightf{G}}{\arrows{\am}(B)}{W},\, \labelf{\am}\circ \labelf{F}\circ \restrf{\invf{(\nodes{\am}\joinf \arrows{\am})}}{W\cup \arrows{\am}(B)}}
\end{eqnarray*}
where $B=\invf{\leftf{F}}(\nodes{F}\setminus V) \cap \invf{\rightf{F}}(\nodes{F}\setminus V)$ and $W = \nodes{\am}(\nodes{F})\setminus \nodes{\am}(V)$. But \[\am(F)\sgminus\nodes{\am}(V) = \sgen{\am(F)}{W} = \tuple{W,\, C,\, \algebrf{G},\ \restrfc{\leftf{G}}{C}{W},\ \restrfc{\rightf{G}}{C}{W},\, \restrf{\labelf{\am}\circ \labelf{F}\circ \invf{(\nodes{\am}\joinf \arrows{\am})}}{W\cup C}}\] where $C= \invf{\leftf{G}}(W) \cap \invf{\rightf{G}}(W)$. We have $\leftf{H}\circ \invf{\arrows{\am}} = \invf{\nodes{\am}}\circ \leftf{G}$, hence their inverse functions on sets are equal, i.e., \[\arrows{\am}\circ \invf{\leftf{H}} = \invf{(\leftf{H}\circ \invf{\arrows{\am}})} = \invf{(\invf{\nodes{\am}}\circ \leftf{G})} = \invf{\leftf{G}}\circ \nodes{\am}.\] Similarly we get $\arrows{\am}\circ \invf{\rightf{H}} = \invf{\rightf{G}}\circ \nodes{\am}$, hence 
\begin{eqnarray*}
  \arrows{\am}(B) 
  &=& \arrows{\am}\big(\invf{\leftf{H}}(\nodes{F}\setminus V) \cap \invf{\rightf{H}}(\nodes{F}\setminus V)\big)\ \ \text{(since $F\subalg H$)}\\
  &=& \invf{\leftf{G}}\big(\nodes{\am}(\nodes{F}\setminus V)\big) \cap \invf{\rightf{G}}\big(\nodes{\am}(\nodes{F}\setminus V)\big) \\
  &=& \invf{\leftf{G}}(W) \cap \invf{\rightf{G}}(W) \\ &=& C,
\end{eqnarray*}
 which proves that $\am(F\sgminus V) = \am(F)\sgminus\nodes{\am}(V)$.
\end{proof}

\section{Rules}\label{sec-rules}

In general, a rewrite rule has a left-hand side $L$ to be matched in
an input graph $G$, and a right-hand part $R$ that should match in
the rewritten graph (the output). In some algebraic approaches such as
the Double-Pushout~\cite{EhrigPS73}, the graphs $L$ and $R$ have a
common part $K$ which matches in both the input and the output
graphs. All items in the input graph that are
matched by $L$ but not by $K$ are removed from the input, just
as every item in the output graph that is matched by $R$ and not by
$K$ is added to the input. In the context of parallel graph
rewriting, different rules may overlap or, more precisely, different
matchings (possibly of the same rule) may overlap and thus disagree on
what should be removed or preserved from input to output.

It is therefore convenient to allow for some flexibility in the rules,
in order to minimize the possible conflicts. In the definition below,
we will provide the possibility to express, within a rule, the fact
that every item of a graph that is not removed by a rule may not be
preserved in the output, and thus can be removed by another
rule. Hence we make a clear distinction between the graph $K$, a
subgraph of $L$ ($ K\subalg L$) that specifies what is not removed
and the common part of $L$ and $R$, i.e., $L\gcap R$ (called $M$ in
Section~\ref{sec-intro}), which specifies what ought to be preserved
in the output since it belongs to $R$.

\begin{center}
  \begin{tikzpicture}
  \draw (-2,0) circle (3cm and .5cm);
  \draw (2,0) circle (3cm and .5cm);
  \draw (0,-0.37) .. controls (-5,-0.85) and (-5, 0.85) .. (0,0.37);
  \draw (0,0) node {$L\gcap R$};
  \draw (-3,0) node {$K$};
  \draw (-4.5,0) node {$L$};
  \draw (4,0) node {$R$};
\end{tikzpicture}
\end{center}

\begin{definition}[rules, matchings]\label{def-rules}
  We assume a set $\Vars$ disjoint from $\Sigma$, whose elements are called \emph{variables}. %
 For any finite $X\subseteq \Vars$, a \emph{$(\Sigma,X)$-graph} is a
 graph $G$ such that $\algebrf{G} = \Termsig{X}$ (the $\Sigma$-term
 algebra, see e.g. \cite[p. 49]{BN98}), and the sets $\nodes{G}$,
 $\arrows{G}$ and $\labelf{G}(x)$ for all
 $x\in\nodes{G}\cup\arrows{G}$ are finite. A \emph{$\Sigma$-graph} is
 a $(\Sigma,\ensvide)$-graph. Let \[\Var{G}\ =\
   \bigcup_{x\in\nodes{G}\cup\arrows{G}}\Bigg(\bigcup_{t\in\labelf{G}(x)}
   \Var{t}\Bigg),\]
where $\Var{t}$ is the set of variables occurring in $t$, see
\cite[p. 37]{BN98}.

A \emph{rule} $r$ is a triple $\tuple{L,K,R}$ of $(\Sigma,X)$-graphs such that $L$ and $R$ are joinable, $L\gcap R\subalg K\subalg L$ and $\Var{L} = X$. Note that this implies that $\Var{R}\subseteq\Var{L}$, $R$ and $K$ are joinable and $R\gcap K = L\gcap R$.

A \emph{matching} of $r$ in a $\Sigma$-graph $G$ is a matching $\mm$ of $L$ in $G$ such that 
$\forall x\in \nodes{K}\cup\arrows{K}$, $\labelf{\mm}(\labelf{L}(x)\setminus \labelf{K}(x))\cap \labelf{\mm}(\labelf{K}(x)) = \ensvide$ (or equivalently $\labelf{\mm}(\labelf{L}(x)\setminus \labelf{K}(x)) = \labelf{\mm}(\labelf{L}(x))\setminus \labelf{\mm}(\labelf{K}(x))$). Note that $\labelf{\mm}$ may not be injective; this last condition is therefore necessary to separate the labels in $G$ that should be removed from those that should be preserved by a rewriting step. We denote $\Matches{r}{G}$ the set of all matchings of $r$ in $G$ (they all have domain $\carrier{L}$). 

We consider finite sets $\R$ of rules such that $\forall r,r'\in\R$, if $\tuple{L,K,R} = r \neq r' = \tuple{L',K',R'}$ then $\nodes{L}\cup\arrows{L} \neq \nodes{L}'\cup\arrows{L}'$, so that $\carrier{L}\neq\carrier{L'}$ hence $\Matches{r}{G}\cap \Matches{r'}{G} = \ensvide$ for any $\Sigma$-graph $G$; we then write $\Matches{\R}{G}$ for $\biguplus_{r\in\R}\Matches{r}{G}$. For any $\mm\in \Matches{\R}{G}$ there is a unique rule $\rulem{\mm}\in\R$ such that $\mm\in \Matches{\rulem{\mm}}{G}$, and its components are denoted $\rulem{\mm} = \tuple{\Lg{\mm}, \Kg{\mm}, \Rg{\mm}}$.
\end{definition}

\begin{example}\label{ex-rule}
  A rule $\tuple{L,K,R}$ may be specified as
  $\textcolor{gray}{L}\Longrightarrow R$, where the graph $L$ is depicted in
  gray and its subgraph $K$ in black. Consider for instance
\[\raisebox{-1.2ex}{\begin{tikzpicture}
\graph[grow right=2cm] {a[as={$z,\set{u}$}, gray] <-[gray, edge
  label={$g,\ensvide$}]
  b[as={$x,\set{u}$}] ->[edge label={$f,\ensvide$}] c[as={$y,\set{\textcolor{gray}{u},v}$}]};
\end{tikzpicture}}
\Longrightarrow
\raisebox{-1.2ex}{\begin{tikzpicture}
\graph[grow right=2cm] {a[as={$z',\set{v}$}] <-[edge label={$g',\ensvide$}]
  b[as={$x,\set{a}$}] ->[edge label={$f,\ensvide$}] c[as={$y,\set{s(v)}$}]}; 
\end{tikzpicture}}\]
where a comma separates every vertex and arrow $x$ from its label
$\labelf{L}(x)$ or $\labelf{R}(x)$, $u$ and $v$ are variables, $a\in
\Sigma$ is a constant and $s\in \Sigma$ has arity 1.  Here, the explicit, non graphical representation of $L$ is given by $\nodes{L}=\set{x,y,z}$, $\arrows{L}=\set{f,g}$, $\algebrf{L} = \Termsig{\set{u,v}}$, $\leftf{L} = \tuple{\set{\tuple{f,x},\tuple{g,x}},\nodes{L}}$, $\rightf{L} = \tuple{\set{\tuple{f,y},\tuple{g,z}},\nodes{L}}$ and $\labelf{L} = \tuple{\set{\tuple{x,\set{u}}, \tuple{y,\set{u,v}},\tuple{z,\set{u}},\tuple{f,\ensvide},\tuple{g,\ensvide}},\Part{\algebrf{L}}}$, hence we opt for graphical representations of graphs. Then we see that \[K= \raisebox{-1.2ex}{\begin{tikzpicture}
\graph[grow right=2cm] {b[as={$x,\set{u}$}] ->[edge label={$f,\ensvide$}] c[as={$y,\set{v}$}]};
\end{tikzpicture}}\text{ and } L\gcap R = \raisebox{-1.2ex}{\begin{tikzpicture}
\graph[grow right=2cm] {b[as={$x,\ensvide$}] ->[edge label={$f,\ensvide$}] c[as={$y,\ensvide$}]};
\end{tikzpicture}}.\]

We now consider the following $\Sigma$-graph:
\[G = \raisebox{-1.2ex}{\begin{tikzpicture}
\graph[grow right=3cm] {a[as={$3,\set{b,s(b)}$}] <-[edge label={$5,\set{b}$}]
  b[as={$1,\set{b}$}] ->[edge label={$4,\set{a}$}] c[as={$2,\set{a,b}$}]};
\end{tikzpicture}}\]
where $b\in\Sigma$ is another constant. Then there is a matching $\mm$
from the rule above to $G$, given by the relation $\set{\tuple{x,1},\,
  \tuple{y,2},\, \tuple{z,3},\, \tuple{f,4},\, \tuple{g,5},\,
  \tuple{u,b},\, \tuple{v,a}}$. Note that $\nodes{\mm}$ and
$\arrows{\mm}$ are injective and that
\[\labelf{\mm}(\labelf{L}(y)\setminus \labelf{K}(y))\cap \labelf{\mm}(\labelf{K}(y))
= \labelf{\mm}(\set{u})\cap \labelf{\mm}(\set{v}) = \set{b}\cap\set{a}
= \ensvide.\]
\end{example}

One essential feature of parallel graph rewriting is that we should consider the simultaneous use of all elements of a set of matchings of the rules in a given graph. This of course is only possible if this set is finite, which is always true by virtue of Definition \ref{def-rules}.
  
\begin{theorem}\label{thm-fini}
  $\Matches{r}{G}$ is finite.
\end{theorem}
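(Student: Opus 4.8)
The plan is to decompose each matching $\mm\in\Matches{r}{G}$ into its three components $\nodes{\mm}$, $\arrows{\mm}$ and $\labelf{\mm}$, and to bound the number of possibilities for each of them separately. Since $\carrier{L}=\nodes{L}\uplus\arrows{L}\uplus\carrier{\algebrf{L}}$ splits into three disjoint pieces, and these pieces are exactly the domains of the three components, $\mm$ is completely determined by the triple $\tuple{\nodes{\mm},\arrows{\mm},\labelf{\mm}}$ (indeed $\mm=\nodes{\mm}\joinf\arrows{\mm}\joinf\labelf{\mm}$). It therefore suffices to show that each component ranges over a finite set, for then $\Matches{r}{G}$ injects into a finite product of finite sets.

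For the vertex and arrow components this is immediate: $\nodes{\mm}$ is an injective function from $\nodes{L}$ to $\nodes{G}$ and $\arrows{\mm}$ an injective function from $\arrows{L}$ to $\arrows{G}$, and all four sets $\nodes{L},\arrows{L},\nodes{G},\arrows{G}$ are finite because $L$ is a $(\Sigma,X)$-graph and $G$ a $\Sigma$-graph. Hence there are only finitely many candidates for $\nodes{\mm}$ and finitely many for $\arrows{\mm}$.

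The component $\labelf{\mm}$ is the only one whose codomain $\carrier{\algebrf{G}}=\Termsig{\ensvide}$ is infinite, so this is where the argument has real content. I would use two facts. First, $\labelf{\mm}$ is a $\Sigma$-homomorphism out of the term algebra $\algebrf{L}=\Termsig{X}$, hence it is entirely determined by the values it assigns to the variables in $X$, i.e.\ by a substitution $X\to\Termsig{\ensvide}$. Second, the morphism label condition $\labelf{\mm}\circ\labelf{L}(x)\subseteq\labelf{G}\circ\mm(x)$ forces, for every $x\in\nodes{L}\cup\arrows{L}$ and every term $t\in\labelf{L}(x)$, the ground term $\labelf{\mm}(t)$ to lie in the finite set $\labelf{G}(\mm(x))$. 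Now the hypothesis $\Var{L}=X$ enters: each variable $u\in X$ occurs in some such term $t$, so $\labelf{\mm}(u)$ appears as a subterm of $\labelf{\mm}(t)$, which belongs to the finite set $T\defeq\bigcup_{v\in\nodes{G}\cup\arrows{G}}\labelf{G}(v)$. Since a finite set of finite terms has only finitely many subterms, each $\labelf{\mm}(u)$ ranges over a finite set; as $X$ is finite, there are finitely many such substitutions and hence finitely many candidates for $\labelf{\mm}$.

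Combining the three bounds gives the finiteness of $\Matches{r}{G}$. The main obstacle is precisely the infiniteness of $\carrier{\algebrf{G}}$, which rules out a naive counting of functions; it is overcome by combining the freeness of the term algebra (which reduces a homomorphism to the images of the variables) with the condition $\Var{L}=X$ and the finiteness of the label sets $\labelf{G}(v)$, thereby confining every variable image to a subterm of one of finitely many ground terms. Beyond the elementary observation that subterms of finitely many finite terms form a finite set, no delicate calculation is required.
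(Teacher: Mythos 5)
Your proposal is correct and follows essentially the same route as the paper's own proof: decompose $\mm$ as $\nodes{\mm}\joinf\arrows{\mm}\joinf\labelf{\mm}$, count the first two components directly, and handle $\labelf{\mm}$ via freeness of $\Termsig{X}$, the hypothesis $\Var{L}=X$, and the fact that each $\labelf{\mm}(u)$ must be a subterm of one of the finitely many ground terms in the finite sets $\labelf{G}(\mm(x))$. No discrepancies worth noting.
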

\begin{proof}
  Let $L$ be the left part of $r$ and $X=\Var{L}$. All elements $\mm\in\Matches{r}{G}$ can be obtained as $\mm = \nodes{\mm}\joinf \arrows{\mm}\joinf \labelf{\mm}$. $\nodes{\mm}$ belongs to the finite set of functions from $\nodes{L}$ to $\nodes{G}$ and $\arrows{\mm}$ to the finite set of functions from $\arrows{L}$ to $\arrows{G}$. Since $\Termsig{X}$ is free with generating set $X$ in the class of $\Sigma$-algebras, every $\Sigma$-homomorphism $\labelf{\mm}$ is determined by $\restrf{\labelf{\mm}}{X}$. For every $v\in X$ there is an $x\in\nodes{L}\cup\arrows{L}$ and a $t\in\labelf{L}(x)$ such that $v\in\Var{t}$, and since $\labelf{\mm}(t)\in \labelf{G}(\mm(x))$ then $\labelf{\mm}(v)$ belongs to the set of subterms of the elements of $\labelf{G}(\mm(x))$, which is finite. Hence there is a finite set of possible functions $\restrf{\labelf{\mm}}{X}$.
\end{proof}

This property trivially extends to $\Matches{\R}{G}$ for any finite set $\R$ of rules. This of course explains why we have chosen the algebra of ground terms in the rewritten graphs. In practice it is often necessary to allow other algebras, e.g., the additive algebra of integers, and it is then possible to recover finiteness by imposing ad-hoc restrictions on the rules' labels, e.g., we cannot allow the term $x+y$ (where $x$ and $y$ are variables) since it has infinite matchings with any integer. 

In the sequel we will use the standard identification of \emph{substitutions} $\restrf{\labelf{\mm}}{X}$ to their homomorphic extensions $\labelf{\mm}$, so that all matchings of a rule have finite domains.

A rewriting step may involve the creation of new vertices in a graph,
corresponding to the vertices of a rule that have no match in the
input graph, i.e., those in $\nodes{R}\setminus\nodes{L}$ (or
similarly may create new arrows). These vertices should really be new,
not only different from the vertices of the original graph but also
different from the vertices created by other rewritings (corresponding
to other matchings in the graph). This is computationally easy to do
but not that easy to formalize in an abstract way. The notion of
\emph{renaming} of a rule is not adapted to parallel rewriting since
any rule has infinitely many renamings. We choose to reuse the
vertices $x$ from $\nodes{R}\setminus\nodes{L}$ by \emph{indexing}
them with any relevant matching $\mm$, each time yielding a new vertex
$\tuple{x,\mm}$ which is obviously different from any new vertex
$\tuple{x,\nm}$ for any other matching $\nm\neq\mm$, and also from any
vertex of $G$ since $\mm$ depends\footnote{$\carrier{G}$ is the
  codomain of $\mm$, hence $\tuple{x,\mm}\not\in\carrier{G}$ by the
  axiom of regularity from set theory.} on $G$.

\begin{definition}[graph $\RImg{G}{\mm}$ and matching $\liftm{\mm}$]\label{def-RImg}  
For any rule $r=\tuple{L,K,R}$, $\Sigma$-graph $G$ and $\mm\in\Matches{r}{G}$ we define a $\Sigma$-graph $\RImg{G}{\mm}$ together with a matching $\liftm{\mm}$ of $R$ in $\RImg{G}{\mm}$. We first define the sets 
$\RImg{\nodes{G}}{\mm} \defeq \nodes{\mm}(\nodes{R}\cap \nodes{K})\uplus ((\nodes{R} \setminus \nodes{K}) \times \set{\mm})$ and $\RImg{\arrows{G}}{\mm} \defeq \arrows{\mm}(\arrows{R}\cap \arrows{K})\uplus ((\arrows{R} \setminus \arrows{K}) \times \set{\mm})$, which are finite. Next we define $\liftm{\mm}$ by: $\liftm{\labelf{\mm}} \defeq \labelf{\mm}$, $\liftm{\nodes{\mm}}$ is the function from $\nodes{R}$ to $\RImg{\nodes{G}}{\mm}$ such that $\forall x\in \nodes{R}$, if $x\in\nodes{K}$ then $\liftm{\nodes{\mm}}(x) \defeq\nodes{\mm}(x)$ else $\liftm{\nodes{\mm}}(x) \defeq \tuple{x,\mm}$, and similarly $\liftm{\arrows{\mm}}$ is the function from $\arrows{R}$ to $\RImg{\arrows{G}}{\mm}$ such that $\forall f\in \arrows{R}$, if $f\in\arrows{K}$ then $\liftm{\arrows{\mm}}(f) \defeq\arrows{\mm}(f)$ else $\liftm{\arrows{\mm}}(f) \defeq \tuple{f,\mm}$. Since $\liftm{\nodes{\mm}}$ and $\liftm{\arrows{\mm}}$ are bijective, then $\liftm{\mm}$ is a matching of $R$ in the $\Sigma$-graph \[\RImg{G}{\mm}\ \defeq\ \tuple{\RImg{\nodes{G}}{\mm},\RImg{\arrows{G}}{\mm}, \Termsig{\ensvide}, \liftm{\nodes{\mm}}\circ \leftf{R}\circ \invf{\liftm{\arrows{\mm}}}, \liftm{\nodes{\mm}}\circ \rightf{R}\circ \invf{\liftm{\arrows{\mm}}}, \liftm{\labelf{\mm}}\circ \labelf{R}\circ \invf{(\liftm{\nodes{\mm}}\joinf \liftm{\arrows{\mm}})}}.\]
\end{definition}

\begin{example}\label{ex-RImg}
  Following Example \ref{ex-rule} we get \[\liftm{\mm} = \set{\tuple{x,1},\,
  \tuple{y,2},\, \tuple{z',\tuple{z',\mm}},\, \tuple{f,4},\, \tuple{g',\tuple{g',\mm}},\,
  \tuple{u,b},\, \tuple{v,a}},\] which is a matching from $R$ to the
graph
\[\RImg{G}{\mm} = \raisebox{-1.2ex}{\begin{tikzpicture}
\graph[grow right=3cm] {a[as={$\tuple{z',\mm},\set{a}$}] <-[edge label={$\tuple{g',\mm},\ensvide$}]
  b[as={$1,\set{a}$}] ->[edge label={$4,\ensvide$}] c[as={$2,\set{s(a)}$}]}; 
\end{tikzpicture}}.\]
\end{example}

By construction $\mm$ and $\liftm{\mm}$ are joinable and $\mm\meetf\liftm{\mm}$ is a matching of $R\gcap K$ in $\mm(R\gcap K)$. We now prove that the graphs $\RImg{G}{\mm}$ can be joined to $G$.

\begin{lemma}\label{lm-join}
  For every rule $r=\tuple{L,K,R}$, $\Sigma$-graph $G$ and $\mm\in\Matches{r}{G}$, the graphs $G$ and $\RImg{G}{\mm}$ are joinable and $\mm(R\gcap K) \subalg G\gcap \RImg{G}{\mm} \subg\mm(R\gcap K)$.
\end{lemma}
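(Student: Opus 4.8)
The plan is to establish the two assertions in turn, but first to note that they largely collapse to a comparison of labels. Once $G$ and $\RImg{G}{\mm}$ are known to be joinable, $G\gcap\RImg{G}{\mm}$ is a well-defined $\Sigma$-subgraph of $G$, and $\mm(R\gcap K)$ is one too (since $R\gcap K = L\gcap R\subalg L$ and $\mm$ is a matching of $L$ in $G$). I expect to show that these two graphs carry identical vertices, arrows and adjacency functions — both sets of adjacency functions being restrictions of $\leftf{G}$ and $\rightf{G}$ — so that they can differ only in their labelling. As $\subg$ ignores labels, $G\gcap\RImg{G}{\mm}\subg\mm(R\gcap K)$ will then be immediate, whereas $\subalg$ additionally requires the pointwise inclusion $\labelf{\mm(R\gcap K)}\subseteq\labelf{G\gcap\RImg{G}{\mm}}$, the only genuinely nontrivial inclusion.

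For joinability (Definition \ref{def-joinableg}) the algebra condition is free, since $\algebrf{G} = \Termsig{\ensvide} = \algebrf{\RImg{G}{\mm}}$. For the disjointness conditions I would use that $\RImg{\nodes{G}}{\mm}$ splits into $\nodes{\mm}(\nodes{R}\cap\nodes{K})\subseteq\nodes{G}$ and fresh pairs $\tuple{x,\mm}\notin\carrier{G}$, and dually for $\RImg{\arrows{G}}{\mm}$; hence $\nodes{\RImg{G}{\mm}}\cap\arrows{G}$ and $\arrows{\RImg{G}{\mm}}\cap\nodes{G}$ reduce to intersecting a subset of $\nodes{G}$ with $\arrows{G}$ (and conversely), both empty. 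The crux is joinability of $\leftf{G}$ with $\leftf{\RImg{G}{\mm}}$ (and symmetrically $\rightf{}$): their domains meet exactly in $\arrows{\mm}(\arrows{R}\cap\arrows{K})$, so I must check agreement at each $a=\arrows{\mm}(f)$ with $f\in\arrows{R}\cap\arrows{K} = \arrows{L\gcap R}$. On one side, $\leftf{R}(f)\in\nodes{L}\cap\nodes{R}\subseteq\nodes{K}$ (the inclusion using $L\gcap R\subalg K$), so $\liftm{\nodes{\mm}}$ agrees with $\nodes{\mm}$ there and $\leftf{\RImg{G}{\mm}}(a)=\nodes{\mm}(\leftf{R}(f))$. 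On the other side, the morphism law $\leftf{G}\circ\arrows{\mm}=\nodes{\mm}\circ\leftf{L}$ gives $\leftf{G}(a)=\nodes{\mm}(\leftf{L}(f))$. These coincide because $L$ and $R$ are joinable, so $\leftf{L}$ and $\leftf{R}$ agree on $\arrows{L}\cap\arrows{R}\supseteq\arrows{R}\cap\arrows{K}\ni f$.

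With joinability established I would read off the carriers: as the fresh pairs never lie in $\carrier{G}$, we get $\nodes{G}\cap\nodes{\RImg{G}{\mm}}=\nodes{\mm}(\nodes{R}\cap\nodes{K})=\nodes{\mm(R\gcap K)}$ and the analogous identity for arrows, so the two graphs share vertices and arrows, and their adjacency functions (both the relevant restrictions of $\leftf{G}$ and $\rightf{G}$) coincide, giving $G\gcap\RImg{G}{\mm}\subg\mm(R\gcap K)$ at once. For $\subalg$ only the labels remain. Since $\nodes{\mm}$ and $\arrows{\mm}$ are injective, each $y$ in the common carrier has a single preimage $x\in(\nodes{R}\cap\nodes{K})\cup(\arrows{R}\cap\arrows{K})$, whence the image formula yields $\labelf{\mm(R\gcap K)}(y)=\labelf{\mm}(\labelf{R}(x)\cap\labelf{K}(x))$, while bijectivity of $\liftm{\nodes{\mm}}\joinf\liftm{\arrows{\mm}}$ gives $\labelf{\RImg{G}{\mm}}(y)=\labelf{\mm}(\labelf{R}(x))$. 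Monotonicity of $\labelf{\mm}$ then gives $\labelf{\mm(R\gcap K)}(y)\subseteq\labelf{\RImg{G}{\mm}}(y)$, and together with $\labelf{\mm(R\gcap K)}(y)\subseteq\labelf{G}(y)$ (from $\mm(R\gcap K)\subalg G$) this yields $\labelf{\mm(R\gcap K)}(y)\subseteq\labelf{G}(y)\cap\labelf{\RImg{G}{\mm}}(y)=\labelf{G\gcap\RImg{G}{\mm}}(y)$, i.e. $\mm(R\gcap K)\subalg G\gcap\RImg{G}{\mm}$.

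The step I expect to be the real obstacle is the adjacency-agreement computation inside the joinability proof: it is the single point where all the defining hypotheses of a rule are used simultaneously — the morphism law for $\mm$, the joinability of $L$ and $R$, and the nesting $L\gcap R\subalg K\subalg L$ — to guarantee at once that the endpoint of a shared arrow is a shared ($K$-)vertex and that $L$ and $R$ assign it the same value. The remainder is bookkeeping: tracking which elements are fresh pairs $\tuple{x,\mm}$ versus genuine images in $\carrier{G}$, and using injectivity of $\mm$ and bijectivity of $\liftm{\mm}$ to collapse the defining label-unions to single terms.
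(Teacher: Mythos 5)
Your proposal is correct and follows essentially the same route as the paper's proof: freshness of the pairs $\tuple{x,\mm}$ gives the disjointness conditions and identifies $\carrier{G}\cap\carrier{\RImg{G}{\mm}}$ with $\carrier{\mm(R\gcap K)}$, the morphism law for $\mm$ together with joinability of $L$ and $R$ and the nesting $L\gcap R\subalg K\subalg L$ gives agreement of the adjacency functions, and the final label comparison $\labelf{\mm}(\labelf{R}(x)\cap\labelf{K}(x))\subseteq\RImg{\labelf{G}}{\mm}(y)\cap\labelf{G}(y)$ is exactly the paper's chain of inclusions, with your appeal to $\mm(R\gcap K)\subalg G$ replacing the paper's explicit step through $\labelf{\mm}\circ\labelf{K}(x)\subseteq\labelf{\mm}\circ\labelf{L}(x)\subseteq\labelf{G}(y)$. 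No gaps.
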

\begin{proof}
  As $\carrier{G}$ is the codomain of $\mm$ then as above $\arrows{G}\cap (\arrows{R}\setminus\arrows{K})\times\set{\mm} = \ensvide$, hence $\arrows{G}\cap\RImg{\arrows{G}}{\mm}= \arrows{\mm}(\arrows{R}\cap\arrows{K})$. It is similarly obvious that $\nodes{G}\cap\RImg{\arrows{G}}{\mm} = \arrows{G}\cap\RImg{\nodes{G}}{\mm} = \ensvide$. For all $f\in \arrows{R}\cap\arrows{K}$, we have by definition of $\liftm{\mm}$ that $\liftm{\arrows{\mm}}(f)= \arrows{\mm}(f)$ and $\leftf{R}(f)\in\nodes{K}\subseteq \nodes{L}$. Then
\[\begin{array}{rclr}
    \RImg{\leftf{G}}{\mm}(\arrows{\mm}(f)) &=& \RImg{\leftf{G}}{\mm}\circ\liftm{\arrows{\mm}}(f) &\\
    &=& \liftm{\nodes{\mm}}\circ \leftf{R}(f) & (\text{by definition of }\RImg{\leftf{G}}{\mm})\\
    &=& \nodes{\mm}\circ \leftf{R}(f) & (\text{by definition of $\liftm{\nodes{\mm}}$ on $\nodes{K}$})\\
    &=& \nodes{\mm}\circ \leftf{L}(f) & (\text{since } K\subalg L)\\
    &=& \leftf{G}\circ \arrows{\mm}(f) & \text{(because $\mm$ is a morphism from $L$ to $G$).}
  \end{array}\]
  Hence $\leftf{G}$ and $\RImg{\leftf{G}}{\mm}$ are joinable and similarly $\rightf{G}$ and $\RImg{\rightf{G}}{\mm}$ are joinable, which proves that $G$ and $\RImg{G}{\mm}$ are joinable. Besides, 
  \begin{eqnarray*}
    \mm(R\gcap K) &=& \tuple{\nodes{\mm}(\nodes{R}\cap\nodes{K}),\, \arrows{\mm}(\arrows{R}\cap\arrows{K}),\, \Termsig{\ensvide},\, \restrfc{\leftf{G}}{\arrows{\mm}(\arrows{R}\cap\arrows{K})}{\nodes{\mm}(\nodes{R}\cap\nodes{K})},\, \restrfc{\rightf{G}}{\arrows{\mm}(\arrows{R}\cap\arrows{K})}{\nodes{\mm}(\nodes{R}\cap\nodes{K})},\, l}\\
                  &=& \tuple{\nodes{G}\cap\RImg{\nodes{G}}{\mm},\, \arrows{G}\cap\RImg{\arrows{G}}{\mm},\, \Termsig{\ensvide},\, \leftf{G}\meetf \RImg{\leftf{G}}{\mm},\, \rightf{G}\meetf \RImg{\rightf{G}}{\mm},\, l}
  \end{eqnarray*}
where $l = \labelf{\mm}\circ (\labelf{R}\cap\labelf{K}) \circ \invf{(\nodes{\mm}\joinf \arrows{\mm})}$. Hence $\mm(R\gcap K)\subg G\gcap\RImg{G}{\mm} \subg \mm(R\gcap K)$, and furthermore $\forall y\in \nodes{\mm}(\nodes{R}\cap\nodes{K})\cup \arrows{\mm}(\arrows{R}\cap\arrows{K})$, let $x = \invf{(\nodes{\mm}\joinf \arrows{\mm})}(y)$, then
\begin{eqnarray*}
  l(y) &=& \labelf{\mm} (\labelf{R}(x)\cap\labelf{K}(x))\\
       &\subseteq & \labelf{\mm} \circ\labelf{R}(x)\cap \labelf{\mm}\circ \labelf{K}(x)\\
       &\subseteq & \liftm{\labelf{\mm}} \circ\labelf{R}(x)\cap \labelf{\mm}\circ \labelf{L}(x)\\
       &\subseteq & \RImg{\labelf{G}}{\mm}\circ(\liftm{\nodes{\mm}}\joinf \liftm{\arrows{\mm}}) (x)\cap \labelf{G}\circ(\nodes{\mm}\joinf \arrows{\mm})(x)\\
       &\subseteq & (\RImg{\labelf{G}}{\mm}\cap \labelf{G})(y)
\end{eqnarray*}
and therefore $\mm(R\gcap K)\subalg G\gcap\RImg{G}{\mm}$.
\end{proof}
\begin{corollary}\label{cor-join}
  $\forall \mm,\nm\in\Matches{r}{G}$, the graphs $\RImg{G}{\mm}$ and $\RImg{G}{\nm}$ are joinable.
\end{corollary}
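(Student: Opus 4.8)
The plan is to unfold Definition \ref{def-joinableg} and check its three requirements for $\RImg{G}{\mm}$ and $\RImg{G}{\nm}$: equality of label algebras, disjointness of vertices from arrows across the two graphs, and joinability of the adjacency functions. The case $\mm=\nm$ is trivial, since every graph is joinable with itself, so I would assume $\mm\neq\nm$ from the outset. The algebra condition is immediate: by Definition \ref{def-RImg} both graphs carry the algebra $\Termsig{\ensvide}$.

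For the disjointness condition I would show $\RImg{\nodes{G}}{\mm}\cap\RImg{\arrows{G}}{\nm}=\ensvide$ (and symmetrically $\RImg{\arrows{G}}{\mm}\cap\RImg{\nodes{G}}{\nm}=\ensvide$) by a short case analysis, exploiting that each of the four sets splits into an \emph{old} part contained in $\carrier{G}$ and a \emph{new} part of indexed pairs. The old-versus-old case is handled by $\nodes{G}\cap\arrows{G}=\ensvide$, as $G$ is a graph; the old-versus-new and new-versus-old cases use the regularity footnote, which places every indexed object $\tuple{\cdot,\mm}$ outside $\carrier{G}$; and the new-versus-new case, comparing $\tuple{x,\mm}$ with $\tuple{f,\nm}$, would force $x=f$ with $x\in\nodes{R}$ and $f\in\arrows{R}$, impossible since $\nodes{R}\cap\arrows{R}=\ensvide$ because $R$ is a graph.

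The heart of the argument is joinability of the adjacency functions, and here I would lean on Lemma \ref{lm-join}. I would decompose the domain of $\RImg{\leftf{G}}{\mm}$, namely $\RImg{\arrows{G}}{\mm}$, into the old arrows $\arrows{\mm}(\arrows{R}\cap\arrows{K})\subseteq\arrows{G}$ and the new arrows $(\arrows{R}\setminus\arrows{K})\times\set{\mm}$. Lemma \ref{lm-join} (or, more precisely, the computation in its proof showing $\RImg{\leftf{G}}{\mm}\circ\arrows{\mm}(f)=\leftf{G}\circ\arrows{\mm}(f)$ for $f\in\arrows{R}\cap\arrows{K}$) gives that $\RImg{\leftf{G}}{\mm}$ coincides with $\leftf{G}$ on its old arrows, i.e. the old part is a restriction of $\leftf{G}$; the same holds for $\nm$. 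Since $\mm\neq\nm$, the two new-arrow domains $(\arrows{R}\setminus\arrows{K})\times\set{\mm}$ and $(\arrows{R}\setminus\arrows{K})\times\set{\nm}$ are disjoint, so the intersection of the two full domains lies entirely in the old arrows, where both functions equal $\leftf{G}$. Hence $\RImg{\leftf{G}}{\mm}$ and $\RImg{\leftf{G}}{\nm}$ agree on the intersection of their domains and are therefore joinable; the argument for $\rightf{G}$ is identical. Together with the first two conditions this yields that $\RImg{G}{\mm}$ and $\RImg{G}{\nm}$ are joinable.

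The work here is essentially bookkeeping, and I expect no serious obstacle. The only point requiring care is recognizing that the two lifted adjacency functions reduce to the \emph{same} function $\leftf{G}$ precisely on the overlapping (old) arrows — which is exactly what joinability of $G$ with each $\RImg{G}{\mm}$ in Lemma \ref{lm-join} delivers — together with the correct handling of the regularity argument that isolates the new indexed objects from $\carrier{G}$.
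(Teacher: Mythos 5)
Your proposal is correct and follows essentially the same route as the paper: reduce to $\mm\neq\nm$, observe that the overlap of $\RImg{G}{\mm}$ and $\RImg{G}{\nm}$ lies entirely inside $G$ because the indexed (new) parts are disjoint, and then invoke joinability of each with $G$ from Lemma \ref{lm-join}. You merely unfold the definition of joinable graphs more explicitly than the paper's terser argument, which is fine.
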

\begin{proof}
  If $\mm=\nm$ this is obvious, so we assume that $\mm\neq\nm$ so that $\RImg{\arrows{G}}{\nm} \cap (\arrows{R}\setminus\arrows{K})\times\set{\mm} = \ensvide$ and $\RImg{\arrows{G}}{\mm} \cap (\arrows{R}\setminus\arrows{K})\times\set{\nm} = \ensvide$, hence $\RImg{\arrows{G}}{\mm} \cap \RImg{\arrows{G}}{\nm} = \arrows{\mm}(\arrows{R}\cap\arrows{K}) \cap \arrows{\nm}(\arrows{R}\cap\arrows{K})\subseteq \arrows{G}$, and since $\RImg{G}{\mm}$ and $\RImg{G}{\nm}$ are both joinable with $G$ then they are joinable with each other.
\end{proof}

\section{Parallel Rewriting}\label{sec-full-rew}

For any set $M\subseteq\Matches{\R}{G}$ of matchings in $G$ we wish to define a $\Sigma$-graph $G_M$ that is the result of rewriting $G$ by applying all the rules as specified by $M$, without assuming any order. Introducing $M$ as a parameter allows us to define several parallel rewrite relations, and also encompasses the case of \emph{sequential} rewriting, defined as the special case where $M$ contains a single matching. We state some properties that we may consider appropriate for $G_M$. In conformity with our framework, we assume that $G_M$ is built from $G$, hence that $G_M$ and $G$ are joinable.
\begin{itemize}
\item[(1)] For all $\mm\in M$ there is a matching $\liftm{\mm}$ of $\Rg{\mm}$ in $G_M$ joinable with $\mm$.
\item[(2)] For all $\mm\in M$ the vertices, arrows and labels that are matched to $\Lg{\mm}$ but not to $\Kg{\mm}$ should not occur in $G_M$, i.e., $\nodes{\mm}(\nodesLg{\mm}\setminus \nodesKg{\mm}) \cap \nodes{G}_M = \arrows{\mm}(\arrowsLg{\mm}\setminus \arrowsKg{\mm}) \cap \arrows{G}_M = \ensvide$ and $\forall x\in \nodes{\mm}(\nodesKg{\mm})\cup \arrows{\mm}(\arrowsKg{\mm})$, $\labelf{\mm}\circ (\labelfLg{\mm}\setminus \labelfKg{\mm})\circ \invf{(\nodes{\mm}\joinf \arrows{\mm})}(x) \cap \labelf{G}_M(x) = \ensvide$. Note that $\nodes{\mm}$ and $\arrows{\mm}$ are injective, hence $\nodes{\mm}(\nodesLg{\mm}\setminus \nodesKg{\mm}) = \nodes{\mm}(\nodesLg{\mm})\setminus \nodes{\mm}(\nodesKg{\mm})$ and $\arrows{\mm}(\arrowsLg{\mm}\setminus \arrowsKg{\mm}) = \arrows{\mm}(\arrowsLg{\mm})\setminus \arrows{\mm}(\arrowsKg{\mm})$ and similarly, by Definition \ref{def-rules}, $\labelf{\mm}(\labelfLg{\mm}(x)\setminus \labelfKg{\mm}(x)) = \labelf{\mm}(\labelfLg{\mm}(x))\setminus \labelf{\mm}(\labelfKg{\mm}(x))$.
\item[(3)] The unmatched part of $G$ should be preserved in $G_M$. However, unmatched arrows can be deleted if they are adjacent to deleted (hence matched) vertices, hence we need only preserve $G\sgminus \bigcup_{\mm\in M}\nodes{\mm}(\nodesLg{\mm})$, in the sense that we cannot remove or add anything to this graph:
\[G\sgminus \bigcup_{\mm\in M}\nodes{\mm}(\nodesLg{\mm})\ =\sgen{G_M}{\nodes{G}\setminus \bigcup_{\mm\in M}\nodes{\mm}(\nodesLg{\mm})}.\]
\item[(4)] $G_M$ should not contain anything that is not strictly needed by properties (1) and (3), i.e., $G_M\subalg G\gcup\Gcup_{\mm\in M}\liftm{\mm}(\Rg{\mm})$. Note that properties (3) and (4) imply that $G_{\ensvide}=G$, hence we need not add this simple property to the list.
\item[(5)] Rewriting in parallel two disjoint matchings is the same as rewriting them sequentially: $\forall \mm,\nm\in\Matches{\R}{G}$, if $\nodes{\mm}(\nodesLg{\mm}) \cap \nodes{\nm}(\nodesLg{\nm}) = \ensvide$ then  $G_{\set{\mm,\nm}}$ is isomorphic to $(G_{\set{\mm}})_{\set{\iota\circ\nm}}$, where $\iota$ is the canonical matching of $\nm(\Lg{\nm})$ in $G_{\set{\mm}}$ (so that $\iota\circ\nm\in\Matches{\R}{G_{\set{\mm}}}$).
\item[(6)] The construction of $G_M$ should be invariant, i.e., if $\am$ is an isomorphism from $H$ to $G$ and $M\subseteq \Matches{\R}{H}$, then $H_M \isog G_{\am\circ M}$ (note that $\am\circ M=\setof{\am\circ\mm}{\mm\in M} \subseteq \Matches{\R}{G}$).
\end{itemize}
As reasonable candidates for $G_M$ we define the following two graphs.
\begin{definition}[graphs $\PRgraphl{G}{M}$ and $\PRgraphr{G}{M}$]\label{def-parallel-rew}
 For any $\Sigma$-graph $G$ and $M\subseteq \Matches{\R}{G}$, let
  \[\PRgraphl{G}{M} \defeq \delgraph{(G \gcup \Gcup_{\mm\in
        M}\RImg{G}{\mm})}{V}{A}{l}\ \ \ \text{ and }\ \ \ \PRgraphr{G}{M} \defeq \delgraph{G}{V}{A}{l} \gcup \Gcup_{\mm\in M}\RImg{G}{\mm},\text{ where}\]
\[V = \bigcup_{\mm\in M}\nodes{\mm}(\nodesLg{\mm}\setminus \nodesKg{\mm}),\ A = \bigcup_{\mm\in M}\arrows{\mm}(\arrowsLg{\mm}\setminus \arrowsKg{\mm})\text{ and } l= \bigcup_{\mm\in M}\labelf{\mm}\circ(\labelfLg{\mm}\setminus \labelfKg{\mm})\circ\invf{(\nodes{\mm}\joinf\arrows{\mm})}.\]
\end{definition}
Note that $l$ is only defined on $\Gcup_{\mm\in M}\mm(\Kg{\mm})$; as
mentioned above $l$ is implicitly extended to the suitable domain by
mapping other vertices and arrows to $\ensvide$.

\begin{example}\label{ex-PRgraph}
  Following Examples \ref{ex-rule} and \ref{ex-RImg}, we let
  $M=\set{\mm}$ and we have $V=\set{3}$, $A=\set{5}$ and $l = \set{\tuple{1,\ensvide},\,
  \tuple{2,\set{b}},\, \tuple{3,\set{b}},\, \tuple{4,\ensvide},\,
  \tuple{5,\ensvide}}$. Hence
\[\delgraph{G}{V}{A}{l} = \raisebox{-1.2ex}{\begin{tikzpicture}
\graph[grow right=3cm] { b[as={$1,\set{b}$}] ->[edge label={$4,\set{a}$}] c[as={$2,\set{a}$}]};
\end{tikzpicture}}.\]
Note that the label $s(b)$ is removed not by $l$ but because it labels
the vertex 3, which is removed. The same is true of label $b$ of arrow
5. By computing the union with the graph $\RImg{G}{\mm}$ of Example
\ref{ex-RImg} we get
\[\PRgraphr{G}{M} = \raisebox{-1.2ex}{\begin{tikzpicture}
\graph[grow right=3cm] {a[as={$\tuple{z',\mm},\set{a}$}] <-[edge label={$\tuple{g',\mm},\ensvide$}]
  b[as={$1,\set{a,b}$}] ->[edge label={$4,\set{a}$}] c[as={$2,\set{a,s(a)}$}]}; 
\end{tikzpicture}}.\]
The reader may check that $\PRgraphl{G}{M} = \PRgraphr{G}{M}$, see
Section \ref{sec-regularity}.
\end{example}

We immediately obtain a number of properties for these graphs:
\begin{itemize}
\item $\PRgraphl{G}{M}$ and $\PRgraphr{G}{M}$ are indeed $\Sigma$-graphs, since by Lemma \ref{lm-join} and Corollary \ref{cor-join} the $\gcup$ operation is only applied on joinable graphs, and by Theorem \ref{thm-fini} the set $M$ is finite, so that the sets $\nodes{G}\cup\bigcup_{\mm\in M}\RImg{\nodes{G}}{\mm}$, $\arrows{G}\cup\bigcup_{\mm\in M}\RImg{\arrows{G}}{\mm}$ and $\labelf{G}(x)\cup\bigcup_{\mm\in M}\RImg{\labelf{G}}{\mm}(x)$ for all $x$ in any of the two previous sets, are finite.
\item $\PRgraphl{G}{M}$ and $\PRgraphr{G}{M}$ trivially fulfill property (4).
\item $\PRgraphl{G}{M}$ trivially fulfills property (2) and $\PRgraphr{G}{M}$ property (1).
\item By Lemma \ref{lm-join-remove} (1) $\PRgraphl{G}{M}\subalg \PRgraphr{G}{M}$; they can be considered as the minimal and maximal graphs that could be defined from $M$.
\end{itemize}

We now prove that the remaining properties are also fulfilled by the two graphs.
\begin{theorem}\label{thm-prop356}
For all graphs $H$ and $G$, we have that
  \begin{itemize}
  \item[\emph{(3)}] for all $M\subseteq \Matches{\R}{G}$, let $V=\bigcup_{\mm\in M}\nodes{\mm}(\nodesLg{\mm})$, then \[G\sgminus V = \sgen{\PRgraphl{G}{M}}{\nodes{G}\setminus V} =\sgen{\PRgraphr{G}{M}}{\nodes{G}\setminus V},\]
  \item[\emph{(5)}] $\forall \mm,\nm\in\Matches{\R}{G}$, if $\nodes{\mm}(\nodesLg{\mm}) \cap \nodes{\nm}(\nodesLg{\nm}) = \ensvide$ then $\PRgraphl{G}{\set{\mm,\nm}}\isog \PRgraphl{(\PRgraphl{G}{\set{\mm}})}{\set{\iota\circ\nm}}$, where $\iota$ is the canonical matching of $\nm(\Lg{\nm})$ in $G_{\set{\mm}}$, 
  \item[\emph{(6)}] if $\am$ is an isomorphism from $H$ to $G$ and $M\subseteq \Matches{\R}{H}$, then $\PRgraphl{H}{M} \isog \PRgraphl{G}{\am\circ M}$ and $\PRgraphr{H}{M} \isog \PRgraphr{G}{\am\circ M}$.
  \end{itemize}
\end{theorem}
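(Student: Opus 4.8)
The plan is to prove the three properties separately, in the order (3), (6), (5), because the isomorphism invariance (6) supplies the index-renaming tool that lets (5) be reduced to a local computation. Throughout I will lean on the already-established facts $\PRgraphl{G}{M}\subalg\PRgraphr{G}{M}$, Lemma \ref{lm-join}, Corollary \ref{cor-join}, Corollary \ref{cr-isojoins} and both parts of Lemma \ref{lm-join-remove}.

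For property (3), set $W=\nodes{G}\sgminus V$ with $V=\bigcup_{\mm\in M}\nodes{\mm}(\nodesLg{\mm})$. Since $\PRgraphl{G}{M}\subalg\PRgraphr{G}{M}$ gives $\sgen{\PRgraphl{G}{M}}{W}\subalg\sgen{\PRgraphr{G}{M}}{W}$, it suffices to prove $G\sgminus V\subalg\sgen{\PRgraphl{G}{M}}{W}$ and $\sgen{\PRgraphr{G}{M}}{W}\subalg G\sgminus V$, forcing all three to coincide with $\sgen{G}{W}=G\sgminus V$. The single observation driving both inclusions is that every arrow contributed by a graph $\RImg{G}{\mm}$, be it a kept arrow $\arrows{\mm}(g)$ with $g\in\arrowsRg{\mm}\cap\arrowsKg{\mm}$ or a fresh arrow $\tuple{g,\mm}$, and every arrow of the deletion set $A$, has its source in $V$: for kept and deleted arrows this is because $\leftf{R}(g)\in\nodesRg{\mm}\cap\nodesKg{\mm}$ (resp.\ $\leftf{L}(h)\in\nodesLg{\mm}$), so the source lies in $\nodes{\mm}(\nodesLg{\mm})\subseteq V$; a fresh arrow's source is either such a vertex or an indexed vertex outside $\nodes{G}$. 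Hence no such arrow has both endpoints in $W$, so generating on $W$ picks up exactly the genuine $G$-arrows joining two $W$-vertices, which automatically avoid $A$. The label bookkeeping is identical: for $x\in W$ neither the $\RImg{}{}$-contributions nor $l(x)$ alter $\labelf{}(x)$, since both are supported on $\bigcup_{\mm}\mm(\Kg{\mm})\subseteq V$. This settles both graphs at once.

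For property (6) I would extend the isomorphism $\am\colon H\to G$ to a bijection $\bm$ acting as $\am$ on $\carrier{H}$ and renaming each fresh item $\tuple{x,\mm}$ to $\tuple{x,\am\circ\mm}$. Using $\nodes{\am\circ\mm}=\nodes{\am}\circ\nodes{\mm}$, one checks that $\bm$ restricts to an isomorphism $\RImg{H}{\mm}\to\RImg{G}{\am\circ\mm}$ for each $\mm$: the ground-term component $\labelf{\am}$ is the identity on $\Termsig{\ensvide}$, the kept part transforms by $\am$, and the fresh part by the renaming. These component isomorphisms, together with $\am$ itself, are pairwise joinable with surjective meets, since every overlap lies in $\carrier{H}$ (by Lemma \ref{lm-join} and Corollary \ref{cor-join}), where each component acts as $\am$. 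Corollary \ref{cr-isojoins} then yields an isomorphism $H\gcup\Gcup_{\mm}\RImg{H}{\mm}\to G\gcup\Gcup_{\mm}\RImg{G}{\am\circ\mm}$, and Lemma \ref{lm-join-remove}(2) shows it carries the deletion data $(V,A,l)$ of $M$ onto that of $\am\circ M$, whence $\bm(\PRgraphl{H}{M})=\PRgraphl{G}{\am\circ M}$. The $\PRgraphr$ case is identical, applying Lemma \ref{lm-join-remove}(2) to the factor $\delgraph{H}{V}{A}{l}$ before joining.

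Property (5) is where the real work lies, and its main obstacle is purely combinatorial. First I would use disjointness $\nodes{\mm}(\nodesLg{\mm})\cap\nodes{\nm}(\nodesLg{\nm})=\ensvide$ to show that the $\mm$-rewrite leaves $\nm(\Lg{\nm})$ intact, since every vertex, arrow or label deleted by $\mm$ is anchored at a vertex of $\nodes{\mm}(\nodesLg{\mm})$, disjoint from $\nm$'s image; hence $\nm(\Lg{\nm})\subalg\PRgraphl{G}{\set{\mm}}$, the canonical matching $\iota$ is well defined, and $\iota\circ\nm\in\Matches{\R}{\PRgraphl{G}{\set{\mm}}}$ has the same underlying relation as $\nm$. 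There is then a canonical isomorphism $\RImg{G}{\nm}\to\RImg{(\PRgraphl{G}{\set{\mm}})}{\iota\circ\nm}$ fixing the kept part and renaming the index $\nm$ to $\iota\circ\nm$. I would unfold the sequential graph and invoke Lemma \ref{lm-join-remove}(1) to commute the $\mm$-deletion past the union with this second reduct — the commutation hypotheses hold precisely because the $\mm$-deletion set is disjoint from $\RImg{(\PRgraphl{G}{\set{\mm}})}{\iota\circ\nm}$ — thereby merging the two successive deletions into the single deletion $(V,A,l)$ of $\set{\mm,\nm}$. What survives is the parallel graph with $\RImg{G}{\nm}$ replaced by its renamed copy, so the renaming bijection, shown to be an isomorphism exactly as in (6), identifies the two graphs. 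The hard part is not conceptual but the verification that these side conditions all hold; disjointness of the two matchings is the one hypothesis that makes each of them go through.
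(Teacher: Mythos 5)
Your proposal is correct and follows essentially the same route as the paper's proof for all three parts: the same endpoint-in-$V$ argument for (3), the same index-renaming isomorphism $\tuple{x,\mm}\mapsto\tuple{x,\am\circ\mm}$ combined with Lemma \ref{lm-join-remove}~(2) for (6), and the same commutation of the $\mm$-deletion past the union with the second reduct via Lemma \ref{lm-join-remove}~(1) followed by a renaming isomorphism for (5). The only cosmetic differences are the order of the parts and your assembling the global isomorphism in (6) from componentwise ones via Corollary \ref{cr-isojoins} rather than defining it globally and verifying the morphism equations directly.
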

\begin{proof}
(3) Since no arrow is added that is adjacent only to elements of $\nodes{G}\sgminus V$, we have $\sgen{\PRgraphl{G}{M}}{\nodes{G}\setminus V} \subalg \sgen{\PRgraphr{G}{M}}{\nodes{G}\setminus V} \subalg \sgen{G}{\nodes{G}\setminus V} = G\sgminus V$, hence wee need only prove $\sgen{G}{\nodes{G}\setminus V} \subalg \sgen{\PRgraphl{G}{M}}{\nodes{G}\setminus V}$, i.e., that every arrow $f\in\arrows{G}$ such that $\leftf{G}(f)$ and $\rightf{G}(f)$ both belong to $\nodes{G}\setminus V$ is also in $\PRgraphl{\arrows{G}}{M}$. Suppose this is not the case, then $f$ must have been removed, which is only possible if $f\in \bigcup_{\mm\in M}\arrows{\mm}(\arrowsLg{\mm}\setminus \arrowsKg{\mm})$, hence there is a $\mm\in M$ such that $f\in \arrows{\mm}(\arrowsLg{\mm})$, but then $\leftf{G}(f)$ and $\rightf{G}(f)$ must belong to $\nodes{\mm}(\nodesLg{\mm})$, hence to $V$, which is impossible.

(5) Let $G'=G\gcup \RImg{G}{\mm}\gcup \RImg{G}{\nm}$, $G''=G\gcup \RImg{G}{\mm}\gcup \RImg{G}{\iota\circ\nm}$, $M = \set{\mm,\nm}$ and for any matching $\tau$ of a rule of $\R$ in a graph, $V_{\tau} = \nodes{\tau}(\nodesLg{\tau}\setminus \nodesKg{\tau})$, $A_{\tau} = \arrows{\tau}(\arrowsLg{\tau}\setminus \arrowsKg{\tau})$ and $l_{\tau}= \labelf{\tau}\circ(\labelfLg{\tau}\setminus \labelfKg{\tau}) \circ\invf{(\nodes{\tau}\joinf\arrows{\tau})}$. Since $\nodes{\mm}(\nodesLg{\mm}) \cap \nodes{\nm}(\nodesLg{\nm}) = \ensvide$ then $V_{\mm}\cap V_{\nm}= \ensvide$, $A_{\mm}\cap A_{\nm}= \ensvide$, $l_{\mm}\cap l_{\nm}$ always returns $\ensvide$ and $\RImg{(\PRgraphl{G}{\set{\mm}})}{\iota\circ\nm} = \RImg{G}{\iota\circ\nm}$. It is obvious that $V_{\iota\circ\nm} = V_{\nm}$, $A_{\iota\circ\nm} = A_{\nm}$, $l_{\iota\circ\nm}$ and $l_{\nm}$ always return the same value, hence
\begin{eqnarray*}
 \PRgraphl{(\PRgraphl{G}{\set{\mm}})}{\set{\iota\circ\nm}} 
  &=& \delgraph{\Big(\big(\delgraph{(G\gcup \RImg{G}{\mm})}{V_{\mm}}{A_{\mm}}{l_{\mm}}\big)\gcup\RImg{G}{\iota\circ\nm}
      \Big)}{V_{\nm}}{A_{\nm}}{l_{\nm}}\\ 
  &=& \delgraph{\big(\delgraph{(G\gcup \RImg{G}{\mm}\gcup\RImg{G}{\iota\circ\nm})}{V_{\mm}}{A_{\mm}}{l_{\mm}}\big)}{V_{\nm}}{A_{\nm}}{l_{\nm}}
\end{eqnarray*}
because $V_{\mm}\cap \RImg{\nodes{G}}{\iota\circ\nm} = \ensvide$, $A_{\mm}\cap \RImg{\arrows{G}}{\iota\circ\nm} = \ensvide$ and $l_{\nm}$ always returns $\ensvide$ on $\RImg{\nodes{G}}{\iota\circ\nm}\cup \RImg{\arrows{G}}{\iota\circ\nm}$. Then, it is easy to see that $\PRgraphl{(\PRgraphl{G}{\set{\mm}})}{\set{\iota\circ\nm}}  = \delgraph{G''}{V_{\mm}\cup V_{\nm}}{A_{\mm}\cup A_{\nm}}{l_{\mm}\cup l_{\nm}}$.

We now define a function $\am$ from $\carrier{G'}$ to $\carrier{G''}$ by: for any $y\in\carrier{G'}$, if $y$ is of the form $\tuple{x,\nm}$ then $\am(y) = \tuple{x,\iota\circ\nm}$, otherwise $\am(y)=y$. It is obvious that $\am$ is bijective, and we prove that it is a morphism: for all $g\in\arrows{G'}$, if $g=\tuple{f,\nm}$ for some $f\in \arrowsRg{\nm}\setminus \arrowsKg{\nm}$ then $\leftf{G''}\circ\arrows{\am}(g) = \RImg{\leftf{G}}{\iota\circ\nm}\circ \liftm{(\iota\circ\nm)}(f) = \liftm{(\nodes{\iota}\circ\nodes{\nm})}\circ \leftfRg{\nm}(f) = \nodes{\am}\circ \liftm{\nodes{\nm}}\circ \leftfRg{\nm}(f) = \nodes{\am}\circ \RImg{\leftf{G}}{\nm}\circ \liftm{\arrows{\nm}}(f) = \nodes{\am}\circ \leftf{G'}(g)$, otherwise $\leftf{G''}\circ\arrows{\am}(g) = \leftf{G'}(g) = \nodes{\am}\circ\leftf{G'}(g)$, hence $\leftf{G''}\circ\arrows{\am} = \nodes{\am}\circ\leftf{G'}$. Proving that $\rightf{G''}\circ\arrows{\am} = \nodes{\am}\circ\rightf{G'}$ and $\labelf{G}''\circ (\nodes{\am}\joinf \arrows{\am}) = \labelf{\am}\circ \labelf{G}'$ is similar (note that $\labelf{\am}$ is the identity $\Sigma$-automorphism of $\Termsig{\ensvide}$). Hence $\am$ is an isomorphism from $G'$ to $G''$ and by Lemma \ref{lm-join-remove} (2) 
\begin{eqnarray*}  
  \am(\PRgraphl{G}{M}) &=& \delgraph{\am(G')}{\nodes{\am}(V_{\mm}\cup V_{\nm})}{\arrows{\am}(A_{\mm}\cup
  A_{\nm})}{\labelf{\am}\circ (l_{\mm}\cup l_{\nm})\circ \invf{(\nodes{\am}\joinf \arrows{\am})}}\\
  &=& \delgraph{G''}{V_{\mm}\cup V_{\nm}}{A_{\mm}\cup A_{\nm}}{l_{\mm}\cup l_{\nm}}\\
  &=& \PRgraphl{(\PRgraphl{G}{\set{\mm}})}{\set{\iota\circ\nm}}.
\end{eqnarray*}

(6)  We first notice that $\forall \mm\in M$, $\am\circ\mm$ is a matching of $\Lg{\mm}$ in $G$, i.e., $\am\circ\mm\in\Matches{\rulem{\mm}}{G}$ and thus $\rulem{\am\circ\mm}=\rulem{\mm}$ according to our convention on $\R$ given in Definition~\ref{def-rules}. We now build an isomorphism $\liftm{\am}$ from $H'= H\gcup \Gcup_{\mm\in M}\RImg{H}{\mm}$ to $G'= G\gcup \Gcup_{\mm\in M}\RImg{G}{\am\circ\mm}$: let $\liftm{\labelf{\am}} = \labelf{\am}$, $\liftm{\am}(x) = \am(x)$ for all $x\in\nodes{H}\cup\arrows{H}$ and $\liftm{\am}(\tuple{x,\mm}) = \tuple{x,\am\circ\mm}$ for all $\mm\in M$ and $x\in (\nodesRg{\mm}\setminus \nodesKg{\mm}) \cup (\arrowsRg{\mm}\setminus \arrowsKg{\mm})$. As 
  \begin{eqnarray*}
    \carrier{G'} &=& \carrier{\algebrf{G}}\cup \carrier{G} \cup \bigcup_{\mm\in M}(\nodesRg{\mm}\setminus \nodesKg{\mm})\times\set{\am\circ\mm} \cup (\arrowsRg{\mm}\setminus \arrowsKg{\mm}) \times\set{\am\circ\mm}\\
& = & \liftm{\am}\big(\carrier{\algebrf{H}}\cup \carrier{H} \cup \bigcup_{\mm\in M} (\nodesRg{\mm}\setminus \nodesKg{\mm})\times\set{\mm} \cup (\arrowsRg{\mm}\setminus \arrowsKg{\mm})\times\set{\mm}\big)\\
& = & \liftm{\am}(\carrier{H'})
  \end{eqnarray*}
it is obvious that $\liftm{\am}$ is bijective. Besides, by Definition \ref{def-RImg}, $\liftm{(\am\circ\mm)}$ is a matching of $\Rg{\mm}$ in $\RImg{G}{\am\circ\mm}$ such that, for all $x\in \nodesRg{\mm}\cap\nodesKg{\mm}$, $\liftm{(\am\circ\mm)}(x) = \am\circ\mm(x) = \liftm{\am}\circ\liftm{\mm}(x)$, and for all $x\in \nodesRg{\mm}\setminus \nodesKg{\mm}$, $\liftm{(\am\circ\mm)}(x) = \tuple{x,\am\circ\mm} = \liftm{\am}(\tuple{x,\mm}) = \liftm{\am}\circ\liftm{\mm}(x)$, and similarly for all $f\in\arrowsRg{\mm}$, $\liftm{(\am\circ\mm)}(f) = \liftm{\am}\circ\liftm{\mm}(f)$, hence $\liftm{(\am\circ\mm)} = \liftm{\am}\circ\liftm{\mm}$. 

We now prove that $\liftm{\am}$ is a morphism. For all $f\in \arrows{H}$ we have $\leftf{G'}\circ\liftm{\arrows{\am}}(f) = \leftf{G}\circ \arrows{\am}(f) = \nodes{\am}\circ\leftf{H}(f) = \liftm{\nodes{\am}}\circ\leftf{H'}(f)$. Then, for all $\mm\in M$ and $f\in \arrowsRg{\mm}\setminus\arrowsKg{\mm}$ we have $\RImg{\leftf{G}}{\am\circ\mm} \circ\liftm{\arrows{\am}} = \liftm{(\nodes{\am}\circ\nodes{\mm})}\circ \leftfRg{\mm} \circ \invf{\liftm{(\arrows{\am}\circ\arrows{\mm})}} \circ\liftm{\arrows{\am}} = \liftm{\nodes{\am}}\circ\liftm{\nodes{\mm}}\circ \leftfRg{\mm} \circ \invf{\liftm{\arrows{\mm}}} =  \liftm{\nodes{\am}}\circ \RImg{\leftf{H}}{\mm}$. Hence we get $\leftf{G'}\circ\liftm{\arrows{\am}} = \liftm{\nodes{\am}}\circ\leftf{H'}$ and similarly $\rightf{G'}\circ\liftm{\arrows{\am}} = \liftm{\nodes{\am}}\circ\rightf{H'}$. We also have, for all $\mu\in M$ and $x\in (\nodesRg{\mm}\setminus \nodesKg{\mm})\cup (\arrowsRg{\mm}\setminus \arrowsKg{\mm})$, that $\labelf{G}\circ(\liftm{\nodes{\am}}\joinf \liftm{\arrows{\am}})(\tuple{x,\mm}) = \labelf{G}(\tuple{x, \am\circ\mm}) = \ensvide$ since $\tuple{x,\am\circ\mm}\not\in\carrier{G}$, hence
\begin{eqnarray*}
  \labelf{G}'\circ(\liftm{\nodes{\am}}\joinf \liftm{\arrows{\am}}) &=&
\big( \labelf{G}\cup\bigcup_{\mm\in M}\RImg{\labelf{G}}{\am\circ \mm} \big)\circ(\liftm{\nodes{\am}}\joinf \liftm{\arrows{\am}}) \\
&=& \big(\labelf{G}\circ(\nodes{\am}\joinf \arrows{\am})\big) \cup \bigcup_{\mm\in M}\RImg{\labelf{G}}{\am\circ \mm} \circ(\liftm{\nodes{\am}}\joinf \liftm{\arrows{\am}})\\
&=& (\labelf{\am}\circ\labelf{H}) \cup \bigcup_{\mm\in M} \liftm{\labelf{\am}}\circ\liftm{\labelf{\mm}} \circ \labelfRg{\mm} \circ \invf{(\liftm{\nodes{\mm}}\joinf \liftm{\arrows{\mm}})}\\
&=& (\liftm{\labelf{\am}}\circ\labelf{H}) \cup \big(\liftm{\labelf{\am}}\circ \bigcup_{\mm\in M}\RImg{H}{\mm}\big) \\
&=& \liftm{\labelf{\am}}\circ\labelf{H'}
\end{eqnarray*}
which proves that $\liftm{\am}$ is an isomorphism from $H'$ to $G'$. We now let $V$, $A$ and $l$ as in Definition \ref{def-parallel-rew} for graph $H$ and set $M$, and similarly $V'$, $A'$, $l'$ for $G$ and $\am\circ M$, then \[\liftm{\nodes{\am}}(V) = \nodes{\am}\big(\bigcup_{\mm\in M}\nodes{\mm}(\nodesLg{\mm}\setminus \nodesKg{\mm})\big) = \bigcup_{\mm\in M}\nodes{\am}\circ\nodes{\mm}(\nodesLg{\mm}\setminus \nodesKg{\mm}) = V'\]
and similarly $\liftm{\arrows{\am}}(A)=A'$ and
\begin{eqnarray*}
  l' &=& \bigcup_{\mm\in M}\labelf{\am}\circ \labelf{\mm}\circ (\labelfLg{\mm}\setminus \labelfKg{\mm})\circ \invf{\big((\nodes{\am}\circ \nodes{\mm})\joinf(\arrows{\am}\circ \arrows{\mm}\big))}\\
  &=& \labelf{\am}\circ \bigcup_{\mm\in M}\labelf{\mm}\circ (\labelfLg{\mm}\setminus \labelfKg{\mm})\circ \invf{(\nodes{\mm}\joinf \arrows{\mm})}\circ \invf{(\nodes{\am}\joinf\arrows{\am})}\\
  &=& \labelf{\am}\circ l \circ \invf{(\nodes{\am}\joinf\arrows{\am})}\\
  &=& \liftm{\labelf{\am}}\circ l \circ \invf{(\liftm{\nodes{\am}}\joinf \liftm{\arrows{\am}})}
\end{eqnarray*}
since $l$ yields $\ensvide$ out of $H$. We get by Lemma \ref{lm-join-remove} (2)
\[\liftm{\am}(\PRgraphr{H}{M}) = \liftm{\am}(\delgraph{H'}{V}{A}{l}) = \delgraph{G'}{V'}{A'}{l'} = \PRgraphr{G}{\am\circ M},\]
hence a suitable restriction of $\liftm{\am}$ is an isomorphism from $\PRgraphr{H}{M}$ to $\PRgraphr{G}{\am\circ M}$, and $\liftm{\am}(\delgraph{H}{V}{A}{l}) = \delgraph{G}{V'}{A'}{l'}$, hence a restriction $\beta$ of $\liftm{\am}$ is an isomorphism from $\delgraph{H}{V}{A}{l}$ to $\delgraph{G}{V'}{A'}{l'}$. 

Finally, for all $\mm\in M$, we have $\liftm{\nodes{\am}}(\RImg{\nodes{H}}{\mm}) = \liftm{\nodes{\am}}\big(\nodes{\mm}(\nodesRg{\mm}\cap \nodesKg{\mm}) \cup (\nodesRg{\mm}\setminus \nodesKg{\mm})\times\set{\mm}\big) = \big(\nodes{\am}\circ \nodes{\mm}(\nodesRg{\mm}\cap \nodesKg{\mm})\big) \cup (\nodesRg{\mm}\setminus \nodesKg{\mm})\times \set{\am\circ \mm} = \RImg{\nodes{G}}{\am\circ\mm}$ and similarly $\liftm{\arrows{\am}}(\RImg{\arrows{H}}{\mm}) = \RImg{\arrows{G}}{\am\circ\mm}$,
hence a restriction $\gamma$ of $\liftm{\am}$ is an isomorphism from $\Gcup_{\mm\in M}\RImg{H}{\mm}$ to $\Gcup_{\mm\in M}\RImg{G}{\am\circ\mm}$. It is obvious that $\beta\meetf\gamma$ is surjective, hence by Lemma \ref{lm-isojoin} there is an isomorphism from $\PRgraphl{H}{M}$ to $\PRgraphl{G}{\am\circ M}$.
\end{proof}

The proof that $\PRgraphr{G}{M}$ also fulfills property (5) is postponed until Theorem \ref{thm-regularity} is proved. 

\begin{definition}[full parallel rewriting]
For any finite set of rules $\R$, we define two relations $\fullPRl{\R}$ and $\fullPRr{\R}$ of \emph{full parallel rewriting} between $\Sigma$-graphs by, for all $G$,
\[G\fullPRl{\R} \PRgraphl{G}{\Matches{\R}{G}}\ \text{ and }\ G\fullPRr{\R} \PRgraphr{G}{\Matches{\R}{G}}.\]  
\end{definition}

The former satisfies properties (2) to (6), the latter satisfies properties (1) and (3) to (6).

\section{Regularity}\label{sec-regularity}

It is not generally true that $\PRgraphr{G}{M}$ fulfills (2) or that
$\PRgraphl{G}{M}$ fulfills (1), since two matchings may conflict as
one removes what another retains.

\begin{example}
  We consider the rule of Example \ref{ex-rule} and the graph
\[G = \raisebox{-1.2ex}{\begin{tikzpicture}
\graph[grow right=3cm] {a[as={$3,\set{a,b}$}] <-[edge label={$5,\ensvide$}]
  b[as={$1,\set{b}$}] ->[edge label={$4,\ensvide$}] c[as={$2,\set{a,b}$}]};
\end{tikzpicture}}.\]
We have the following two matchings of the rule in $G$:
\begin{eqnarray*}
  \mm &=& \set{\tuple{x,1},\,
  \tuple{y,2},\, \tuple{z,3},\, \tuple{f,4},\, \tuple{g,5},\,
  \tuple{u,b},\, \tuple{v,a}}\\
  \nm &=& \set{\tuple{x,1},\,
  \tuple{y,3},\, \tuple{z,2},\, \tuple{f,5},\, \tuple{g,4},\,
  \tuple{u,b},\, \tuple{v,a}}
\end{eqnarray*}
hence
\begin{eqnarray*}
\RImg{G}{\mm} &=& \raisebox{-1.2ex}{\begin{tikzpicture}
\graph[grow right=3cm] {a[as={$\tuple{z',\mm},\set{a}$}] <-[edge label={$\tuple{g',\mm},\ensvide$}]
  b[as={$1,\set{a}$}] ->[edge label={$4,\ensvide$}] c[as={$2,\set{s(a)}$}]}; 
\end{tikzpicture}}\\
\RImg{G}{\nm} &=& \raisebox{-1.2ex}{\begin{tikzpicture}
\graph[grow right=3cm] {a[as={$\tuple{z',\nm},\set{a}$}] <-[edge label={$\tuple{g',\nm},\ensvide$}]
  b[as={$1,\set{a}$}] ->[edge label={$5,\ensvide$}] c[as={$3,\set{s(a)}$}]}; 
\end{tikzpicture}}
\end{eqnarray*}
meaning that we should keep vertices 2, 3 and arrows 4,5. But we also
have $V = \nodes{\mm}(\set{z})\cup\nodes{\nm}(\set{z}) = \set{2,3}$
and $A = \arrows{\mm}(\set{g})\cup\arrows{\nm}(\set{g}) = \set{4,5}$,
meaning that we should also remove these vertices and arrows. As a
result of this conflict, the graph $\PRgraphl{G}{M}$ is \[\begin{tikzpicture}
\graph[grow right=2.5cm] {a[as={$\tuple{z',\mm},\set{a}$}] <-[edge label={$\tuple{g',\mm},\ensvide$}]
  b[as={$1,\set{a,b}$}] ->[edge label={$\tuple{g',\nm},\ensvide$}]
  c[as={$\tuple{z',\nm},\set{a}$}] -!- d[as={$2,\set{a,s(a)}$}] -!- e[as={$3,\set{a,s(a)}$}]}; 
\end{tikzpicture}\]
which does not meet condition (1), and $\PRgraphr{G}{M}$ is the graph
\[ \begin{tikzpicture}
\graph[grow right=3cm] {a[as={$\tuple{z',\mm},\set{a}$}] <-[edge label={$\tuple{g',\mm},\ensvide$}]
  b[as={$1,\set{a,b}$}]->[edge label={$4,\ensvide$}]
  c[as={$2,\set{a,s(a)}$}];
 d[as={$\tuple{z',\nm},\set{a}$}] <-[edge
 label'={$\tuple{g',\nm},\ensvide$}] b -!- f[as={}] -!- e[as={$3,\set{a,s(a)}$}];
 b ->[edge label'={$5,\ensvide$}] e}; 
\end{tikzpicture}\]
which does not meet condition (2).
\end{example}

The graphs $\PRgraphl{G}{M}$ and $\PRgraphr{G}{M}$ resolve these conflicts either by prioritizing the deletion specified by the left-hand side of the rules, for $\PRgraphl{G}{M}$, or by prioritizing the preservation of the right hand side of the rules, for $\PRgraphr{G}{M}$.  Hence if we want properties (1) and (2) to be simultaneously verified on the rewritten graph we need to rule out such conflicts.

\begin{definition}[regularity]
For any $\Sigma$-graph $G$ and matchings $\mm,\nm\in\Matches{\R}{G}$, $\mm$ \emph{preserves} $\nm$ if 
$\liftm{\nm}(\Rg{\nm})\gcap \mm(\Lg{\mm})\subalg \mm(\Kg{\mm})$, i.e., the part of $G$ that is matched to $\Rg{\nm}$ cannot be removed by applying $\mm$. A set $M\subseteq\Matches{\R}{G}$ of matchings is \emph{regular} if  $\forall \mm,\nm\in M$, $\mm$ preserves $\nm$.
\end{definition}

Note that a matching $\mm$ always preserves itself since
$\liftm{\mm}(\Rg{\mm})\gcap \mm(\Lg{\mm}) =
\mm(\Rg{\mm}\gcap\Lg{\mm})\subalg \mm(\Kg{\mm})$ (see Example \ref{ex-PRgraph}). We now show that this notion is closely connected with the graphs from Definition \ref{def-parallel-rew}.

\begin{theorem}\label{thm-regularity}
  For all $\Sigma$-graphs $G$ and all $M\subseteq \Matches{\R}{G}$, \[M \text{ is regular if and only if } \ \PRgraphl{G}{M}=\PRgraphr{G}{M}.\]
\end{theorem}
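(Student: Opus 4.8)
```latex
\textbf{Proof plan.}
The plan is to recall that we already know $\PRgraphl{G}{M}\subalg \PRgraphr{G}{M}$ from the remarks following Definition~\ref{def-parallel-rew} (a consequence of Lemma~\ref{lm-join-remove}(1)). Hence the equality $\PRgraphl{G}{M}=\PRgraphr{G}{M}$ is equivalent to the reverse inclusion $\PRgraphr{G}{M}\subalg \PRgraphl{G}{M}$. Since both graphs are $\Sigma$-subgraphs of the same ambient graph $G\gcup\Gcup_{\mm\in M}\RImg{G}{\mm}$, this inclusion amounts to checking, object by object (vertices, arrows, and labels), that nothing present in $\PRgraphr{G}{M}$ has been stripped away in $\PRgraphl{G}{M}$. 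The whole difference between the two constructions is the \emph{order} of the operations: $\PRgraphl{G}{M}$ first forms the big union and then deletes $V$, $A$ and $l$, whereas $\PRgraphr{G}{M}$ deletes from $G$ first and only then adds the $\RImg{G}{\mm}$. So the reverse inclusion fails exactly when some object that is re-added by an $\RImg{G}{\mm}$ (because it lies in the image of some $\Rg{\nm}$) is nonetheless thrown out by the global deletion of $V$, $A$, $l$ (because it lies in the image of some $\Lg{\mm}\setminus\Kg{\mm}$). This is precisely the kind of conflict that regularity forbids.

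First I would unwind the definitions of $V$, $A$, $l$ and of the $\RImg{G}{\mm}$ to express the two relevant object sets concretely. The vertex set of $\PRgraphr{G}{M}$ is $(\nodes{G}\setminus V)\cup\bigcup_{\nm\in M}\RImg{\nodes{G}}{\nm}$, while that of $\PRgraphl{G}{M}$ is $\big(\nodes{G}\cup\bigcup_{\nm\in M}\RImg{\nodes{G}}{\nm}\big)\setminus V$. The only possible discrepancy is a vertex $w\in\RImg{\nodes{G}}{\nm}$ for some $\nm\in M$ that also lies in $V=\bigcup_{\mm\in M}\nodes{\mm}(\nodesLg{\mm}\setminus\nodesKg{\mm})$; the freshly-created vertices $\tuple{x,\nm}$ cannot lie in $V$ (they are not in $\carrier{G}$), so $w$ must lie in $\nodes{\nm}(\nodesRg{\nm}\cap\nodesKg{\nm})$, i.e.\ $w\in\nodes{\liftm{\nm}(\Rg{\nm})}\cap\nodes{G}$. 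Then $w\in V$ means $w\in\nodes{\mm}(\nodesLg{\mm})$ while $w\notin\nodes{\mm}(\nodesKg{\mm})$ for some $\mm$; but $w\in\nodes{\liftm{\nm}(\Rg{\nm})}\cap\nodes{\mm}(\Lg{\mm})=\nodes{\liftm{\nm}(\Rg{\nm})\gcap\mm(\Lg{\mm})}$, which by $\mm$ preserving $\nm$ is contained in $\nodes{\mm}(\Kg{\mm})$ --- contradiction. So regularity kills the vertex discrepancy, and the arrow discrepancy is handled identically, reading $A$ and $\arrows{\liftm{\nm}(\Rg{\nm})}$ in place of $V$ and $\nodes{\liftm{\nm}(\Rg{\nm})}$. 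The label case is the same argument applied pointwise: a label $a\in\labelf{\PRgraphr{G}{M}}(w)$ lost in $\PRgraphl{G}{M}$ would have to be re-added by some $\RImg{\labelf{G}}{\nm}$ yet belong to $l(w)$; tracing $l$ back shows $a\in\labelf{\liftm{\nm}(\Rg{\nm})}(w)\cap\labelf{\mm(\Lg{\mm}\setminus\Kg{\mm})}(w)$, and the label inclusion in $\liftm{\nm}(\Rg{\nm})\gcap\mm(\Lg{\mm})\subalg\mm(\Kg{\mm})$ (using the $\subalg$ condition on labels together with the matching condition of Definition~\ref{def-rules} that separates $\labelf{\mm}(\labelfLg{\mm}\setminus\labelfKg{\mm})$ from $\labelf{\mm}(\labelfKg{\mm})$) again yields a contradiction. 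This establishes that regularity implies $\PRgraphr{G}{M}\subalg\PRgraphl{G}{M}$, hence equality.

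For the converse, I would argue by contraposition: assume $M$ is not regular, so there are $\mm,\nm\in M$ with $\liftm{\nm}(\Rg{\nm})\gcap\mm(\Lg{\mm})\not\subalg\mm(\Kg{\mm})$. By Lemma~\ref{lm-subremove} the failure of this $\subalg$ relation is witnessed by a concrete object: either a vertex $w\in\nodes{\liftm{\nm}(\Rg{\nm})\gcap\mm(\Lg{\mm})}$ with $w\in\nodes{\mm}(\nodesLg{\mm}\setminus\nodesKg{\mm})=\nodes{\mm}(\nodesLg{\mm})\setminus\nodes{\mm}(\nodesKg{\mm})$ (so $w\in V$), or an arrow similarly in $A$, or a label $a\in l$-position. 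In each case that witness belongs to $\RImg{G}{\nm}$ (it sits in $\nm(\Rg{\nm}\gcap\Kg{\nm})$, the part of $\liftm{\nm}(\Rg{\nm})$ already present in $G$, by Lemma~\ref{lm-join}), hence it survives in $\PRgraphr{G}{M}$ because the deletion there is performed on $G$ \emph{before} the addition; but it is removed in $\PRgraphl{G}{M}$ because the global deletion of $V$, $A$, $l$ acts on the already-joined graph. Thus $\PRgraphl{G}{M}\neq\PRgraphr{G}{M}$, completing the contrapositive.

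The main obstacle I anticipate is the label bookkeeping in the forward direction. Vertices and arrows are binary in/out, but labels are sets attached to surviving vertices/arrows, and the deletion operation $\sgminus l$ only strips labels from objects that themselves remain; the definition of the combined label function $l$ as a union of the $\labelf{\mm}\circ(\labelfLg{\mm}\setminus\labelfKg{\mm})\circ\invf{(\nodes{\mm}\joinf\arrows{\mm})}$ means I must be careful that $\labelf{\mm}$ need not be injective. The decisive input there is the matching condition $\labelf{\mm}(\labelfLg{\mm}(x)\setminus\labelfKg{\mm}(x))=\labelf{\mm}(\labelfLg{\mm}(x))\setminus\labelf{\mm}(\labelfKg{\mm}(x))$ from Definition~\ref{def-rules}, which guarantees that a label genuinely meant to be preserved (because it lies in the image of $\Kg{\mm}$, or here of $\Rg{\nm}$) is not accidentally caught by the set-difference defining $l$. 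Getting this pointwise label argument to align exactly with the $\subalg$ characterization from Lemma~\ref{lm-subremove} is where I would spend the most care.
```
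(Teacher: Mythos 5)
Your proposal is correct and follows essentially the same route as the paper: the paper simply invokes Lemma~\ref{lm-join-remove}(1) to get, in one step, that $\PRgraphl{G}{M}=\PRgraphr{G}{M}$ iff $V$, $A$ and $l$ are disjoint from $\Gcup_{\nm\in M}\RImg{G}{\nm}$, and then identifies that disjointness with regularity exactly as you do (fresh vertices escape $V$, the rest lands in $\nm(\Rg{\nm}\gcap\Kg{\nm})$ by Lemma~\ref{lm-join}, and the label-separation condition of Definition~\ref{def-rules} handles non-injectivity of $\labelf{\mm}$). Your object-by-object reverse-inclusion plus contrapositive is just an unfolding of that same equivalence, so there is no substantive difference.
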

\begin{proof}
  Let $H = \Gcup_{\nm\in M}\RImg{G}{\nm}$ and $V$, $A$, $l$ as in Definition \ref{def-parallel-rew}, then by Lemma \ref{lm-join-remove} (1) we have $\PRgraphl{G}{M}=\PRgraphr{G}{M}$ iff $V\cap\nodes{H}= A\cap \arrows{H}=\ensvide$ and $l(x)\cap\labelf{H}(x)=\ensvide$ for all $x\in\nodes{H}\cup \arrows{H}$. We have
\[\nodes{H}\cap V = \bigcup_{\nm\in M}\RImg{\nodes{G}}{\nm} \cap \bigcup_{\mm\in M}\nodes{\mm}(\nodesLg{\mm}\setminus \nodesKg{\mm}) = \bigcup_{\mm,\nm\in M}\RImg{\nodes{G}}{\nm} \cap \nodes{\mm}(\nodesLg{\mm}\setminus \nodesKg{\mm})\]
hence $\nodes{H}\cap V = \ensvide$ iff $\forall \mm,\nm\in M$, $\RImg{\nodes{G}}{\nm} \cap \nodes{\mm}(\nodesLg{\mm}\setminus \nodesKg{\mm}) = \ensvide$, but this is equivalent to $\RImg{\nodes{G}}{\nm} \cap \nodes{\mm}(\nodesLg{\mm})\subseteq \nodes{\mm}(\nodesKg{\mm})$. Similarly we see that $\arrows{H}\cap A = \ensvide$ iff $\forall \mm,\nm\in M$, $\RImg{\arrows{G}}{\nm} \cap \arrows{\mm}(\arrowsLg{\mm})\subseteq \arrows{\mm}(\arrowsKg{\mm})$ and that $\labelf{H}(x)\cap l(x)=\ensvide$ holds for all $x\in\nodes{H}\cup \arrows{H}$ iff $\forall \mm,\nm\in M$, $\forall x\in (\RImg{\nodes{G}}{\nm}\cap \nodes{\mm}(\nodesLg{\mm}))\cup (\RImg{\arrows{G}}{\nm}\cap \arrows{\mm}(\arrowsLg{\mm}))$, $\RImg{\labelf{G}}{\nm}(x) \cap \labelf{\mm}\circ \labelfLg{\mm}\circ \invf{(\nodes{\mm}\joinf \arrows{\mm})}(x) \subseteq \labelf{\mm}\circ \labelfKg{\mm}\circ \invf{(\nodes{\mm}\joinf \arrows{\mm})}(x)$.

Since $\RImg{G}{\nm}\gcap \mm(\Lg{\mm}) \subalg G\gcup H$ and $\mm(\Lg{\mm}) \subalg G\gcup H$ for all $\mm,\nm\in M$, then these two graphs have the same adjacencies as in $G\gcup H$, hence $\RImg{G}{\nm}\gcap \mm(\Lg{\mm}) \subalg \mm(\Lg{\mm})$ iff $\RImg{\nodes{G}}{\nm} \cap \nodes{\mm}(\nodesLg{\mm})\subseteq \nodes{\mm}(\nodesKg{\mm})$, $\RImg{\arrows{G}}{\nm} \cap \arrows{\mm}(\arrowsLg{\mm})\subseteq \arrows{\mm}(\arrowsKg{\mm})$ and $\RImg{\labelf{G}}{\nm}(x) \cap \labelf{\mm}\circ \labelfLg{\mm}\circ \invf{(\nodes{\mm}\joinf \arrows{\mm})}(x) \subseteq \labelf{\mm}\circ \labelfKg{\mm}\circ \invf{(\nodes{\mm}\joinf \arrows{\mm})}(x)$ for all $x$ as above. Hence $\PRgraphl{G}{M}=\PRgraphr{G}{M}$ iff $\forall \mm,\nm\in M$, $\RImg{G}{\nm}\gcap \mm(\Lg{\mm}) \subalg \mm(\Lg{\mm})$, that is, iff $M$ is regular since $\RImg{G}{\nm} = \liftm{\nm}(\Rg{\nm})$ by Definition \ref{def-RImg}.
\end{proof}
\begin{corollary}
   $\forall \mm,\nm\in\Matches{\R}{G}$, if $\nodes{\mm}(\nodesLg{\mm}) \cap \nodes{\nm}(\nodesLg{\nm}) = \ensvide$ then $\PRgraphr{G}{\set{\mm,\nm}}\isog \PRgraphr{(\PRgraphr{G}{\set{\mm}})}{\set{\iota\circ\nm}}$, where $\iota$ is the canonical matching of $\nm(\Lg{\nm})$ in $G_{\set{\mm}}$.
\end{corollary}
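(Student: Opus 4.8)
The plan is to reduce this $\PRgraphr{}{}$ version of property (5) to the $\PRgraphl{}{}$ version already proved in Theorem \ref{thm-prop356}, by showing that all the sets of matchings involved are \emph{regular} and then invoking Theorem \ref{thm-regularity}, which equates $\PRgraphl{}{}$ and $\PRgraphr{}{}$ exactly on regular sets.

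First I would prove that, under the hypothesis $\nodes{\mm}(\nodesLg{\mm}) \cap \nodes{\nm}(\nodesLg{\nm}) = \ensvide$, the set $M=\set{\mm,\nm}$ is regular. Since every matching preserves itself, only the two cross conditions ``$\mm$ preserves $\nm$'' and ``$\nm$ preserves $\mm$'' need checking, and by symmetry it suffices to treat $\liftm{\nm}(\Rg{\nm})\gcap \mm(\Lg{\mm})\subalg \mm(\Kg{\mm})$. I would argue that this meet is in fact the empty graph. The ``new'' vertices and arrows of $\liftm{\nm}(\Rg{\nm})=\RImg{G}{\nm}$, namely those indexed by $\nm$, lie outside $\carrier{G}$ and so cannot be shared with $\mm(\Lg{\mm})$, whose carrier is contained in $\carrier{G}$. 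The remaining vertices of $\RImg{G}{\nm}$ lie in $\nodes{\nm}(\nodesRg{\nm}\cap\nodesKg{\nm})\subseteq\nodes{\nm}(\nodesLg{\nm})$, which is disjoint from $\nodes{\mm}(\nodesLg{\mm})$ by hypothesis. Thus the meet has no vertices and no arrows, hence is the empty graph, which is trivially a $\Sigma$-subgraph of $\mm(\Kg{\mm})$ (the label condition being vacuous and both algebras being $\Termsig{\ensvide}$).

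Having established that $M$ is regular, Theorem \ref{thm-regularity} gives $\PRgraphr{G}{\set{\mm,\nm}}=\PRgraphl{G}{\set{\mm,\nm}}$. Theorem \ref{thm-prop356}(5) then yields $\PRgraphl{G}{\set{\mm,\nm}}\isog \PRgraphl{(\PRgraphl{G}{\set{\mm}})}{\set{\iota\circ\nm}}$, under the same hypothesis and with the same canonical matching $\iota$. Finally, the singletons $\set{\mm}$ and $\set{\iota\circ\nm}$ are regular, so a further application of Theorem \ref{thm-regularity} gives $\PRgraphl{G}{\set{\mm}}=\PRgraphr{G}{\set{\mm}}$ (so that $G_{\set{\mm}}$ is unambiguous and $\iota$ is indeed the matching used in Theorem \ref{thm-prop356}(5)) and $\PRgraphl{(\PRgraphr{G}{\set{\mm}})}{\set{\iota\circ\nm}}=\PRgraphr{(\PRgraphr{G}{\set{\mm}})}{\set{\iota\circ\nm}}$. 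Chaining these equalities with the isomorphism yields the claim $\PRgraphr{G}{\set{\mm,\nm}}\isog \PRgraphr{(\PRgraphr{G}{\set{\mm}})}{\set{\iota\circ\nm}}$.

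The routine bookkeeping aside, the point I expect to require the most care—and the only genuine obstacle—is the \emph{arrow} part of the regularity argument: the hypothesis asserts disjointness of the \emph{vertex} images only, so the disjointness of the relevant arrow images must be derived from the morphism equations $\leftf{G}\circ\arrows{\mm}=\nodes{\mm}\circ\leftfLg{\mm}$ and $\leftf{G}\circ\arrows{\nm}=\nodes{\nm}\circ\leftfLg{\nm}$. These force any arrow shared between $\arrows{\mm}(\arrowsLg{\mm})$ and $\arrows{\nm}(\arrowsLg{\nm})$ to have its source $\leftf{G}(f)$ lying in both $\nodes{\mm}(\nodesLg{\mm})$ and $\nodes{\nm}(\nodesLg{\nm})$, which is impossible; this closes the empty-graph argument and hence the whole reduction.
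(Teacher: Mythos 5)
Your proposal is correct and follows essentially the same route as the paper: establish regularity of $\set{\mm,\nm}$ by showing the two meets $\liftm{\nm}(\Rg{\nm})\gcap \mm(\Lg{\mm})$ and $\liftm{\mm}(\Rg{\mm})\gcap \nm(\Lg{\nm})$ are empty, then chain Theorem \ref{thm-regularity} with Theorem \ref{thm-prop356}(5). You supply more detail than the paper does (the arrow-disjointness argument via the adjacency equations, and the explicit singleton-regularity step justifying the final equalities), but the structure is identical.
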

\begin{proof}
  The set $\set{\mm,\nm}$ is regular since the graphs $\liftm{\nm}(\Rg{\nm})\gcap \mm(\Lg{\mm})$ and $\liftm{\mm}(\Rg{\mm})\gcap \nm(\Lg{\nm})$ are both empty, hence by Theorem \ref{thm-prop356}
\[\PRgraphr{G}{\set{\mm,\nm}} = \PRgraphl{G}{\set{\mm,\nm}}\isog \PRgraphl{(\PRgraphl{G}{\set{\mm}})}{\set{\iota\circ\nm}} = \PRgraphr{(\PRgraphr{G}{\set{\mm}})}{\set{\iota\circ\nm}}\]
\end{proof}
\begin{corollary}
   $\PRgraphl{G}{M}$ or $\PRgraphr{G}{M}$ fulfill properties \emph{(1)} to \emph{(6)} iff $M$ is regular.
\end{corollary}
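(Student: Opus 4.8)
The plan is to leverage Theorem~\ref{thm-regularity} together with the properties already secured for the two graphs. Recall that $\PRgraphl{G}{M}$ always fulfills (2) and $\PRgraphr{G}{M}$ always fulfills (1) (noted before Definition~\ref{def-parallel-rew}), that both fulfill (3), (4) and (6) (Theorem~\ref{thm-prop356} and the remark on property~(4)), that $\PRgraphl{G}{M}$ fulfills (5) by Theorem~\ref{thm-prop356}, and that $\PRgraphr{G}{M}$ fulfills (5) by the preceding corollary. Hence $\PRgraphl{G}{M}$ fulfills (1)--(6) iff it fulfills (1), and $\PRgraphr{G}{M}$ fulfills (1)--(6) iff it fulfills (2). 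The statement therefore reduces to proving that ``\,$\PRgraphl{G}{M}$ fulfills (1) or $\PRgraphr{G}{M}$ fulfills (2)\,'' is equivalent to the regularity of $M$.

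If $M$ is regular then $\PRgraphl{G}{M} = \PRgraphr{G}{M}$ by Theorem~\ref{thm-regularity}; this common graph fulfills (2) as $\PRgraphl{G}{M}$ and (1) as $\PRgraphr{G}{M}$, hence all of (1)--(6), so a fortiori one of the disjuncts holds. For the converse I would, in each disjunct, aim to establish the hypotheses of Lemma~\ref{lm-join-remove}(1) for $H = \Gcup_{\nm\in M}\RImg{G}{\nm}$, namely $V\cap\nodes{H} = A\cap\arrows{H} = \ensvide$ and $l(x)\cap\labelf{H}(x) = \ensvide$ for all $x$, since by that lemma these amount to $\PRgraphl{G}{M} = \PRgraphr{G}{M}$, i.e.\ to regularity by Theorem~\ref{thm-regularity}.

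Suppose first that $\PRgraphr{G}{M}$ fulfills (2). Because $\nodes{\mm}(\nodesLg{\mm}\setminus\nodesKg{\mm})\subseteq V$ and the $G$-part $\delgraph{G}{V}{A}{l}$ of $\PRgraphr{G}{M}$ has already had $V$, $A$ and the labels in $l$ removed, property~(2) can only be at risk through $H$; writing (2) out for each $\mm\in M$ and taking unions over $\mm$ gives precisely the three conditions above, so $M$ is regular. Suppose instead that $\PRgraphl{G}{M}$ fulfills (1), so that for each $\mm\in M$ there is a matching $\sigma$ of $\Rg{\mm}$ in $\PRgraphl{G}{M}$ joinable with $\mm$. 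On $\nodesRg{\mm}\cap\nodesKg{\mm} = \nodes{\Lg{\mm}\gcap\Rg{\mm}}$ (and likewise for arrows) joinability forces $\sigma$ to agree with $\mm$, hence with the canonical lift $\liftm{\mm}$ of Definition~\ref{def-RImg}; since $\sigma$ maps into $\PRgraphl{G}{M}$ and respects labels, this shows that the images $\nodes{\mm}(\nodesRg{\mm}\cap\nodesKg{\mm})$, the matched arrows, and the labels they carry in $\RImg{G}{\mm}$ are not deleted. The remaining, fresh elements of $\RImg{G}{\mm}$ (those indexed by $\mm$) lie outside $\carrier{G}$ and so are never touched by the removal of $V$, $A$, $l$; hence $\RImg{G}{\mm}\subalg\PRgraphl{G}{M}$. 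Taking the union over $\mm$ yields $H\subalg\PRgraphl{G}{M}$, whence $\PRgraphr{G}{M} = \delgraph{G}{V}{A}{l}\gcup H \subalg \PRgraphl{G}{M}\subalg\PRgraphr{G}{M}$, so the two graphs coincide and $M$ is regular.

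The delicate point is this last subcase. Property~(1) only asserts the \emph{existence} of some matching $\sigma$ joinable with $\mm$, and $\sigma$ may well send the elements of $\Rg{\mm}\setminus\Kg{\mm}$ to vertices and arrows other than the fresh ones used by $\liftm{\mm}$. The argument must therefore isolate the $\Kg{\mm}$-part, which is the only part of $\RImg{G}{\mm}$ that the deletions $V$, $A$, $l$ can reach, and use joinability to show that survival of $\sigma(\Rg{\mm})$ already forces survival of this part, the fresh part being safe by construction. The two implications resting on property~(2) and on regularity are, by contrast, routine bookkeeping on top of Theorem~\ref{thm-regularity}.
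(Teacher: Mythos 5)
Your proof is correct and follows essentially the same route as the paper's: reduce the statement to whether $\PRgraphl{G}{M}$ fulfills (1) or $\PRgraphr{G}{M}$ fulfills (2), and translate either disjunct into the hypotheses of Lemma~\ref{lm-join-remove}(1), hence into $\PRgraphl{G}{M}=\PRgraphr{G}{M}$ and regularity via Theorem~\ref{thm-regularity}. The one point where you go beyond the paper is the existential reading of property (1): the paper simply asserts $H\subalg\PRgraphl{G}{M}$ (implicitly taking the witness to be the canonical $\liftm{\mm}$ of Definition~\ref{def-RImg}), whereas your joinability argument shows that \emph{any} witness $\sigma$ already forces the survival of the $\Kg{\mm}$-part and hence of all of $\RImg{G}{\mm}$ --- a legitimate tightening of a step the paper leaves implicit.
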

\begin{proof}
  We consider the same $H$, $V$, $A$, $l$ as in the proof above. The if part is trivial. If $\PRgraphl{G}{M}$ fulfills property (1) then $H\subalg \PRgraphl{G}{M}$ which, by Lemma \ref{lm-subremove} and then Lemma \ref{lm-join-remove} (1) entails that $\PRgraphl{G}{M} = \PRgraphr{G}{M}$, hence that $M$ is regular. If $\PRgraphr{G}{M}$ fulfills property (2) then $\PRgraphr{\nodes{G}}{M}\cap V = \PRgraphr{\arrows{G}}{M}\cap A = \ensvide$ and $\PRgraphr{\labelf{G}}{M}(x)\cap l(x) = \ensvide$ for all $x\in \nodes{H}\cup\arrows{H}$. But $H\subalg \PRgraphr{G}{M}$ hence $\nodes{H}\cap V=\arrows{H}\cap A = \ensvide$ and $\labelf{H}(x)\cap l(x) = \ensvide$ for all $x\in \nodes{H}\cup\arrows{H}$, which again by Lemma \ref{lm-join-remove} (1) entails that $\PRgraphl{G}{M} = \PRgraphr{G}{M}$ and hence that $M$ is regular.
\end{proof}

Thus, if we want properties (1) to (6) to be fulfilled by either of the full parallel rewriting relation, then we have to restrict these relations to the cases where $\Matches{\R}{G}$ is regular, which makes them identical. Equivalently, we can ensure that all properties are fulfilled by checking that $V$, $A$ and $l$ do not intersect the graph $\Gcup_{\mm\in M}\RImg{G}{\mm}$.

Of course, we may want to spare the extra work since it should be
performed for each parallel rewriting step. It would therefore be
convenient to reduce regularity to particular properties of $\R$ or
$G$ that would be easier to check. One natural idea is to only allow
sets of rules $\R$ that ensure regularity of $\Matches{\R}{G}$ for all
graphs $G$ (as in \cite[Definition~12]{EchahedM17}). It is however easy to see that this would be a drastic restriction on $\R$: given any two rules $\tuple{L_i,K_i,R_i}$ for $i=1,2$ in $\R$ and any two vertices $x_i\in\nodes{L_i}$, it is possible to build a graph $G$ and matchings $\mm_i$ from $L_i$ in $G$ such that $\mm_1(x_1)=\mm_2(x_2)$ ($G$ is the quotient of the direct sum $L_1+L_2$ by the congruence $x_1\sim x_2$). Hence if $x_1$ can be chosen in $\nodes{L_1}\setminus\nodes{K_1}$ and $x_2$ in $\nodes{K_2}\cap \nodes{R_2}$, i.e., if these sets are not empty, then $\mm_1$ does not preserve $\mm_2$ and therefore $\Matches{\R}{G}$ is not regular. Hence regularity can be guaranteed for all graphs only if all rules have $\nodes{L}\setminus\nodes{K} = \ensvide$, or if they all have $\nodes{K}\cap \nodes{R}=\ensvide$ (and this is not even a sufficient condition).

If we cannot expect that a set of rules uniformly ensures regularity, we may still curtail regularity to a restricted class of graphs $\G$. Given $\G$, are we able to characterize which sets of rules $\R$ guarantee that $\Matches{\R}{G}$ is regular  and $\PRgraphr{G}{\Matches{\R}{G}} \in\G$ for all $G\in\G$? Given a set or rules $\R$, can we determine the biggest class $\G$ with the same property? It is dubious that these questions have a useful general answer, but they may be interesting in some particular context.

\section{Parallel Rewriting modulo Automorphisms}\label{sec-Aut}

Using the full set of matchings seems exaggerated in many cases, as
illustrated below.
\begin{example}
  Consider the following rule, where all labels are empty:
\begin{center}
\includegraphics[scale=.7,clip=true, trim=5cm 10cm 5cm 12.5cm]{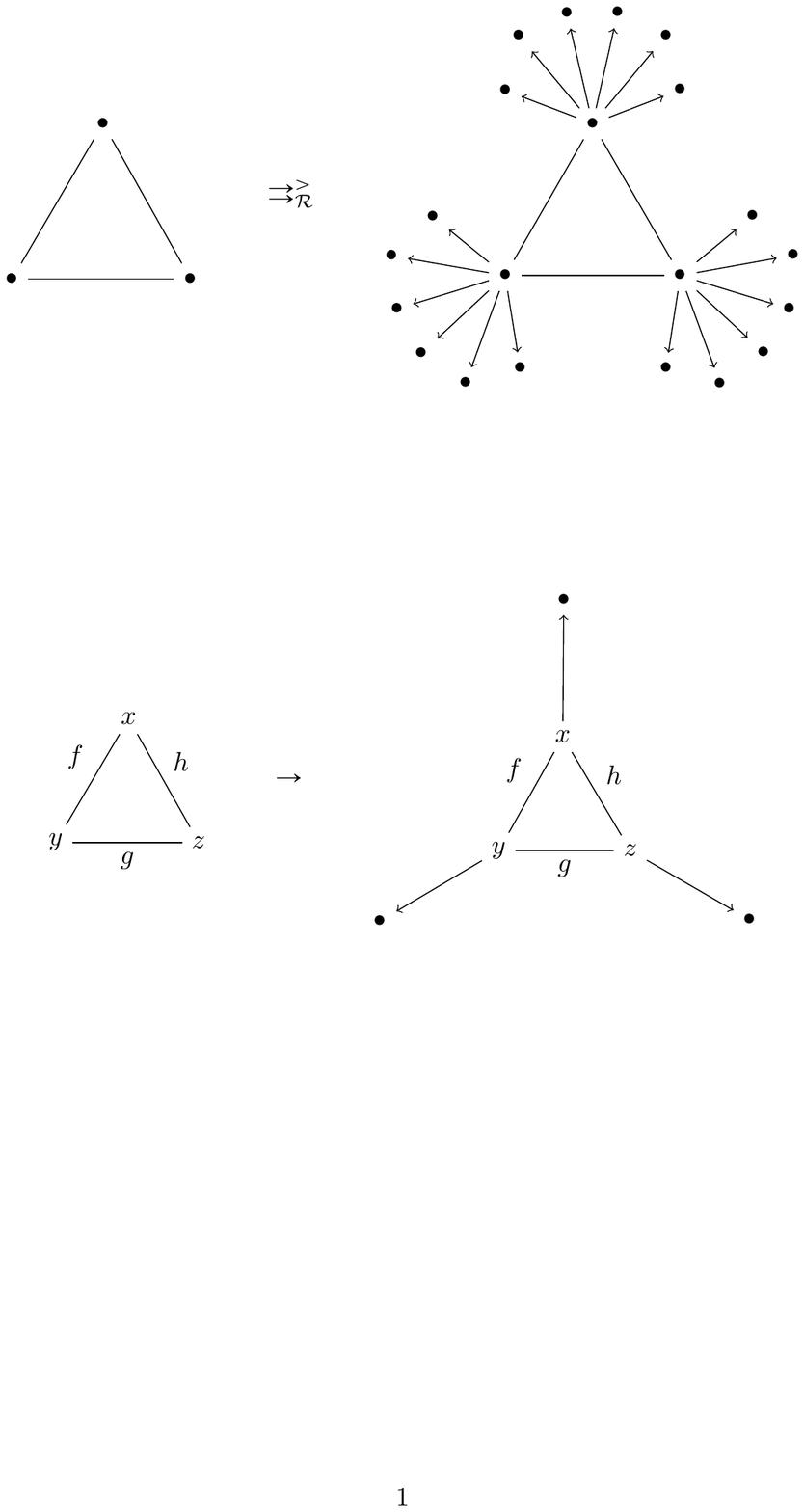}
\end{center}
here, each edge $f$, $g$, $h$ represents a pair of opposite
arrows. Obviously, this rule has six matches in a triangle, hence
\begin{center}
\includegraphics[scale=.7,clip=true, trim=5cm 17cm 5cm 4.5cm]{graphes.pdf}
\end{center}

\end{example}

We therefore wish to select a subset $M$ of $\Matches{\R}{G}$ for defining a rewriting relation that yields more natural and concise graphs. The difficulty is to maintain invariance of the result, i.e., to avoid an arbitrary choice of matchings. A key point is that we do not need to select $M$ in a deterministic way if we allow $\PRgraphr{G}{M}$ to be determined only up to isomorphism. That is, if we define a non deterministic procedure for computing $M\subseteq \Matches{\R}{G}$, and if we can ensure that for any other possible output $M'$, $\PRgraphl{G}{M'}$ is isomorphic to $\PRgraphl{G}{M}$ (or $\PRgraphr{G}{M'}$ to $\PRgraphr{G}{M}$), then the corresponding rewriting relation is deterministic up to isomorphism: it yields some undetermined element of a determined (by $\R$ and $G$) isomorphism class of graphs. 

\begin{definition}[group $\Aut{}{r}$, relation $\eqaut$]\label{def-Aut-rule}
For any rule $r=\tuple{L,K,R}$, the \emph{automorphism group} of $r$ is \[\Aut{}{r} \defeq \restrf{\Aut{L\gcup R}{L,K,R}}{L}.\]
Let $\eqaut$ be the equivalence relation on $\Matches{\R}{G}$ defined by \[\mm\eqaut\nm\text{ iff } \mm\circ\Aut{}{\rulem{\mm}} = \nm\circ\Aut{}{\rulem{\nm}}.\] The equivalence class of $\mu$ is denoted $\eqclass{\mm}{\eqaut}$. For any subset $M\subseteq  \Matches{\R}{G}$ we write $\quotient{M}{\eqaut}$ for the set $\setof{\eqclass{\mm}{\eqaut}}{\mm\in M}$ (note that $\bigcup (\quotient{M}{\eqaut})$ is a superset but may not be a subset of $M$).
\end{definition}

\begin{lemma}\label{lm-eqaut}
  $\forall \mm\in\Matches{\R}{G}$, $\eqclass{\mm}{\eqaut} = \mm\circ\Aut{}{\rulem{\mm}}$.
\end{lemma}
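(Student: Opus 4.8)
The plan is to prove the set equality by two inclusions, writing $r = \rulem{\mm} = \tuple{L,K,R}$ throughout. The reverse inclusion $\eqclass{\mm}{\eqaut} \subseteq \mm \circ \Aut{}{r}$ is immediate from the definitions: if $\nm \in \eqclass{\mm}{\eqaut}$ then $\mm \eqaut \nm$, i.e.\ $\mm \circ \Aut{}{r} = \nm \circ \Aut{}{\rulem{\nm}}$; since $\Aut{}{\rulem{\nm}}$ is a group it contains its identity $\Idm{\Lg{\nm}}$, so $\nm = \nm \circ \Idm{\Lg{\nm}}$ lies in $\nm \circ \Aut{}{\rulem{\nm}} = \mm \circ \Aut{}{r}$. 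Hence all the real work is in the forward inclusion $\mm \circ \Aut{}{r} \subseteq \eqclass{\mm}{\eqaut}$.

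For the forward inclusion, fix $\sigma \in \Aut{}{r}$ and set $\nm \defeq \mm \circ \sigma$; I must show $\nm \in \Matches{\R}{G}$ with $\rulem{\nm} = r$, and that $\mm \eqaut \nm$. First I would record the structure of $\sigma$: by Definition \ref{def-Aut-rule}, $\sigma = \restrf{\tau}{L}$ for some $\tau \in \Aut{L \gcup R}{L,K,R}$, and since $\tau(L)=L$ and $\tau(K)=K$ with $\carrier{K} \subseteq \carrier{L}$, the restriction $\sigma$ is an automorphism of $L$ that additionally satisfies $\sigma(K)=K$. Consequently $\nodes{\sigma}$ and $\arrows{\sigma}$ are bijections, so $\nodes{\nm} = \nodes{\mm} \circ \nodes{\sigma}$ and $\arrows{\nm} = \arrows{\mm} \circ \arrows{\sigma}$ are injective, and $\nm$, being the composite of the morphism $\sigma\colon L \to L$ with the matching $\mm$, is a matching of $L$ in $G$.

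The main obstacle is the label-separation condition of Definition \ref{def-rules} for $\nm$. Here I would use that $\labelf{\sigma}$ is a $\Sigma$-automorphism, hence bijective, and that $\sigma$, being an isomorphism of $L$ and (since $\sigma(K)=K$) of $K$, satisfies $\labelf{\sigma}\circ\labelf{L} = \labelf{L}\circ\restrf{\sigma}{\nodes{L}\cup\arrows{L}}$ and $\labelf{\sigma}\circ\labelf{K} = \labelf{K}\circ\restrf{\sigma}{\nodes{K}\cup\arrows{K}}$. For $x \in \nodes{K}\cup\arrows{K}$, setting $x' = \sigma(x)$ and using bijectivity of $\labelf{\sigma}$ to push it through the set difference, these identities yield $\labelf{\nm}(\labelf{L}(x)\setminus\labelf{K}(x)) = \labelf{\mm}(\labelf{L}(x')\setminus\labelf{K}(x'))$ and $\labelf{\nm}(\labelf{K}(x)) = \labelf{\mm}(\labelf{K}(x'))$. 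Since $\sigma(K)=K$, as $x$ ranges over $\nodes{K}\cup\arrows{K}$ so does $x'$, so the required disjointness for $\nm$ at $x$ is exactly the matching condition for $\mm$ at $x'$, which holds by hypothesis. Thus $\nm \in \Matches{r}{G}$, and by the uniqueness of the rule associated to a matching in Definition \ref{def-rules} we get $\rulem{\nm}=r$.

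Finally, $\mm \eqaut \nm$ reduces to a one-line group argument: since $\sigma \in \Aut{}{r}$ and $\Aut{}{r}$ is a group, $\sigma \circ \Aut{}{r} = \Aut{}{r}$, whence $\nm \circ \Aut{}{\rulem{\nm}} = \mm \circ \sigma \circ \Aut{}{r} = \mm \circ \Aut{}{r}$, which is precisely $\mm \eqaut \nm$. This places $\nm$ in $\eqclass{\mm}{\eqaut}$ and completes the forward inclusion. The only delicate point anywhere in the argument is the label condition; all the rest follows from bijectivity of $\sigma$ together with the fact that $\Aut{}{r}$ is a group absorbing its own elements.
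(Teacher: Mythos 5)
Your proof is correct and follows essentially the same route as the paper's: the easy inclusion via the identity element of $\Aut{}{\rulem{\nm}}$, and the converse by showing that $\nm=\mm\circ\sigma$ is again a matching of the same rule followed by a group-theoretic computation. You are in fact slightly more careful than the paper at one point — you explicitly verify the label-separation condition of Definition~\ref{def-rules} for $\mm\circ\sigma$ (using $\sigma(K)=K$ and the bijectivity of $\labelf{\sigma}$ to push it through the set difference), a step the paper's proof passes over silently when it asserts that $\nm$ ``is also a matching of $\Lg{\nm}$'' — and your closing coset identity $\sigma\circ\Aut{}{r}=\Aut{}{r}$ replaces the paper's two-sided inclusion argument with a one-liner.
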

\begin{proof}
  If $\nm\in\eqclass{\mm}{\eqaut}$ then $\nm\circ\Aut{}{\rulem{\nm}} = \mm\circ\Aut{}{\rulem{\mm}}$ and in particular $\nm\in \mm\circ\Aut{}{\rulem{\mm}}$.

Conversely we assume that $\nm\in \mm\circ\Aut{}{\rulem{\mm}}$. From the above we see that $\Aut{}{\rulem{\mm}}$ is a subgroup of $\restrf{\Aut{\Lg{\mm}\gcup \Rg{\mm}}{\Lg{\mm}}}{\Lg{\mm}} = \Aut{}{\Lg{\mm}}$, hence $\nm$ is a matching of $\Lg{\mm}$ in $G$. But $\nm$ is also a matching of $\Lg{\nm}$ in $G$, which according to our convention on $\R$ given in Definition~\ref{def-rules} entails that $\rulem{\mm}=\rulem{\nm}$, and hence that $\Aut{}{\rulem{\mm}} = \Aut{}{\rulem{\nm}}$. Hence $\nm\circ\Aut{}{\rulem{\nm}} = \nm\circ\Aut{}{\rulem{\mm}} \subseteq \mm\circ\Aut{}{\rulem{\mm}}\circ\Aut{}{\rulem{\mm}} = \mm\circ\Aut{}{\rulem{\mm}}$. But there is a $\sigma\in \Aut{}{\rulem{\mm}}$ such that $\nm=\mm\circ\sigma$, hence such that $\mm=\nm\circ\invf{\sigma}$, which entails $\mm\in \nm\circ\Aut{}{\rulem{\mm}}$. Hence by symmetry we also have $\mm\circ\Aut{}{\rulem{\mm}} \subseteq \nm\circ\Aut{}{\rulem{\nm}}$, which proves that they are equal and hence that $\mm\eqaut \nm$ and therefore $\nm\in\eqclass{\mm}{\eqaut}$.
\end{proof}

Note that $\card{\eqclass{\mm}{\eqaut}} \leq \card{\Aut{}{\rulem{\mm}}}$ and that the equality holds if $\mm$ is injective. The more symmetric a rule is, the more matchings are likely to occur in the equivalence classes of matchings of this rule. If we could choose only one matching per equivalence class in $\Matches{\R}{G}$, we could obtain a much more concise output graph than with the rewriting relations $\fullPRl{\R}$ and $\fullPRr{\R}$ of Section \ref{sec-full-rew}. The problem of course is that selecting one matching among others may prevent the rewriting relation from being invariant. But the definition of the automorphism groups of rules has been krafted precisely so that the isomorphism class of the output graph does not depend on the choice of elements in the induced equivalence classes, which we are now in a position to prove.

\begin{theorem}
  For any $\M\in\Matches{\R}{G}$ and any minimal sets $M,N$ such that $\quotient{\M}{\eqaut} = \quotient{M}{\eqaut} = \quotient{N}{\eqaut}$, we have $\PRgraphl{G}{M}\isog\PRgraphl{G}{N}$, $\PRgraphr{G}{M}\isog\PRgraphr{G}{N}$ and $M$ is regular iff $N$ is regular.
\end{theorem}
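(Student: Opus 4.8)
The plan is to reduce the statement to a single ``swap'' of one matching and then to exhibit an explicit isomorphism built from the rule automorphism witnessing that swap. Since $M$ and $N$ are minimal with $\quotient{M}{\eqaut}=\quotient{N}{\eqaut}=\quotient{\M}{\eqaut}$, each is a transversal of the classes of $\quotient{\M}{\eqaut}$, i.e.\ it contains exactly one matching from every class. Writing these classes as $C_1,\dots,C_k$ and $M=\set{\mm_1,\dots,\mm_k}$, $N=\set{\nm_1,\dots,\nm_k}$ with $\mm_i,\nm_i\in C_i$, I would interpolate through the transversals $M_j=\set{\nm_1,\dots,\nm_j,\mm_{j+1},\dots,\mm_k}$, so that $M_0=M$, $M_k=N$, and $M_{j-1}$, $M_j$ differ by a single element. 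It thus suffices to prove the three conclusions when $N=(M\sgminus\set{\mm})\cup\set{\nm}$ with $\nm\in\eqclass{\mm}{\eqaut}$; by Lemma~\ref{lm-eqaut} we then have $\nm=\mm\circ\sigma$ for some $\sigma\in\Aut{}{\rulem{\mm}}$, and $\rulem{\nm}=\rulem{\mm}$, so $\mm$ and $\nm$ use the same $L$, $K$, $R$.

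By Definition~\ref{def-Aut-rule} I may lift $\sigma$ to some $\tau\in\Aut{L\gcup R}{L,K,R}$ with $\restrf{\tau}{L}=\sigma$, and set $\rho\defeq\restrf{\tau}{R}\in\Aut{}{R}$. The decisive point, and the reason $\Aut{}{r}$ is defined through $\Aut{L\gcup R}{L,K,R}$ rather than through $\Aut{}{L}$ alone, is that $\sigma$ and $\rho$ are restrictions of one $\tau$ stabilising $L$, $K$ and $R$ at once; in particular $\sigma$ and $\rho$ agree on $\carrier{L\gcap R}=\carrier{R\gcap K}$, and $\labelf{\sigma}=\labelf{\tau}=\labelf{\rho}$. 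I would first record that the deletion data is invariant. Since $\sigma$ permutes $\nodesLg{\mm}$ and fixes $\nodesKg{\mm}$ setwise, it stabilises $\nodesLg{\mm}\setminus\nodesKg{\mm}$ (and likewise for arrows), so the sets $V$ and $A$ of Definition~\ref{def-parallel-rew} computed from $M$ and from $N$ coincide; and as $\sigma$ is a $\Sigma$-automorphism of $L$ stabilising $K$ one has $\labelf{\sigma}\circ(\labelfLg{\mm}\setminus\labelfKg{\mm})=(\labelfLg{\mm}\setminus\labelfKg{\mm})\circ(\nodes{\sigma}\joinf\arrows{\sigma})$, which together with $\invf{(\nodes{\nm}\joinf\arrows{\nm})}=\invf{(\nodes{\sigma}\joinf\arrows{\sigma})}\circ\invf{(\nodes{\mm}\joinf\arrows{\mm})}$ shows that $l$ is unchanged as well.

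Next I would glue an isomorphism. Recall from Definition~\ref{def-RImg} that $\liftm{\nodes{\mm}}$, $\liftm{\arrows{\mm}}$, $\liftm{\nodes{\nm}}$, $\liftm{\arrows{\nm}}$ are bijections; define $\psi\colon\RImg{G}{\mm}\to\RImg{G}{\nm}$ to be the identity on the shared label algebra $\Termsig{\ensvide}$ and, on vertices and arrows, the bijections $\nodes{\psi}=\liftm{\nodes{\nm}}\circ\invf{\nodes{\rho}}\circ\invf{\liftm{\nodes{\mm}}}$ and $\arrows{\psi}=\liftm{\arrows{\nm}}\circ\invf{\arrows{\rho}}\circ\invf{\liftm{\arrows{\mm}}}$. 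That $\psi$ is an isomorphism will follow from $\rho\in\Aut{}{R}$ together with $\labelf{\nm}=\labelf{\mm}\circ\labelf{\sigma}$ and $\labelf{\rho}=\labelf{\sigma}$, which make the two substitutions cancel so that the label part may indeed be taken to be the identity. Moreover, for $v=\nodes{\mm}(x)$ with $x\in\nodesRg{\mm}\cap\nodesKg{\mm}$ one computes $\nodes{\psi}(v)=\liftm{\nodes{\nm}}(\invf{\nodes{\tau}}(x))=\nodes{\nm}(\invf{\nodes{\tau}}(x))=\nodes{\mm}(\nodes{\sigma}(\invf{\nodes{\tau}}(x)))=\nodes{\mm}(x)=v$, using that $\invf{\nodes{\tau}}(x)$ stays in $\nodesRg{\mm}\cap\nodesKg{\mm}\subseteq\nodesLg{\mm}$ where $\sigma$ and $\tau$ agree; the same holds for arrows, so $\psi$ is the identity on $G\gcap\RImg{G}{\mm}=\mm(R\gcap K)$ (Lemma~\ref{lm-join}), which by the same computation equals $\nm(R\gcap K)=G\gcap\RImg{G}{\nm}$. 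I would then join $\psi$ with $\Idm{G}$ and with $\Idm{\RImg{G}{\mm'}}$ for each $\mm'\in M\sgminus\set{\mm}$: the pairwise overlaps of $G$ and the graphs $\RImg{G}{\mm'}$ all lie inside $G$ (Lemma~\ref{lm-join}, Corollary~\ref{cor-join}), where every piece is the identity, so the joinability and surjectivity hypotheses of Corollary~\ref{cr-isojoins} hold and yield an isomorphism $\Phi$ from $G\gcup\Gcup_{\mm'\in M}\RImg{G}{\mm'}$ to $G\gcup\Gcup_{\nm'\in N}\RImg{G}{\nm'}$ that fixes $\carrier{G}$.

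Finally I would conclude. Since $\Phi$ fixes $G$ and $V$, $A$, $l$ are the same for $M$ and $N$, Lemma~\ref{lm-join-remove}~(2) gives $\Phi(\PRgraphl{G}{M})=\PRgraphl{G}{N}$; applying $\Phi$ to the two joinable pieces $\delgraph{G}{V}{A}{l}$ and $\Gcup_{\mm'\in M}\RImg{G}{\mm'}$ of $\PRgraphr{G}{M}$ (Lemma~\ref{lm-isojoin}) gives $\Phi(\PRgraphr{G}{M})=\PRgraphr{G}{N}$, whence both isomorphisms. For the regularity claim, $\PRgraphl{G}{M}$ and $\PRgraphr{G}{M}$ are $\Sigma$-subgraphs of the common source of $\Phi$ and $\Phi$ is a bijection carrying them to $\PRgraphl{G}{N}$ and $\PRgraphr{G}{N}$, so $\PRgraphl{G}{M}=\PRgraphr{G}{M}$ iff $\PRgraphl{G}{N}=\PRgraphr{G}{N}$; by Theorem~\ref{thm-regularity} this is exactly ``$M$ is regular iff $N$ is regular''. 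The main obstacle is the pair of compatibility checks in the middle paragraphs: that $\rho$ and $\sigma$ cohere on $L\gcap R$ so that $\psi$ glues to the identity on $G$, and that the labelling $l$ is genuinely invariant; both rest on the single joint lift $\tau$ stabilising $L$, $K$ and $R$ simultaneously, which is precisely what the definition of $\Aut{}{r}$ secures.
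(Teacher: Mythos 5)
Your proof is correct and follows essentially the same route as the paper's: the same isomorphism $\liftm{\nm}\circ\invf{\rho}\circ\invf{\liftm{\mm}}$ built from a joint lift $\tau\in\Aut{L\gcup R}{L,K,R}$, the same verification that it restricts to the identity on $G\gcap\RImg{G}{\mm}$ and that $V$, $A$ and $l$ are unchanged, and the same gluing via Corollary~\ref{cr-isojoins}, Lemma~\ref{lm-isojoin} and Lemma~\ref{lm-join-remove}~(2). The only cosmetic differences are your reduction to a single swap by interpolating transversals (the paper handles the whole family $(\beta_{\mm})_{\mm\in M}$ simultaneously) and your slightly more direct finish for the regularity transfer, where equality of the two rewritten graphs is carried by one bijection rather than deduced from $\PRgraphl{G}{N}\subalg\PRgraphr{G}{N}$ and finiteness as in the paper.
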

\begin{proof}
  Since $M$ and $N$ are minimal there is a bijection $\iota$ from $M$ to $N$ such that, to every $\mm\in M$ corresponds a unique $\inN{\mm}\in N$ such that $\inN{\mm}\eqaut\mm$, hence by Lemma \ref{lm-eqaut} there is a  $\sigmA{\mm}\in\Aut{}{\rulem{\mm}}$ such that $\inN{\mm}=\mm\circ\sigmA{\mm}$, and then a $\taU{\mm}\in \Aut{\Lg{\mm}\gcup \Rg{\mm}}{\Lg{\mm},\Kg{\mm},\Rg{\mm}}$ such that $\sigmA{\mm}=\restrf{\taU{\mm}}{\Lg{\mm}}$; we let $\rhO{\mm}=\restrf{\taU{\mm}}{\Rg{\mm}}$ so that $\rhO{\mm}\in\Aut{}{\Rg{\mm}}$. Let \[\beta_{\mm} = (\liftm{\nodesinN{\mm}}\circ \invf{\nodesrhO{\mm}}\circ \invf{\liftm{\nodes{\mm}}}) \joinf (\liftm{\arrowsinN{\mm}}\circ \invf{\arrowsrhO{\mm}}\circ \invf{\liftm{\arrows{\mm}}}) \joinf \Id{\Termsig{\ensvide}},\] we now prove that this is an isomorphism from $\RImg{G}{\mm}$ to $\RImg{G}{\inN{\mm}}$. It is obvious that $\nodes{\beta}_{\mm}$ and $\arrows{\beta}_{\mm}$ are bijective functions and that $\labelf{\beta}_{\mm}$ is a $\Sigma$-isomorphism. We then see that
\[  \begin{array}{rclr}
    \nodes{\beta_{\mm}}\circ \RImg{\leftf{G}}{\mm} &=& (\liftm{\nodesinN{\mm}}\circ \invf{\nodesrhO{\mm}}\circ \invf{\liftm{\nodes{\mm}}})\circ (\liftm{\nodes{\mm}}\circ \leftfRg{\mm} \circ \invf{\arrows{\mm}}) & \text{ (by Definition \ref{def-RImg})}\\
    &=& \liftm{\nodesinN{\mm}}\circ \leftfRg{\mm} \circ \invf{\arrowsrhO{\mm}}\circ \invf{\arrows{\mm}} & \text{ (since $\invf{\rhO{\mm}}\in \Aut{}{\Rg{\mm}}$)}\\
    &=& \liftm{\nodesinN{\mm}}\circ \leftfRg{\mm} \circ \invf{\liftm{\arrowsinN{\mm}}}\circ \arrows{\beta}_{\mm} &\\
    &=& \RImg{\leftf{G}}{\inN{\mm}}\circ \arrows{\beta}_{\mm} &\text{ (since $\rulem{\inN{\mm}} = \rulem{\mm}$),}
  \end{array}\]
and similarly that $\nodes{\beta_{\mm}}\circ \RImg{\rightf{G}}{\mm} = \RImg{\rightf{G}}{\inN{\mm}}\circ \arrows{\beta}_{\mm}$. We finally see that
\begin{eqnarray*}
  \RImg{\labelf{G}}{\inN{\mm}}\circ (\nodes{\beta}_{\mm} \joinf \arrows{\beta}_{\mm}) &=& \labelfinN{\mm} \circ \labelfRg{\mm} \circ \invf{(\liftm{\nodesinN{\mm}} \joinf \liftm{\arrowsinN{\mm}})} \circ (\nodes{\beta}_{\mm} \joinf \arrows{\beta}_{\mm})\\
  &=& \labelfinN{\mm} \circ \labelfRg{\mm} \circ \invf{(\nodesrhO{\mm} \joinf \arrowsrhO{\mm})} \circ \invf{(\liftm{\nodes{\mm}} \joinf \liftm{\arrows{\mm}})}\\ 
  &=& \labelfinN{\mm} \circ \invf{\labelfrhO{\mm}}\circ \labelfRg{\mm} \circ \invf{(\liftm{\nodes{\mm}} \joinf \liftm{\arrows{\mm}})}\\
  &=& \labelfinN{\mm} \circ \invf{\labelfsigmA{\mm}}\circ \labelfRg{\mm} \circ \invf{(\liftm{\nodes{\mm}} \joinf \liftm{\arrows{\mm}})} \text{ (since $\labelfrhO{\mm} = \labelftaU{\mm} = \labelfsigmA{\mm}$)}\\
  &=& \labelf{\mm}\circ \labelfRg{\mm} \circ \invf{(\liftm{\nodes{\mm}} \joinf \liftm{\arrows{\mm}})}\\
  & = & \RImg{\labelf{G}}{\mm}\\
  & = & \labelf{\beta}_{\mm}\circ \RImg{\labelf{G}}{\mm},
\end{eqnarray*}
which shows that $\beta_{\mm}$ is an isomorphism. We next prove that this isomorphism reduces to the identity on $G\gcap \RImg{G}{\mm}$.

For all $y\in \nodes{G}\cap\RImg{\nodes{G}}{\mm}$, since by Lemma \ref{lm-join} we have $G\gcap \RImg{G}{\mm} \subg \mm(\Rg{\mm}\gcap \Kg{\mm})$, then there is a $x\in \nodesRg{\mm}\cap\nodesKg{\mm}$ such that $y = \nodes{\mm}(x) = \liftm{\nodes{\mm}}(x)$. But by Corollary \ref{cr-groupmeet} we have $\sigmA{\mm}(\Rg{\mm}\gcap \Kg{\mm}) = \Rg{\mm}\gcap \Kg{\mm}\subalg \Lg{\mm}$ hence $\invf{\nodesrhO{\mm}}(x)=\invf{\nodesigmA{\mm}}(x)$ belongs to $\nodesRg{\mm}\cap \nodesKg{\mm}$, and we get \[\nodes{\bm}_{\mm}(y) = \liftm{\nodesinN{\mm}}\circ\invf{\nodesrhO{\mm}}\circ\invf{\liftm{\nodes{\mm}}}(y) = \liftm{\nodesinN{\mm}}\circ\invf{\nodesrhO{\mm}}(x) = \nodesinN{\mm}\circ\invf{\nodesigmA{\mm}}(x) = \nodes{\mm}(x)=y.\]
We similarly get $\arrows{\bm}_{\mm}(g) = g$ for all $g\in \arrows{G}\cap\RImg{\arrows{G}}{\mm}$, which proves that $\Idm{G}\meetf\bm_{\mm} = \Idm{G\gcap \RImg{G}{\mm}} = \Idm{G}\meetf\Idm{\RImg{G}{\mm}}$ and hence that $G\gcap \RImg{G}{\mm} = G\gcap \RImg{G}{\inN{\mm}}$. 

For all $\nm\in M$, if $\nm\neq\mm$ then $\RImg{G}{\mm}\gcap \RImg{G}{\nm}\subalg G$ hence $\bm_{\mm}\meetf \bm_{\nm} = \Idm{\RImg{G}{\mm}\gcap \RImg{G}{\nm}}$ and hence $\RImg{G}{\mm}\gcap \RImg{G}{\nm} = \RImg{G}{\inN{\mm}}\gcap \RImg{G}{\inN{\nm}}$, so if we let $\bm=\bigjoinf_{\mm\in M}\bm_{\mm}$, then this is an isomorphism from $\Gcup_{\mm\in M}\RImg{G}{\mm}$ to $\Gcup_{\mm\in M}\RImg{G}{\inN{\mm}}$ by Corollary \ref{cr-isojoins}. But then
\[\Idm{G}\meetf\bm = \bigjoinf_{\mm\in M}(\Idm{G}\meetf\bm_{\mm}) = \bigjoinf_{\mm\in M}(\Idm{G}\meetf\Idm{\RImg{G}{\mm}}) = \Idm{G}\joinf \Idm{\Gcup_{\mm\in M}\RImg{G}{\mm}} = \Idm{G\gcap\Gcup_{\mm\in M}\RImg{G}{\mm}}\]
and, by Lemma \ref{lm-isojoin}, $\delta = \Idm{G}\joinf\bm$ is an isomorphism from $G\gcup\Gcup_{\mm\in M}\RImg{G}{\mm}$ to $G\gcup\Gcup_{\mm\in M}\RImg{G}{\inN{\mm}}$. We also have by Definition \ref{def-Aut-rule} that, for all $\mm\in M$, $\sigmA{\mm}(\Lg{\mm})=\Lg{\mm}$ and $\sigmA{\mm}(\Kg{\mm})=\Kg{\mm}$, hence
\[\nodesinN{\mm}(\nodesLg{\mm}\setminus \nodesKg{\mm}) = \nodes{\mm}\circ\nodes{\sigmA{\mm}}(\nodesLg{\mm})\setminus \nodes{\mm}\circ\nodesigmA{\mm}(\nodesKg{\mm}) = \nodes{\mm}(\nodesLg{\mm})\setminus \nodes{\mm}(\nodesKg{\mm}) = \nodes{\mm}(\nodesLg{\mm}\setminus \nodesKg{\mm}),\]
similarly $\arrowsinN{\mm}(\arrowsLg{\mm}\setminus \arrowsKg{\mm}) =  \arrows{\mm}(\arrowsLg{\mm}\setminus \arrowsKg{\mm})$ and 
\begin{eqnarray*}
\labelfinN{\mm}\circ (\labelfLg{\mm}\setminus \labelfKg{\mm}) \circ \invf{(\nodesinN{\mm} \joinf \arrowsinN{\mm})} &= & 
\labelf{\mm}\circ \big((\labelfsigmA{\mm}\circ\labelfLg{\mm})\setminus (\labelfsigmA{\mm}\circ\labelfKg{\mm})\big) \circ \invf{(\nodesinN{\mm} \joinf \arrowsinN{\mm})} \\
 & = & \labelf{\mm}\circ (\labelfLg{\mm}\setminus \labelfKg{\mm})\circ (\nodesigmA{\mm}\joinf \arrowsigmA{\mm}) \circ \invf{(\nodesinN{\mm} \joinf \arrowsinN{\mm})} \\
 & = & \labelf{\mm}\circ (\labelfLg{\mm}\setminus \labelfKg{\mm})\circ \invf{(\nodes{\mm} \joinf \arrows{\mm})} 
\end{eqnarray*}
hence if we let $V$, $A$ and $l$ as in Definition \ref{def-parallel-rew} for $M$, then they also work for $N$, i.e., we have \[V = \bigcup_{\mm\in M}\nodes{\mm}(\nodesLg{\mm}\setminus \nodesKg{\mm}) = \bigcup_{\mm\in M}\nodesinN{\mm}(\nodesLg{\mm}\setminus \nodesKg{\mm}) = \bigcup_{\nm\in N}\nodes{\nm}(\nodesLg{\nm}\setminus \nodesKg{\nm}),\] and similarly for $A$ and $l$. Besides, they are sets of vertices or arrows, or labelling function of $G$, hence by Lemma \ref{lm-join-remove} (2)
\begin{eqnarray*}
\delta(\PRgraphl{G}{M}) &=& \delgraph{\delta(G\gcup\Gcup_{\mm\in M}\RImg{G}{\mm})}{\nodes{\delta}(V)}{\arrows{\delta}(A)}{\labelf{\delta}\circ l\circ \invf{(\nodes{\delta}\joinf \arrows{\delta})}}\\
&=& \delgraph{\big(G\gcup\Gcup_{\mm\in M}\RImg{G}{\inN{\mm}} \big)}{\Id{\nodes{G}}(V)}{\Id{\arrows{G}}(A)}{\Id{\Termsig{\ensvide}}\circ l\circ \invf{(\Id{\nodes{G}}\joinf \Id{\arrows{G}})}}\\
&=& \delgraph{\big(G\gcup\Gcup_{\mm\in M}\RImg{G}{\inN{\mm}}\big)}{V}{A}{l}\\
&=& \PRgraphl{G}{N},
\end{eqnarray*}
hence $\delta$ is an isomorphism from $\PRgraphl{G}{M}$ to $\PRgraphl{G}{N}$.
Next, we let $G'=\delgraph{G}{V}{A}{l}$ so that $\PRgraphr{G}{M} = G'\gcup \Gcup_{\mm\in M}\RImg{G}{\mm}$ and $\PRgraphr{G}{N} = G'\gcup \Gcup_{\mm\in M}\RImg{G}{\inN{\mm}}$. Since $G'\subalg G$ then \[\Idm{G'}\meetf\bm = \Idm{G'}\meetf(\Idm{G}\meetf\bm) = \Idm{G'}\meetf \Idm{G\gcap\Gcup_{\mm\in M}\RImg{G}{\mm}} = \Idm{G'\gcap\Gcup_{\mm\in M}\RImg{G}{\mm}}\] hence, by Lemma \ref{lm-isojoin}, $\Idm{G'}\joinf\bm$ is an isomorphism from $\PRgraphr{G}{M}$ to $\PRgraphr{G}{N}$.

If $M$ is regular then $\PRgraphl{G}{M} = \PRgraphr{G}{M}$ by Theorem \ref{thm-regularity}, hence $\PRgraphl{G}{N}\isog \PRgraphr{G}{N}$. Since $\PRgraphl{G}{N}\subalg \PRgraphr{G}{N}$ and these graphs are finite, it is obvious that $\PRgraphl{G}{N}= \PRgraphr{G}{N}$, hence $N$ is regular by Theorem \ref{thm-regularity}.
\end{proof}

\begin{definition}[parallel rewriting modulo automorphisms]
For any finite set of rules $\R$, we define two relations $\PRautl{\R}$ and $\PRautr{\R}$ of \emph{parallel rewriting modulo automorphisms} between $\Sigma$-graphs by, for all $G$,
\[G\PRautl{\R} \PRgraphl{G}{M}\ \text{ and }\ G\PRautr{\R} \PRgraphr{G}{M}\]
where $M$ is any minimal set such that $\quotient{M}{\eqaut} = \quotient{\Matches{\R}{G}}{\eqaut}$.
\end{definition}
 
It is obvious from Section \ref{sec-regularity} that the two relations
are identical up to isomorphisms iff $M$ is regular. A definition of
port-graph parallel rewriting up to automorphisms has been introduced
in \cite[Definition~20]{EchahedM17}, but it was limited to
so-called symmetric rules, while our definition is more general since
it applies to all rules without restrictions.  

\section{An Example: Conway's Game of Life} \label{sec-example}

Conway's game of life \cite{Gardner70} is played on a
square grid where cells live or die according to their number of
living neighbours in the grid, following 3 deterministic rules (given
below).  These are intrinsically parallel in the sense that ``all
births and deaths occur \emph{simultaneously}'' \cite{Gardner70}, even
though the neighborhoods overlap: one cannot apply the rules sequentially and
expect the result to be independent of the order in which they are
applied, without using some trick (one was proposed by Conway, see
\cite{Gardner70}). It is customary to use the states of the current
generation to compute the states of the next generation, which
requires to keep two bits of information in each cell, rather than one
(alive or dead). It would be quite unnatural to formalize such tricks
with a sequential rewrite system.

In contrast, the fact that we can compute all possible matchings, even
when they overlap, before they are all used to simulatneously apply the
transformations specified by their respective rules, makes it very
natural to represent the computation of the next generation as a
single step of parallel rewriting. This is what we now illustrate.

\begin{itemize}
\item We start with the rule of death by overpopulation, which states that a
  living cell dies if it has at least 4 living neighbours. In other
  words, a cell dies if its neighborhood contains a subgraph of 4
  living cells, which is equivalent to saying that a graph of four
  living cells matches in the neighborhood (since a matching is
  injective on vertices). Hence this exactly corresponds to the
  following rule:

\[ \raisebox{-8ex}{
    \begin{tikzpicture}
      \graph[grow right=2cm] {
        1[as={$\set{1}$}] -!- 2[as={}] -!- 3[as={$\set{1}$}];
        4[as={}] -!- 1 -- 5[as={$x,\set{\textcolor{gray}{1}}$}] -- 3;
        7[as={$\set{1}$}] -- 5 -!- 8[as={}] -!-
        9[as={$\set{1}$}] -- 5;
      };
    \end{tikzpicture}}
  \ \ \Longrightarrow \ \ x,\set{0}
\]
\item We next consider the rule of birth: a dead cell becomes alive if
  it has exactly 3 living neighbours. The matching of a graph with
  just 3 living cells in the neighborhood is not precise enough
  here: we need to express the accuracy required for birth. However,
  knowing that a cell has 8 neighbours, we can reach this accuracy by
  stating that, among these 8, at least 3 should be alive and at least
  5 should be dead, which yields the rule:
\[ \raisebox{-8ex}{
    \begin{tikzpicture}
      \graph[grow right=2cm] {
        1[as={$\set{1}$}] -!- 2[as={$\set{1}$}] -!- 3[as={$\set{1}$}];
        4[as={$\set{0}$}] --
        5[as={$x,\set{\textcolor{gray}{0}}$}] -- 6[as={$\set{0}$}];
        1 -- 5 -- 3;
        7[as={$\set{0}$}] -- 5 -- 8[as={$\set{0}$}] -!-
        9[as={$\set{0}$}];
        2 -- 5 -- 9;
      };
    \end{tikzpicture}}
  \ \ \Longrightarrow \ \ x,\set{1}
\]
Of course, this works only for cells with exactly 8 neighbours, hence
not for borders or corners of a rectangular grid, see the discussion below.

\item We finally consider the rule of death by isolation: a living
  cell dies if it has at most 1 living neighbour. As above we have to
  assume that a cell has 8 neighbours in order to equivalently state
  that a cell dies if it has at least 7 dead neighbours, as expressed
  by the rule:
\[ \raisebox{-8ex}{
    \begin{tikzpicture}
      \graph[grow right=2cm] {
        1[as={$\set{0}$}] -!- 2[as={$\set{0}$}] -!- 3[as={$\set{0}$}];
        4[as={$\set{0}$}] -!- 1 -- 5[as={$x,\set{\textcolor{gray}{1}}$}] -- 2;
        4 -- 5 -- 3;
        7[as={$\set{0}$}] -- 5 -- 8[as={$\set{0}$}] -!-
        9[as={$\set{0}$}];
        5--9;
      };
    \end{tikzpicture}}
  \ \ \Longrightarrow \ \ x,\set{0}
\]
\end{itemize}

Hence we are able to express the game of life with the set $\R$
consisting of the 3 simple rules above. Of course, if we want our program
to work in the standard way on rectangular grids, we need to do more
than this. We would first need to assume a way of determining the
exact number of neighbours; one way to do this is to include a
constant in the labels, say $r$ for regular (8 neighbours), $b$ for
border (5 n.) and $c$ for corner (3 n.). We would then write rules for
the 3 kind of cells, but only for the rules of birth and death by
isolation. This yields a set of 7 rules; we leave it to the reader to
write them down.

But it may be the case that all cells are regular, which is easy to
achieve for instance by gluing the east and west borders of a
rectangular grid, as well as the north and south ones (there are many
other eccentric ways of ensuring this). For the sake of simplicity we
assume that this is the case of the input graph $G$.

It is easy to see that the set $\Matches{\R}{G}$ is regular: this is
due to the fact that for all matchings $\mm$ only the label of
$\mm(x)$ is modified, and that all matchings with the same $\mm(x)$
perform the same modification (two distinct rules cannot match on the
same center cell of $G$). This aslo means that the result of rewriting
a matching $\mm$ is the same as rewriting all members of the class
$\eqclass{\mm}{\eqaut}$, hence the four rewrite relations defined
above are identical.

However, using parallel rewriting modulo automorphisms may save a lot
of computing efforts compared to full parallel rewriting. For
instance, each time the birth rule matches $G$, with $x$ being matched
to a vertex $v$ of $G$, there must be $3!\times 5! = 720$ matchings from $x$ to
$v$. But they are all equivalent, hence only one of them is required
to compute rewriting modulo automorphisms.

The situation is more complex with the other rules. The rule of death
by overpopulation matches any vertex $v$ of $G$ that has $4\leq n\leq 8$
living cells among its 8 neighbours. The number of matchings from $x$
to $v$ is therefore the number of 4-combinations with repetitions
among $n$ elements, that is, $\frac{n!}{(n-4)!}$. Since the 4
neighbours of $x$ are symmetric in the rule, each matching has an
equivalence class of $4! = 24$ elements, hence the number of
equivalence classes of matchings from $x$ to $v$ is ${n\choose 4}$.
Hence rewriting modulo automorphisms may still require to use as
much as ${8\choose 4} = 70$ different matchings for this rule.

The rule of death by isolation matches any vertex $v$ of $G$ that has
$7\leq m\leq 8$ dead neighbours (or $0\leq n\leq 1$ living neighbours,
since $m=8-n$). The number of matchings from $x$ to $v$ is
$\frac{m!}{(m-7)!}$, and each has an equivalence class of $7!=5040$
elements, hence the number of
equivalence classes of matchings from $x$ to $v$ is $m \choose
7$, which can be as much as ${8\choose 7} = 8$.

\section{Conclusion}\label{sec-conclusion}

 We have defined six properties that deterministic parallel graph rewriting relations should normally meet (though some may be dropped in some contexts), defined four such relations and provided necessary and sufficient conditions for these to obey all six properties simultaneously. In order to obtain these results we have developed a set-theoretic framework for removing and grafting objects in a graph, in which relevant group-theoretic notions can be expressed in a natural way.  We have adopted a notion of graph as general as possible, with the caveat that rewritten graphs must be labelled by ground terms, in order to ensure finiteness of the set of matchings. A more realistic approach would obviously have to encompass other algebras, with ad-hoc restrictions to preserve finiteness, but the issues of regularity and automorphisms would still be central. There are certainly many different ways of ensuring regularity for classes of graphs and rules, and further work is required on this matter. We should also consider a possible implementation of these rewriting relations, which raises the question of using group-theoretic algorithms for efficiently computing the relation of parallel rewriting modulo automorphisms, and in particular of computing a generating set for the automorphism group of a rule.


\begin{thebibliography}{10}

\bibitem{BN98}
Franz Baader and Tobias Nipkow.
\newblock {\em Term Rewriting and All That}.
\newblock Cambridge University Press, 1998.

\bibitem{BVG87}
Hendrik~Pieter Barendregt, Marko C. J.~D. van Eekelen, John R.~W. Glauert,
  Richard Kennaway, Marinus~J. Plasmeijer, and M.~Ronan Sleep.
\newblock Term graph rewriting.
\newblock In {\em PARLE, Parallel Architectures and Languages Europe, Volume
  {II:} Parallel Languages}, pages 141--158, 1987.

\bibitem{BO93}
Ronald~V. Book and Friedrich Otto.
\newblock {\em String-Rewriting Systems}.
\newblock Texts and Monographs in Computer Science. Springer, 1993.

\bibitem{CorradiniMREHL97}
Andrea Corradini, Ugo Montanari, Francesca Rossi, Hartmut Ehrig, Reiko Heckel,
  and Michael L{\"{o}}we.
\newblock Algebraic approaches to graph transformation - part {I:} basic
  concepts and double pushout approach.
\newblock In {\em Handbook of Graph Grammars and Computing by Graph
  Transformations, Volume 1: Foundations}, pages 163--246, 1997.

\bibitem{DershowitzJ2018}
Nachum Dershowitz and Jean-Pierre Jouannaud.
\newblock Drags: A simple algebraic framework for graph rewriting.
\newblock In {\em TERMGRAPH 2018}, 2018.

\bibitem{Echahed08b}
Rachid Echahed.
\newblock Inductively sequential term-graph rewrite systems.
\newblock In {\em {ICGT} 2008}, volume 5214 of {\em LNCS}, pages 84--98.
  Springer, 2008.

\bibitem{EchahedM17}
Rachid Echahed and Aude Maignan.
\newblock Parallel graph rewriting with overlapping rules.
\newblock {\em CoRR}, abs/1701.06790, 2017.

\bibitem{EhrigEPT06}
Hartmut Ehrig, Karsten Ehrig, Ulrike Prange, and Gabriele Taentzer.
\newblock {\em Fundamentals of Algebraic Graph Transformation}.
\newblock Monographs in Theoretical Computer Science. An {EATCS} Series.
  Springer, 2006.

\bibitem{Lowe93-Parallel}
Hartmut Ehrig and Michael L{\"{o}}we.
\newblock Parallel and distributed derivations in the single-pushout approach.
\newblock {\em Theor. Comput. Sci.}, 109(1{\&}2):123--143, 1993.

\bibitem{EhrigM85}
Hartmut Ehrig and Bernd Mahr.
\newblock {\em Fundamentals of Algebraic Specification 1: Equations und Initial
  Semantics}, volume~6 of {\em {EATCS} Monographs on Theoretical Computer
  Science}.
\newblock Springer, 1985.

\bibitem{EhrigPS73}
Hartmut Ehrig, Michael Pfender, and Hans~J{\"{u}}rgen Schneider.
\newblock Graph-grammars: An algebraic approach.
\newblock In {\em 14th Annual Symposium on Switching and Automata Theory, Iowa
  City, Iowa, USA, October 15-17, 1973}, pages 167--180, 1973.

\bibitem{EngelfrietR97}
Joost Engelfriet and Grzegorz Rozenberg.
\newblock Node replacement graph grammars.
\newblock In Rozenberg \cite{handbook1}, pages 1--94.

\bibitem{Gardner70}
Martin Gardner.
\newblock Mathematical games -- the fantastic combinations of {J}ohn {C}onway's
  new solitaire game "life".
\newblock {\em Scientific American}, 223:120--123, October 1970.

\bibitem{cleanURL}
Software Technology~Research Group.
\newblock {\em The Clean Home Page}.
\newblock Radboud University, Nijmegen.

\bibitem{Hoffmann82}
Christoph~M. Hoffmann.
\newblock {\em Group-Theoretic Algorithms and Graph Isomorphism}, volume 136 of
  {\em Lecture Notes in Computer Science}.
\newblock Springer, 1982.

\bibitem{KniemeyerBHK07}
Ole Kniemeyer, G{\"{u}}nter Barczik, Reinhard Hemmerling, and Winfried Kurth.
\newblock Relational growth grammars - {A} parallel graph transformation
  approach with applications in biology and architecture.
\newblock In {\em Applications of Graph Transformations with Industrial
  Relevance, Third International Symposium, {AGTIVE} 2007, Kassel, Germany,
  October 10-12, 2007, Revised Selected and Invited Papers}, pages 152--167,
  2007.

\bibitem{KreowskiK11}
Hans{-}J{\"{o}}rg Kreowski and Sabine Kuske.
\newblock Graph multiset transformation: a new framework for massively parallel
  computation inspired by {DNA} computing.
\newblock {\em Natural Computing}, 10(2):961--986, 2011.

\bibitem{KreowskiKL18}
Hans{-}J{\"{o}}rg Kreowski, Sabine Kuske, and Aaron Lye.
\newblock A simple notion of parallel graph transformation and its
  perspectives.
\newblock In {\em Graph Transformation, Specifications, and Nets - In Memory of
  Hartmut Ehrig}, pages 61--82, 2018.

\bibitem{Plasmeijer93}
Rinus Plasmeijer and Marko~Van Eekelen.
\newblock {\em Functional Programming and Parallel Graph Rewriting}.
\newblock Addison-Wesley Longman Publishing Co., Inc., Boston, MA, USA, 1st
  edition, 1993.

\bibitem{handbook1}
Grzegorz Rozenberg, editor.
\newblock {\em Handbook of Graph Grammars and Computing by Graph
  Transformations, Volume 1: Foundations}. World Scientific, 1997.

\bibitem{Wolfram02}
Stephen Wolfram.
\newblock {\em A new kind of science}.
\newblock Wolfram-Media, 2002.

\end{thebibliography}

\end{document}